\DeclareSymbolFont{AMSb}{U}{msb}{m}{n}
\definecolor{darkblue}{rgb}{0,0,0.65}
\definecolor{darkred}{rgb}{0.85,0,0}
\renewcommand{\familydefault}{bch}
\newcommand{\weakto}{\xrightharpoonup{*}}
\DeclarePairedDelimiter{\abs}{\lvert}{\rvert}
\DeclarePairedDelimiter{\norm}{\lVert}{\rVert}
\DeclarePairedDelimiter{\bra}{(}{)}
\DeclarePairedDelimiter{\pra}{[}{]}
\DeclarePairedDelimiter{\set}{\{}{\}}
\DeclareMathAlphabet{\mathup}{OT1}{\familydefault}{m}{n}
\newcommand{\dx}[1]{\mathop{}\!\mathup{d} #1}
\newcommand{\pderiv}[3][]{\frac{\mathop{}\!\mathup{d}^{#1} #2}{\mathop{}\!\mathup{d} #3^{#1}}}
\newcommand{\eps}{\ensuremath{\varepsilon}}
\newcommand{\N}{\mathds{N}}
\newcommand{\R}{\mathds{R}}
\newcommand{\cA}{\ensuremath{\mathcal A}}
\newcommand{\cB}{\ensuremath{\mathcal B}}
\newcommand{\cC}{\ensuremath{\mathcal C}}
\newcommand{\cD}{\ensuremath{\mathcal D}}
\newcommand{\cF}{\ensuremath{\mathcal F}}
\newcommand{\cH}{\ensuremath{\mathcal H}}
\newcommand{\cK}{\ensuremath{\mathcal K}}
\newcommand{\cO}{\ensuremath{\mathcal O}}
\newcommand{\cP}{\ensuremath{\mathcal P}}
\newcommand{\cS}{\ensuremath{\mathcal S}}
\newcommand{\cX}{\ensuremath{\mathcal X}}
\newcommand{\cY}{\ensuremath{\mathcal Y}}
\DeclareMathOperator{\Boltz}{B}
\newcommand{\upd}{\ensuremath{\mathup d}}
\newcommand{\overbar}[1]{\mkern 1.5mu\overline{\mkern-1.5mu#1\mkern-1.5mu}\mkern 1.5mu}
\newtheorem{thm}{Theorem}[section]
\newtheorem{cor}[thm]{Corollary}
\newtheorem{lem}[thm]{Lemma}
\newtheorem{prop}[thm]{Proposition}
\theoremstyle{definition}
\newtheorem{defn}[thm]{Definition}
\newtheorem{ex}[thm]{Example}
\newtheorem{ass}[thm]{Assumption}
\theoremstyle{remark}
\newtheorem{rem}[thm]{Remark}
\numberwithin{equation}{section}
\begin{document}

\title[The exchange-driven growth model: basic properties and longtime behavior]{The exchange-driven growth model: \\
basic properties and longtime behavior}

\author{André Schlichting}
\address{Institut für Angewandte Mathematik, Universität Bonn} 
\email{schlichting@iam.uni-bonn.de}

\subjclass[2010]{%
Primary: 34A35; 
Secondary: 
34G20, 
37B25, 
37D35, 
82C05, 
82C26  
}

\keywords{%
convergence to equilibrium,
exchange-driven growth, 
entropy method,
mean-field equation,
zero-range process}

\date{\today}

\begin{abstract}
  The exchange-driven growth model describes a process in which pairs of clusters interact through the exchange of single monomers. The rate of exchange is given by an interaction kernel $K$ which depends on the size of the two interacting clusters. Well-posedness of the model is established for kernels growing at most linearly and arbitrary initial data. 
    
  The longtime behavior is established under a detailed balance condition on the kernel. The total mass density $\varrho$, determined by the initial data, acts as an order parameter, in which the system shows a phase transition. There is a critical value $\varrho_c\in (0,\infty]$ characterized by the rate kernel. For $\varrho \leq \varrho_c$, there exists a unique equilibrium state $\omega^\varrho$ and the solution converges strongly to $\omega^\varrho$. If $\varrho > \varrho_c$ the solution converges only weakly to $\omega^{\varrho_c}$. In particular, the excess $\varrho - \varrho_c$ gets lost due to the formation of larger and larger clusters. In this regard, the model behaves similarly to the Becker-Döring equation.

  The main ingredient for the longtime behavior is the free energy acting as Lyapunov function for the evolution. It is also the driving functional for a gradient flow structure of the system under the detailed balance condition.
\end{abstract}

\maketitle

\section{Introduction}

\subsection{Model}

The exchange-driven growth model describes a process in which pairs of clusters consisting of an integer number of monomers can grow or shrink only by the exchange of single monomers~\cite{NK2003}.
Although this process is not necessarily realized by chemical kinematics, it is convenient to be interpreted as a reaction network of the form
\begin{equation}\label{e:EG:ChemReact}
  X_{k-1} + X_{l} \xrightleftharpoons[K(k,l-1)]{K(l,k-1)} X_{k} + X_{l-1}\ , \qquad \text{for}\quad k, l \geq 1 \ . 
\end{equation}
The clusters of size $k\geq 1$ are denoted by $X_k$. Additionally, the variable~$X_0$ represents empty volume. Here, the kernel  $K(k,l-1)$ encodes the rate of the exchange of a single monomer from a cluster of size $k$ to a cluster of size $l-1$. Here and in the following the notation $k\geq 1$ means $k\in \N=\set{1,2,\dots}$ and $l\geq 0$ denotes $l\in \N_0 = \N \cup\set{0}$.

The concentrations of $X_k$ in~\eqref{e:EG:ChemReact} are denoted by $(c_k)_{k\geq 0}$ and satisfy for $k\geq 0$ the reaction rate equation formally obtained from~\eqref{e:EG:ChemReact} by mass-action kinetics
\begin{equation}\label{e:master}
\begin{aligned}
\dot c_k \ \ = \ \ \ &\sum_{l\geq 1} K(l,k-1)c_l c_{k-1} - \sum_{l\geq 1} K(k,l-1) c_{k} c_{l-1}  \\
 - &\sum_{l\geq 1} K(l,k)c_l c_k + \sum_{l\geq 1} K(k+1,l-1)c_{k+1} c_{l-1} \ , \qquad\text{ for } k\geq 0\ , 
\end{aligned}
\end{equation}
with $c_{-1}\equiv 0$ set for convenience.

The model~\eqref{e:master} is applied to social phenomena like migration~\cite{Ke2},
population dynamics~\cite{Leyvraz2} and wealth exchange~\cite{Isp}. Similar driving
mechanisms are found in diverse phenomena at
contrasting scales from microscopic level polymerization processes~\cite{DoiEdwards1988}, to cloud~\cite{Drake} and galaxy formation mechanisms at huge scales, as well as in statistical physics~\cite{Naim-book}.

Moreover, the model~\eqref{e:master} also arises as the mean-field limit of a class of interacting particle systems that include extensively studied models of nonequilibrium statistical physics like the zero-range processes \cite{Godrec,Stef3,Godrec2,Beltran,JG2017}, and more general misanthrope processes \cite{Waclaw,Stef4,Colm}.

This work extends and complements the basic mathematical analysis of~\cite{emre} in two ways. Firstly, it improves parts of the well-posedness results making them unconditioned on the initial data. Secondly, the new main result is the qualitative longtime behavior for kernels with sublinear growth. In addition, the aim is to stress the observation that this model is a natural generalization of the Becker-Döring model~\cite{BD35} (see Example~\ref{ex:BeckerDoering}) and resembles very much of its qualitative behavior.

The chemical reaction representation~\eqref{e:EG:ChemReact} gives rise to two conservation laws. Firstly, on each side of the reaction there are two clusters, or a cluster and empty volume, which leads to the conservation of the total number of clusters and empty volume. Due to each reaction performing an exchange of a single monomer, no mass is generated nor destroy, which gives the conservation of the total number of monomers. On the level of the densities $(c_k)$, these two conservation laws take the form of
\begin{equation}\label{e:ConverationLaws}
  M_0 = \sum_{k\geq 0} c_k \qquad\text{and}\qquad \varrho = \sum_{k \geq 1} k \, c_k \ . 
\end{equation}
After rigorously establishing both conservation laws (Corollary~\ref{cor:convervation_laws}), the zeroth moment can be fixed to be $M_0 = 1$. This allows to interpret \eqref{e:master} as the master equation for a nonlinear continuous-time birth-death chain on $\N_0$ with distribution~$\set{c_k(t)}_{k\geq 0}$. This chain is nonlinear since the birth and death rates
\begin{equation}\label{e:bdrates}
A_{k-1}[c]=\sum_{l\geq 1} K(l,k-1) \, c_l \quad\text{and}\quad B_k[c] = \sum_{l\geq 0} K(k,l) \, c_l  \quad\text{for } k\geq 1
\end{equation}
depend on the distribution $c(t)$, where in the following $B_0[c]=0$ is set. For the mathematical analysis this interpretation turns out to be useful. 

It will be convenient to introduce certain fluxes, which allow to rewrite the system~\eqref{e:master} in a compact form. 
There are the unidirectional fluxes per reaction in~\eqref{e:EG:ChemReact} given by $j_{l,k-1}[c] = K(l,k-1) \, c_l \, c_{k-1}$. Their difference $j_{l,k-1}[c] - j_{k,l-1}[c]$ is the net flux per reaction $(k-1,l)\to (k,l-1)$. The summation over all possible reaction partners $l$ leads to the definition of the net flux from a cluster of size $k-1$ to one of size $k$
\begin{equation}\label{e:def:net:flux}
  J_{k-1}[c] = \sum_{l\geq 1} \bra*{ j_{l,k-1}[c] - j_{k,l-1}[c]} = A_{k-1}[c] c_{k-1} - B_{k}[c] c_{k} \quad\text{ for } \quad k\geq 1 \ ,
\end{equation}
By having introduced the fluxes~\eqref{e:def:net:flux}, the rate equation~\eqref{e:master} shortens to
\begin{equation}\label{e:master:flux}
  \dot c_k = J_{k-1}[c] - J_{k}[c] \ , \qquad\qquad k\geq 0 \  ,
\end{equation}
with $J_{-1}[c]\equiv 0$ set for convenience for all $c$.
\begin{ex}[Becker-Döring type model]\label{ex:BeckerDoering}
As also observed in~\cite{emre}, a particular case of this model is given by setting 
\begin{equation}\label{ass:rates:BD}
  l\geq 0: K(1,l) = a_l \ , \quad k\geq 1: K(k,0) = b_k \quad \text{and for}\quad k\geq 2, l\geq 1: K(k,l)=0  \ . 
\end{equation}
This choice simplifies the chemical reaction network~\eqref{e:EG:ChemReact} to 
\begin{equation}\label{e:BD:ChemReact}
  X_{k-1} + X_{1} \xrightleftharpoons[b_k]{a_{k-1}} X_{k} + X_{0} \ , \qquad  k \geq 1 \ ,
\end{equation}
which corresponds to a model very close to the Becker-Döring model~\cite{BD35}. The main difference to the Becker-Döring model
\begin{equation}\label{e:classicBD:ChemReact}
  X_{k-1} + X_{1} \xrightleftharpoons[b_k]{a_{k-1}} X_{k} \ , \qquad  k \geq 2 \ .
\end{equation}
is the additional variable $X_0$ corresponding to empty volume. Taking into account~$X_0$ gives rise to the first conservation in~\eqref{e:ConverationLaws}. The consequence is the fragmentation flux $b_k \, c_0 \, c_k$ becoming nonlinear taking finite volume effects into account. 
\end{ex}

\subsection{Main results: Well-posedness and convergence to equilibrium}

In view of the two conservation laws~\eqref{e:ConverationLaws}, the equation \eqref{e:master} is studied in the normed vector space
\begin{equation}\label{e:def:X}
 \cX=\set{ c \in \ell^1(\N_0) \ :\  \norm{c} < \infty} \quad\text{with}\quad \norm{c} = \sum_{l\geq 0} (1+l) \, \abs{c_l} \ .
\end{equation}
Moreover, by Theorem~\ref{thm:stability} it is shown that solutions to~\eqref{e:master} are nonnegative for nonnegative initial data and remain in the cone of nonnegative densities
\begin{equation}\label{e:def:Xplus}
 \cX^+ =\set{ c \in \cX \ :\ c_l \geq 0}  \ . 
\end{equation}
Additionally, by the two conservation laws~\eqref{e:ConverationLaws} rigorously established in Corollary~\ref{cor:convervation_laws}, the total number density can be normalized to $1$ and hence the state space for the evolution is the subspace of probability densities on~$\N_0$ with the first moment fixed by the parameter $\varrho\geq 0$
\begin{equation}\label{e:def:states}
 \cP^\varrho = \set[\bigg]{ c\in \ell^1(\N_0) \ : \ c_l \geq 0 \ ,\  \sum_{l\geq 0} c_l = 1 \ ,\  \sum_{l\geq 1} l \, c_l = \varrho}\subset \cX^+ \ . 
\end{equation}
To establish relative compactness and tightness the union of the above spaces is used
\begin{equation}\label{e:def:ball}
  \cB^\varrho = \set[\bigg]{ c\in \ell^1(\N_0) \ :\  c_l \geq 0 \ ,\  \sum_{l\geq 0} c_l = 1 \ ,\  \sum_{l\geq 1} l \, c_l \leq \varrho } \subset \cX^+ \ . 
\end{equation}
\begin{ass}[Well-posedness]\label{ass:well-posedness}
  The kernel $K: \N\times \N_0 \to [0,\infty)$ is supposed to have at most linear growth by either assuming for some $C_K \in (0,\infty)$
   \begin{equation}\label{e:ass:K1}
   0 \leq K(k,l-1) \leq C_K \, k\, l \qquad\text{ for } k,l \geq 1 \ . \tag{K$_1$}
  \end{equation}
  or the stronger assumption for all $k,l\geq 1$
  \begin{equation}\label{e:ass:K2}
    \abs{K(l,k)-K(l,k-1)} \leq C_K \, l  \quad\text{and}\quad \abs{K(l+1,k-1)-K(l,k-1)}\leq C_K \, k  \ .  \tag{K$_2$}
  \end{equation}
\end{ass}
The analysis of well-posedness is restricted to the case of kernels growing at most linearly for two reasons. Firstly, the well-posedness theory for arbitrary initial data in $\cP^\varrho$ is almost complete, except for a small gap between the Assumption~\eqref{e:ass:K1} for existence and the slightly stronger one~\eqref{e:ass:K2} for uniqueness. Secondly, the longtime behavior is still very interesting, since the system can exhibit a phase transition related to the ergodic behavior of solutions shown in Theorem~\ref{thm:longtime} below.
Cases with faster than linear growing kernels are treated in~\cite{emre} and examples with gelation are found. 
\begin{thm}[Well-posedness]\label{thm:well-posed}
  Suppose Assumption~\eqref{e:ass:K1} holds. Then for any $\varrho>0$ and $\bar c\in \cP^\varrho$ there exists a solution $\bra[\big]{c(t)}_{t\geq 0}$ to~\eqref{e:master} with initial datum $c(0)=\bar c$ in the sense of Definition~\ref{def:sol}. If also~\eqref{e:ass:K2} holds, this solution is unique. Moreover, in the latter case, the solutions constitute a semigroup on $\cP^\varrho$ (Definition~\ref{def:flows}).
\end{thm}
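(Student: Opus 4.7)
The plan is to follow the standard truncation-and-compactness scheme familiar from the Ball-Carr-Penrose analysis of the Becker-D\"oring system. For each $N\in\N$ I introduce the reflecting truncation obtained by setting the flux $J_N$ in \eqref{e:master:flux} to zero, which yields a finite-dimensional polynomial ODE on the simplex; Picard-Lindel\"of produces a unique global solution $c^N$. The reflecting boundary preserves both conservation laws in \eqref{e:ConverationLaws} (which can be verified directly by Abel summation), so $\sum_k c^N_k(t)=1$ and $\sum_k k\,c^N_k(t)=\varrho$ for all $t\ge 0$, and nonnegativity follows from the birth-death structure of \eqref{e:master:flux}: whenever $c^N_k(t)$ touches zero from above, $\dot c^N_k=J^N_{k-1}[c^N]\ge 0$.

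Passing to the limit $N\to\infty$ is the main step. Under \eqref{e:ass:K1} together with the uniform first-moment bound, one has $A_{k-1}[c^N]\le C_K\,k\,\varrho$ and $B_k[c^N]\le C_K\,k\,(1+\varrho)$, so $\abs{\dot c^N_k(t)}\le C(k,\varrho)$ uniformly in $N$ and $t$. A diagonal Arzel\`a-Ascoli extraction produces a subsequence converging locally uniformly on $[0,\infty)$ in each coordinate to some $c(t)\in\cB^\varrho$. To identify the limit as a solution I split $A_{k-1}[c^N]=\sum_{l\ge 1}K(l,k-1)c^N_l$ at a cutoff $L$: the low part converges by coordinatewise convergence, while the tail is dominated by $C_K\,k\sum_{l>L}l\,c^N_l$ and is made uniformly small by the first-moment bound; the same argument handles $B_k[c^N]$. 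The delicate point is to guarantee that no mass escapes to infinity, so that the limit in fact lies in $\cP^\varrho$; here the exact (not merely bounded) conservation of the first moment at the truncated level is decisive, and can be propagated to the limit by first treating compactly supported initial data via a de la Vall\'ee-Poussin argument and then using continuity in the initial datum.

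For uniqueness under the stronger assumption \eqref{e:ass:K2}, I would compare two solutions $c,\tilde c$ starting from the same datum via the weighted distance $D(t)=\sum_{k\ge 0}(1+k)\abs{c_k(t)-\tilde c_k(t)}$. Writing $\dot c_k-\dot{\tilde c}_k$ in flux form and expanding the bilinear differences $K(l,k-1)(c_l c_{k-1}-\tilde c_l\tilde c_{k-1})$ via an Abel-type rearrangement, telescoping produces the increments $K(l,k)-K(l,k-1)$ and $K(l+1,k-1)-K(l,k-1)$, which are bounded by $C_K\,l$ and $C_K\,k$ respectively by \eqref{e:ass:K2}; combined with the conserved first moment this yields $\dot D(t)\le C(\varrho)D(t)$, and Gronwall forces $D\equiv 0$. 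The semigroup property is then immediate: $S_s(\bar c)\in\cP^\varrho$ by conservation, so $t\mapsto S_t(S_s(\bar c))$ and $t\mapsto S_{t+s}(\bar c)$ both solve \eqref{e:master:flux} from the same initial datum and coincide by uniqueness. The main obstacle is twofold: ensuring non-escape of the first moment in the $N\to\infty$ limit (since $c\mapsto\sum_k k c_k$ is only lower semicontinuous under pointwise convergence), and performing the Abel summation of the flux-difference expansion in a form that exploits both halves of \eqref{e:ass:K2}, which is precisely why the slightly stronger hypothesis is required to close the Gronwall estimate.
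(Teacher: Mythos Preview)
Your existence argument is essentially the paper's: reflecting truncation, conservation laws for the finite system, Arzel\`a--Ascoli, then pass to the limit. One imprecision: the tail $\sum_{l>L} l\, c^N_l$ is \emph{not} made small by the first-moment bound alone (that only gives $\le\varrho$); you need uniform integrability in~$N$. The paper obtains this by applying de la Vall\'ee--Poussin directly to the given $\bar c\in\cP^\varrho$ to produce a superlinear sequence $(g_k)$ with $\sum_k g_k\bar c_k<\infty$ and $(k+1)(g_{k+1}-g_k)\le C_g g_k$; a Gronwall argument on the truncated system then yields $\sum_k g_k c^N_k(t)\le \bar C_g e^{Ct}$ uniformly in $N$, which furnishes both the tail control needed in $A_{k-1}[c^N]$, $B_k[c^N]$ and the preservation of the first moment in the limit. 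Your two-step plan (compactly supported data first, then continuity in the datum) is unnecessary and potentially circular, since continuous dependence is itself part of the semigroup statement you have yet to prove.

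For uniqueness there is a genuine gap in your choice of norm. With $D(t)=\sum_{k\ge0}(1+k)\lvert c_k-\tilde c_k\rvert$ the Gronwall estimate does not close under \eqref{e:ass:K2} and a first moment alone. Writing $s_k=\operatorname{sgn}(e_k)$ and summing $(1+k)s_k(\Delta J_{k-1}-\Delta J_k)$ by parts in $k$ leaves a term $\sum_k(1+k)(s_{k+1}-s_k)\Delta J_k$; at sign changes this carries the full weight $(1+k)$, and contributions such as $(1+k)\lvert A_k[e]\rvert\,\tilde c_k\le C_K(1+k)^2 D(t)\,\tilde c_k$ then force a second moment on $\tilde c$, which is not available for an arbitrary solution. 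The paper avoids this via the Lauren\c{c}ot--Mischler device: set $E_k=\sum_{j\ge k}(c_j-\tilde c_j)$, so that $\dot E_k=J_{k-1}[c]-J_{k-1}[\tilde c]$ has no discrete divergence, and work with the \emph{unweighted} quantity $\sum_{k\ge1}\lvert E_k\rvert$. The Abel rearrangement you describe---producing the increments $K(l,k)-K(l,k-1)$ and $K(l+1,k-1)-K(l,k-1)$---is exactly what the paper does, but via $e_l=E_l-E_{l+1}$ inside $A_{k-1}[e]$, $B_k[e]$ and via the telescoped differences $A_k[\tilde c]-A_{k-1}[\tilde c]$, $B_{k-1}[\tilde c]-B_k[\tilde c]$ in the outer sum; both are controlled by $C_K(1+\varrho)\sum_l\lvert E_l\rvert$, and Gronwall closes. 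Since $e_k=E_k-E_{k+1}$, the conclusion $E\equiv0$ gives $c=\tilde c$.
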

The convergence to equilibrium is proven under a detailed balance condition. This case is already interesting, since it shows a phase transition in the order parameter $\varrho$. The existence of detailed balance states will turn out to be equivalent to some additional assumption on the rates. This assumption~\eqref{e:ass:BDA} below was already obtained for the stochastic particle system in~\cite[(5.3)]{CGR18} and used to show that stationary states are of product form. 

Besides the detailed balance condition, more information on the kernel $K(k,l-1)$ is needed, especially on asymptotic growth and regularity properties for $k,l$ large. Moreover, the proof of the relative compactness of solutions to~\eqref{e:master} in $\cP^\varrho$ is restricted to (strictly) sublinear growth rates, since only in this case the nonlinear birth and death rates~\eqref{e:bdrates} are controlled by a tightness argument.
\begin{ass}[Longtime behavior]\label{ass:BDA}
The rate kernel $K: \N \times \N_0 \to [0,\infty)$ satisfies the \emph{Becker-Döring assumption}, that is for all $k,l\geq 1$ it holds $K(k,l-1)>0$ and 
\begin{equation}\label{e:ass:BDA}
 \frac{K(k,l-1)}{K(l,k-1)} = \frac{K(k,0) \, K(1,l-1)}{K(l,0) \, K(1,k-1)}  \ . \tag{BDA} 
\end{equation}
The kernel $K$ satisfies
\begin{equation}\label{e:ass:Kc}
  \lim_{k\to \infty} \frac{K(k,0)}{K(1,k-1)} = \phi_c \in (0,\infty] \ . \tag{K$_c$}
\end{equation}
The kernel $K$ satisfies~\eqref{e:ass:K2} and the following continuity at infinity
\begin{equation}\label{e:ass:K3}
  \lim_{k \to \infty} \frac{K(l,k)}{K(l,k-1)} = 1 \quad\text{and}\quad \lim_{k \to \infty} \frac{K(k,l-1)}{K(k-1,l-1)} = 1 \quad\text{uniformly in $l\geq 1$} \ .  \tag{K$_3$} 
\end{equation}
Moreover, for three sublinear increasing sequences $(a_k)_{k\geq 0}$, $(b_k)_{k\geq 1}$ and $(d_k)_{k\geq 1}$ there exists a constant $C_K\geq 1$ such that for all $k,l\geq 1$ it holds
\begin{equation}\label{e:ass:K4}
  C_K^{-1} \, a_{l-1} \leq \abs{K(k,l-1)} \leq C_K \, d_k \, a_{l-1}  \quad\text{and}\quad C_K^{-1} \, b_k \leq \abs{K(k,l-1)} \leq C_K \, b_k \, l \ ,  \tag{K$_4$}
\end{equation}
Hereby, a sequence $(d_k)_{k \geq 1}$ is called sublinear, if $\lim_{k \to \infty} \frac{d_k}{k} = 0$.
\end{ass}
\begin{rem}
  For the majority of the proofs, it seems possible to relax the above strict positivity assumption. That is $K(k,0)>0$ and $K(1,k-1)>0$ for $k\geq 1$ and~\eqref{e:ass:BDA} holds only on the support of $K$ implying that with $K(k,l-1)>0$ also $K(l,k-1)>0$. 
  
  The presentation is restricted to the case of positive rates, but the discussion below applies with minor obvious changes to Example~\ref{ex:BeckerDoering} satisfying~\eqref{e:ass:BDA} in the above sense.
  
  The Assumption~\ref{ass:BDA} includes~\eqref{e:ass:K2}, which by Theorem~\ref{thm:well-posed} provides a unique global solution to~\eqref{e:master} for which the longtime behavior is established below.
\end{rem}
\begin{ex}
  A family of kernels satisfying Assumption~\eqref{e:ass:BDA} is given by the \emph{modulated separable} kernel 
   \begin{equation}\label{e:kernel:separable:modulated}
     K(k,l-1) = b_k a_{l-1} S(k,l-1) \ , \qquad\text{ for } k,l\geq 1 \ ,
   \end{equation} 
  where $S(k,l-1)$ is positive and symmetric $S(k,l-1) = S(l,k-1) > 0$ for $k,l\geq 1$. A particular family of kernels, called \emph{separable} kernels, is obtained for $S(\cdot,\cdot)\equiv 1$.
  
  Many important mean-field limits of misanthrope-type stochastic particle systems~\cite{Cocozza-Thivent1985} have rate kernels of the following general form
  \[
     K(k,l-1) = k^{\alpha}\, \bra*{ a + \frac{q}{k^\gamma}} \, \bra*{ l^\delta + d} \, s_{k+l} \ , \qquad\text{ for } k,l\geq 1 \ , 
  \]
  where $\delta \in [0,1)$, $\alpha \in [\delta,1)$, $a>0$, $q> 0$, $\gamma \geq 0$,  and $\delta\, d > 0$. Moreover, $\set*{s_{r}}_{r\geq 1}$ satisfied $0 < s_* \leq s_r \leq s^* < \infty$ for all $r\geq 1$ and $s_r \to \bar s$ as $s\to \infty$. These family of kernels is compatible with the modulated separable kernel~\eqref{e:kernel:separable:modulated} and satisfies Assumption~\ref{ass:BDA} for a suitable range of parameters. First, $\phi_c$ in~\eqref{e:ass:Kc} is given by
  \begin{equation*}
    \frac{K(k,0)}{K(1,k-1)} = \frac{k^\alpha\bra*{ a + q k^{-\gamma}}(d+1) s_{k+1}}{(a+q) (k^\delta+d) s_{k+1}} \xrightarrow{k\to \infty} \phi_c= 
    \begin{cases}
      0 , & \alpha < \delta \\
      +\infty , & \alpha > \delta  \\
      \frac{a(d+1)}{a+q} & \alpha=\delta >0 , \gamma>0\\
      d+1 , & \alpha=\delta > 0 , \gamma= 0  \\
      \frac{a}{a+q} , & \alpha= \delta = 0, \gamma>0  \\
      1 , & \alpha= \gamma = \delta= 0  
    \end{cases} \ .
  \end{equation*}
  Hence, $\alpha\geq \delta$ is necessary for~\eqref{e:ass:Kc} with $\phi_c>0$. It is easy to check, that also~\eqref{e:ass:K2} and~\eqref{e:ass:K3} is satisfied, since $K$ is modeled as a rational function. Moreover, estimates in~\eqref{e:ass:K4} follow by choosing $a_{l-1} = l^\delta + d$, $b_k=k^\alpha(a+ q k^{-\gamma})$ and $d_k = k^{\alpha}$, where these sequences are sublinear under the present assumptions. 
\end{ex}
The Assumption~\eqref{e:ass:BDA} is called the \emph{Becker-Döring assumption} because, instead of a direct exchange of a single monomer from an $l$-cluster to a $(k-1)$-cluster, the jump is achieved through a jump to empty volume. This is visualized by the following network, where two intermediate reactions involving the monomers $X_1$ and empty volume $X_0$ with the other occurring rates in~\eqref{e:ass:BDA} are added
\begin{equation}\label{e:BDA:ChemReact}\raisetag{16pt}
\begin{aligned}
\schemestart
  $X_{k-1} + X_{l-1} + X_{1}$
  \arrow(.165--){<=>[$\scriptstyle K(l,0)$][$\scriptstyle K(1,l-1)$]}[135]
  $X_{k-1} + X_{l} + X_0$
  \arrow{<=>[$\scriptstyle K(l,k-1)$][$\scriptstyle K(k,l-1)$]}
  $X_{k}+X_{l-1}+X_0$
  \arrow{<=>[$\scriptstyle K(1,k-1)$][$\scriptstyle K(k,0)$]}[225]
\schemestop
\end{aligned}
\end{equation}
From the chemical network representation~\eqref{e:BDA:ChemReact}, the Assumption \eqref{e:ass:BDA} rewritten in the form
\[
   K(k,l-1) \, K(1,k-1) \, K(l,0) = K(l,k-1) \, K(1,l-1) \, K(k,0) 
\]
can be viewed as a curl-free property of the rate kernel on the reaction graph. 

For this reason it is not surprising that under Assumption~\eqref{e:ass:BDA}, there exists a chemical potential $(Q_k)_{k\geq 0}$ defined by
\begin{equation}\label{def:Q}
  Q_0 = 1 \qquad\text{and}\qquad Q_l = \prod_{k=1}^{l} \frac{K(1,k-1)}{K(k,0)} \ .
\end{equation}
Note that Assumption~\eqref{e:ass:Kc} implies that
\begin{equation}\label{e:Qk:phi_c}
  \lim_{k\to \infty} Q_k^{1/k} = \phi_c^{-1} \qquad\text{with the convention } \phi_c^{-1} = 0 \text{ when } \phi_c = \infty \ . 
\end{equation}
Thanks to~\eqref{e:ass:BDA}, the chemical potential $(Q_k)_{k\geq 0}$ satisfies the detailed balance condition
\begin{equation}\label{e:Q:DBC}
  K(k,l-1) \, Q_k \, Q_{l-1} = K(l,k-1) \, Q_{l} \, Q_{k-1} \qquad\text{ for } k,l\geq 1 \tag{DBC}
\end{equation}
and it is easily verified that~\eqref{e:Q:DBC} is actually equivalent to Assumption~\eqref{e:ass:BDA}. The two conversation laws~\eqref{e:ConverationLaws} are also encoded in~\eqref{e:Q:DBC}, since $(Z^{-1} \phi^k Q_k)_{k\geq 0}$ satisfies~\eqref{e:Q:DBC} for any $Z,\phi>0$. 

This observation is used to search for equilibrium states in $\cP^\varrho$ with $\varrho>0$. The Assumption~\eqref{e:ass:Kc} allows to define the partition sum $Z(\phi) \in [0,\infty)$ for $\phi\in [0,\phi_c)$ by
\begin{equation}\label{e:def:Z}
 Z(\phi) = \sum_{l\geq 0} \phi^l Q_l \in [0, \infty)\ .
\end{equation}
For $\phi \in [0,\phi_c)$, the normalized equilibrium states $\omega(\phi)$ are given by
\begin{equation}\label{e:def:omega}
 \omega_{l}(\phi) = Z(\phi)^{-1} \, \phi^l \, Q_l \qquad\text{for } l\geq 0 \ .
\end{equation}
The critical equilibrium density $\varrho_c \in (0,\infty]$ is defined by
\begin{equation}\label{def:varrhos}
  \varrho_c = \limsup_{\phi \uparrow \phi_c} Z(\phi)^{-1} \sum_{l\geq 1} l \phi^l Q_l \ .
\end{equation}
For $\varrho < \infty$ with $0 \leq \varrho \leq \varrho_c$, there exists a unique $\phi=\phi(\varrho)\in [0,\phi_c]$ such that 
\begin{equation}\label{e:def:phi:varrho}
  \sum_{l\geq 1} l \omega_l(\phi) = Z(\phi)^{-1} \sum_{l\geq 1} l \phi^l Q_l = \varrho \ .
\end{equation}
Indeed, the Jensen inequality implies the strict monotonicity property for $\phi\in (0,\phi_c)$
\begin{align}
 \phi \pderiv{}{\phi} \frac{\sum_{l\geq 1} l \phi^l Q_l}{\sum_{l\geq 0} \phi^l Q_l} &= \frac{\sum_{l\geq 1} l^2 \phi^l Q_l}{\sum_{l\geq 0} \phi^l Q_l} - \bra*{\frac{\sum_{l\geq 1} l \phi^l Q_l}{\sum_{l\geq 0}\phi^l Q_l} }^2 \notag \\
 &= \sum_{l\geq 1} l^2 \omega_l(\phi) - \bra*{\sum_{l\geq 1} l \omega_l(\phi)}^2 > 0 \ . \label{e:phi:strict:monotone}
\end{align}
Moreover, in the case $\varrho_c<\infty$, it follows also $\phi_c< \infty$. Indeed, suppose $\phi_c=\infty$ for $\varrho_c<\infty$, then for any $L\geq 1$ and any $\phi\geq 1$ by using also the above established monotonicity, it follows
\begin{align*}
  \varrho_c &\geq \frac{\sum_{l\geq 1} l \, \phi^l Q_l}{\sum_{l\geq 0} \phi^l Q_l} \geq  \frac{\sum_{l\geq L} l \, \phi^l Q_l}{\sum_{l = 0}^{L-1} \phi^l Q_l+ \sum_{l\geq L} \phi^l Q_l}
  \geq 
  \frac{L}{\frac{\sum_{l=0}^{L-1} \phi^l Q_l}{\sum_{l\geq L}^\infty l \, \phi^l Q_l} + 1} \\
  &\geq \frac{L}{\frac{(L-1) \phi^{L-1} \sup_{1\leq l \leq L} Q_l}{\phi^L Q_L} +1 } 
  \geq \frac{L}{\frac{(L-1) \sup_{1\leq l \leq L} Q_l}{\phi Q_L} + 1} \to L \qquad\text{ as } \phi \to \infty \ . 
\end{align*}
But this is a contradiction, since $L\geq 1$ was arbitrary. So $\varrho_c<\infty$ implies $\phi_c<\infty$. The observation that $\varrho_c < \infty$ in addition implies $Z(\phi_c)<\infty$ is based on the identity
\[
  \log Z(\phi) = \log Z(0)+ \int_0^{\phi} \pderiv{}{\phi} \log Z(\phi) \dx{\phi} = \log Z(0) + \int_0^\phi \frac{\sum_{l\geq 1} l \, \phi^l Q_l}{\sum_{l\geq 0} \phi^l Q_l} \dx{\phi} \ ,
\]
which implies with $\varrho_c <\infty$ and $Z(0)=1$ the bound $Z(\phi) \leq e^{\varrho_c \phi}$. Togther with $\phi_c<\infty$ follows that 
\[
  Z(\phi_c) = \limsup_{\phi\uparrow \phi_c} Z(\phi) \leq e^{\phi_c\, \varrho_c } <\infty
\]
and hence $\omega^{\varrho_c}= \omega(\phi(\varrho_c))$ is well-defined. Hence, the set of all normalized equilibria is given by 
\begin{equation}\label{def:omega}
 \set[\big]{ \omega^\varrho = \omega(\phi(\varrho)) \ : \ \varrho<\infty, 0\leq \varrho \leq \varrho_c  } \ .
\end{equation}
The main tool of the proof of convergence to equilibrium is the free energy functional of the form
\begin{equation}\label{e:def:Lyapunov}
 \cF[c] = \sum_{k\geq 0} c_k \log \frac{c_k}{Q_k} \ ,
\end{equation}
which turns out to be a Lyapunov function for the evolution~\eqref{e:master} and the main tool in proving the following theorem.
\begin{thm}[Convergence to equilibrium]\label{thm:longtime}
  Suppose Assumption~\ref{ass:BDA} with $\phi_c\in (0,\infty)$~in~\eqref{e:ass:Kc} holds. Then for any $\varrho_0\in [0,\infty) $ and any $\bar c\in \cP^{\varrho_0}$ the unique solution~$c$ of~\eqref{e:master} with $c(0) = \bar c$ satisfies:
  \begin{enumerate}
   \item If $\varrho \leq \varrho_c$, it holds $c(t) \to \omega^{\varrho}$ strongly in $\cX$ as $t \to \infty$ and 
   \[
     \lim_{t\to \infty} \cF[c(t)] = \cF[\omega^{\varrho}] \ . 
   \]\item If $\varrho > \varrho_c$, it holds $c(t) \xrightharpoonup{*} \omega^{\varrho_c}$ as $t\to \infty$ and 
   \[ 
     \lim_{t \to \infty} \cF[c(t)] = \cF[\omega^{\varrho_c}] + \bra*{ \varrho - \varrho_c} \log \phi_c \ . 
   \]
  \end{enumerate}
\end{thm}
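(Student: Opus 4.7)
The plan is to use $\cF$ as a Lyapunov functional and combine an entropy--dissipation identity with weak-* compactness in $\cB^{\varrho}$ and the detailed balance~\eqref{e:Q:DBC} to identify longtime limits. First I would establish the entropy--dissipation identity: using~\eqref{e:master:flux} together with~\eqref{e:Q:DBC}, a direct calculation should give $\frac{d}{dt}\cF[c(t)] = -\cD[c(t)]$, where
\[
  \cD[c] = \tfrac{1}{2}\sum_{k,l\geq 1}\bra*{K(k,l-1)c_k c_{l-1} - K(l,k-1)c_l c_{k-1}}\log\frac{K(k,l-1) c_k c_{l-1}}{K(l,k-1) c_l c_{k-1}} \geq 0,
\]
nonnegativity coming from $(a-b)\log(a/b)\geq 0$. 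The variational identity $\cF[c] = \cF[c\mid\omega(\phi)] + \varrho\log\phi - \log Z(\phi)$, applied at $\phi=\phi_c$ (respectively $\phi(\varrho)$ in the subcritical case), yields an $\cF$-lower bound and hence $\int_0^\infty\cD[c(s)]\,ds<\infty$, producing some $t_n\to\infty$ with $\cD[c(t_n)]\to 0$.

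Second, I would extract and identify a subsequential limit. Weak-* compactness of $\cB^\varrho$ together with a diagonal argument produce $c_\infty\in\cB^\varrho$ with $c_l(t_n)\to c_{\infty,l}$ pointwise for every $l\geq 0$; lower semicontinuity of $\cD$ (nonnegative summands, each continuous in its entries) forces $\cD[c_\infty]=0$. By positivity of $K$ together with~\eqref{e:Q:DBC}, this implies $c_{\infty,k}c_{\infty,l-1}/(Q_k Q_{l-1}) = c_{\infty,l}c_{\infty,k-1}/(Q_l Q_{k-1})$ for all $k,l\geq 1$, hence $c_{\infty,l}=C\,\phi^l Q_l$ for some $\phi\in[0,\phi_c]$ and $C\geq 0$ (summability rules out $\phi>\phi_c$). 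The constants are pinned down via the $\cF$-decomposition: in the subcritical case $\varrho\leq\varrho_c$, moment conservation at the limit (addressed in the next step) gives $c_\infty=\omega^\varrho$; in the supercritical case $\varrho>\varrho_c$, the excess first moment forces $\phi=\phi_c$ and $C=Z(\phi_c)^{-1}$, so $c_\infty=\omega^{\varrho_c}$. Uniqueness of the limit upgrades subsequential convergence to full convergence.

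Third, I would promote the convergence and compute $\lim\cF$. In the subcritical regime, pointwise convergence plus preservation of the first moment gives strong $\cX$-convergence by a Scheffe-type argument, and $\cF[c(t)]\to\cF[\omega^\varrho]$ follows by the monotone decrease of $\cF$ combined with lower semicontinuity. In the supercritical regime only weak-* convergence holds, since the excess $\varrho-\varrho_c$ is carried off by clusters of unbounded size; the entropy formula exploits $\log Q_l = -l\log\phi_c + o(l)$ from~\eqref{e:Qk:phi_c}. Concretely, one takes $\phi\uparrow\phi_c$ in the variational identity to write
\[
  \cF[c(t)] - \bra*{\cF[\omega^{\varrho_c}] + (\varrho-\varrho_c)\log\phi_c} = \lim_{\phi\uparrow\phi_c}\cF[c(t)\mid\omega(\phi)],
\]
and shows that the right-hand side tends to $0$ as $t\to\infty$ via a head--tail split: the head on $\set*{0,\dots,N}$ vanishes by strong convergence there, while the tail vanishes as $N\to\infty$ by dominated convergence using the summability of $\cF[\omega^{\varrho_c}]$.

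The main obstacle is the passage to the liminf in $\cD$ and the tightness required for moment conservation in the subcritical case: both rely critically on the sublinear control provided by~\eqref{e:ass:K4} together with the continuity estimate~\eqref{e:ass:K3}, without which the birth and death rates~\eqref{e:bdrates} could concentrate along escaping tails and prevent the clean implication $\cD[c_\infty]=0\Rightarrow c_\infty=\omega^{\min(\varrho,\varrho_c)}$. A second delicate point is the supercritical entropy limit, where a nontrivial cancellation between the escaping mass and the subexponential growth of $\log Q_l$ produces exactly the $(\varrho-\varrho_c)\log\phi_c$ correction and no more.
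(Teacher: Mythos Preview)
Your overall framework (entropy method, weak-$*$ compactness in $\cB^\varrho$, identifying zero-dissipation states as equilibria) is the right one and matches the paper's architecture. However, two steps as written do not go through.

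\textbf{Tightness in the subcritical case.} You claim ``pointwise convergence plus preservation of the first moment gives strong $\cX$-convergence,'' but preservation of the first moment in the limit is exactly what must be proved and you never prove it. Acknowledging at the end that this ``relies critically on~\eqref{e:ass:K4} and~\eqref{e:ass:K3}'' is not an argument. The paper's proof (Propositions~\ref{prop:tight} and~\ref{prop:weak_to_strong}) goes through a genuinely different mechanism: one first uses the weak-$*$ convergence $c(t)\weakto\omega^\varrho$ (obtained from LaSalle) together with~\eqref{e:ass:K3},~\eqref{e:ass:K4} to show that the nonlinear rates eventually satisfy $A_l[c(t)]/B_l[c(t)]<1$ and $A_l[c(t)]/B_{l+1}[c(t)]<\phi K(1,l)/K(l+1,0)$ for some $\phi<\phi_c$, uniformly for large $l$ and $t\geq t_0$. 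This is then fed into a Ball--Carr type maximum-principle argument for the tail variable $x_l(t)=\sum_{k\geq l}k\,c_k(t)$, producing a comparison $x_l(t)\leq C\lambda_l$ with $\lambda_l\to 0$. Nothing in your sketch approximates this step, and a Scheff\'e argument cannot substitute for it.

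\textbf{Uniqueness of the limit.} Your route via ``some $t_n\to\infty$ with $\cD[c(t_n)]\to 0$'' yields only that \emph{some} subsequential weak-$*$ limit is an equilibrium $\omega^{\varrho'}$; it does not rule out different subsequences hitting different $\varrho'$. In particular, the claim that ``the excess first moment forces $\phi=\phi_c$'' in the supercritical case is not correct: under weak-$*$ convergence the first moment is only lower semicontinuous, so a priori any $\varrho'\in[0,\varrho_c]$ is possible. The paper avoids this by applying LaSalle's invariance principle (Proposition~\ref{prop:invariance_principle}) to the weak-$*$ semigroup on $(\cB^\varrho,\upd)$, using that $\cH[\cdot\mid\omega^{\varrho_c}]$ is weak-$*$ continuous (Proposition~\ref{prop:weak:cont:free:energy}). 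This forces the Lyapunov value $\cH[\omega^{\varrho'}\mid\omega^{\varrho_c}]$ to be constant on the $\omega$-limit set, and the strict monotonicity of $\varrho'\mapsto\cH[\omega^{\varrho'}\mid\omega^{\varrho_c}]$ then pins down a single $\varrho'$. That $\varrho'=\varrho_c$ when $\varrho_0>\varrho_c$ follows \emph{a posteriori}: if $\varrho'<\varrho_c$, Proposition~\ref{prop:weak_to_strong} would upgrade the convergence to strong, forcing $\varrho'=\varrho_0>\varrho_c$, a contradiction. Your head--tail split for the free energy limit is also unnecessary: once weak-$*$ convergence to $\omega^{\varrho_c}$ is known, the identity~\eqref{e:RelEnt:F} together with the weak-$*$ continuity of $\cH[\cdot\mid\omega^{\varrho_c}]$ gives the value $\cF[\omega^{\varrho_c}]+(\varrho-\varrho_c)\log\phi_c$ directly.
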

\begin{rem}
  Under the extra condition $\cF[\bar c]< \infty$ on the initial data, the strong convergence is also proven in the case $\phi_c=\infty$ in Corollary~\ref{cor:longtime:phc:infty}. Since $\cF$ is finite over $\cP^{\varrho}$ for $\phi_c\in (0,\infty)$, this condition is not needed in Theorem~\ref{thm:longtime}.
  
  With the characterization of weak$^*$ convergence in Proposition~\ref{prop:X:properties}, the statement in \emph{(2)} becomes just $c_k(t) \to \omega_k^{\varrho_c}$ for all $k\geq 0$. In particular, the excess mass $\varrho_0 - \varrho_c$ is lost in the limit $t\to \infty$. 
\end{rem}

\subsection{Formal gradient flow structure}\label{s:GF}

The free energy functional~$\cF$ from~\eqref{e:def:Lyapunov} is not only a Lyapunov functional for the system~\eqref{e:master}, but also the driving functional behind a gradient flow structure of the equation. This observation goes back to~\cite{Mielke2011a} for finite chemical reaction networks under detailed balance condition and to~\cite{Maas2011} in the setting of reversible Markov chains on finite state spaces. The key observation is that Assumption~\eqref{e:ass:BDA} or equivalently~\eqref{e:Q:DBC} is sufficient to define a suitable metric under which~\eqref{e:master} becomes the gradient flow of the free energy $\cF$.

The Assumption~\eqref{e:ass:BDA} makes the evolution to some extent symmetric, which can be seen by using~\eqref{e:Q:DBC} to define the symmetric quantity 
\[
  \kappa(k,l-1) = K(k,l-1) \, Q_k \, Q_{l-1} = K(l,k-1) \, Q_l \, Q_{k-1} = \kappa(l,k-1) \ .
\]
Therewith, the equation~\eqref{e:master:flux} can be rewritten as
\begin{equation}\label{e:Master:sym}
  \dot c = -\frac{1}{2} \sum_{k,l\geq 1} \kappa(k,l-1) \, \bra*{ \frac{c_k \, c_{l-1}}{Q_k \, Q_{l-1}} - \frac{c_{k-1} \, c_{l}}{Q_{k-1} \, Q_l }} \, \bra*{ \alpha^{k,l-1} - \alpha^{l,k-1}} \ ,
\end{equation}
where $\alpha^{k,l-1}$ are called stoichiometric coefficients and are given by
\[ 
  \alpha^{k,l-1}_r = \delta_{k,r} + \delta_{l-1,r}
  \qquad\text{with}\qquad 
  \delta_{k,r} = \begin{cases} 
                    1 &, k=r \\
                    0 &, k\ne r 
                  \end{cases} \ .
\]
The functional derivative of $\cF[c]$ is identified with 
\[
  D\cF[c] = \bra*{ \log \frac{c_k}{Q_k} - 1 }_{k\geq 0} \ .
\]
Then, the evolution~\eqref{e:Master:sym} can be written as the gradient flow 
\begin{equation}
  \dot c = - \cK[c] \; D \cF[c ]
\end{equation}
where the linear operator $\cK[c]$ is formally given by the infinite matrix
\begin{equation}
  \cK[c] = \frac{1}{2}\sum_{k,l\geq 1}  \kappa(k,l-1) \; \Lambda_{\Boltz}\bra*{\frac{c_k \, c_{l-1}}{Q_k \, Q_{l-1}} , \frac{c_{k-1} \, c_{l}}{Q_{k-1} \, Q_l }} \  \bra*{ \alpha^{k,l-1} - \alpha^{l,k-1}}  \otimes  \bra*{ \alpha^{k,l-1} - \alpha^{l,k-1}} \ .
\end{equation}
Hereby, $\Lambda: \R_{\geq 0} \times \R_{\geq 0} \to \R_{\geq 0}$ is the logarithmic mean
\begin{equation}
  \Lambda_{\Boltz}(s,t) = \begin{cases} 
                            \frac{s-t}{\log s - \log t }\ , & s\ne t \\
                            s \ , & s=t 
                          \end{cases}\ .
\end{equation}

\subsection{Open questions}

The Assumption~\ref{ass:well-posedness} leaves parameter regimes open. There is a minimal gap between assumption~\eqref{e:ass:K1} for existence and the slightly stronger one~\eqref{e:ass:K2} for uniqueness. Note, that by the results of~\cite[Theorem 5]{emre} follows uniqueness under assumption~\eqref{e:ass:K1} for initial values with finite $p$-moment for any $p>1$, that is $\sum_{k\geq 1} k^p \bar c_k <\infty$. 

The picture for the longtime behavior is a bit less complete due to the various conditions in Assumption~\ref{ass:BDA}. First of all~\eqref{e:ass:K4} implies in particular sublinear growth for the kernel, and hence the case of kernels with linear growing rates is open. This assumption enters the proof at two points. Firstly, to establish that the solutions to~\eqref{e:master} constitute a semigroup in the weak$^*$ topology (Theorem~\ref{thm:generalized_flow:weak}) the sublinear growth of $\set{a_l}_{l\geq 0}$ and $\set{b_k}_{k\geq 1}$ is needed. Although, it seems possible to salvage this by using similar arguments as in~\cite{Slemrod1989}. 
Secondly, the relative compactness of the orbits for solutions with initial mass density $\varrho<\varrho_c$ is only established under~\eqref{e:ass:K4} with $\set{a_l}_{l\geq 0}$ and $\set{b_k}_{k\geq 1}$ with at most linear growth. 

For kernels not satisfying the detailed balance assumption~\eqref{e:ass:BDA}, the longtime behavior is largely open. The recent preprint~\cite{EsenturkVelazquez2019} provides a contractivity statement for any two solutions under suitable monotonicity assumptions on the rates, which in particular gives convergence to equilibrium. 

Also, once a qualitative convergence statement as \emph{(1)} in Theorem~\ref{thm:longtime} is proven, one could ask for an improvement to a quantitative statement, which seems possible by the tools developed in~\cite{Canizo2015,Conlon2017} under suitable additional assumptions on the kernel.

The statement \emph{(2)} of Theorem~\ref{thm:longtime} raises the question of how the escape of the excess mass $\varrho_0-\varrho_c$ happens and if some evolution equation may be deduced, which is asymptotically satisfied by the excess mass. The similarity to the Becker-Döring model suggests that some transport equation related to the classical theory for coarsening by Lifshitz–Slyozov~\cite{Lifshitz1961} and Wagner~\cite{Wagner1961} may be deduced by similar means as in~\cite{Penrose1997,Velazquez1998,LM02,Niethammer2003,Schlichting2018}.

The formal framework in section~\ref{s:GF} could be made rigorous by following the approach of~\cite{EFLS16}. In addition to that, the mean-field limit of the stochastic particle systems as obtained by~\cite{JG2017} could be proven within the variational framework of (generalized) gradient flows or similarly in the context of large-deviations. A related question is whether the variational evolutionary $\Gamma$-convergence as applied in~\cite{Schlichting2018} to the Becker-Döring system is applicable to obtain a macroscopic limit of the exchange-driven growth model.

\subsection*{Outline}

The section~\ref{s:well-posedness} contains the well-posedness results, that is Theorem~\ref{thm:well-posed}. First, existence via a truncation argument is shown in subsection~\ref{s:exist}. After this, the fact that solutions constitute a generalized flow is obtained in section~\ref{s:flow}. Finally, uniqueness with the semigroup property and positivity is proven in section~\ref{s:unique}. The longtime behavior as stated in Theorem~\ref{thm:longtime} is proven in section~\ref{s:equilibrium}. First, the Lyapunov property of~\eqref{e:def:Lyapunov} is established in subsection~\ref{s:lyap}. Secondly, the relative compactness of orbits is obtained in subsection~\ref{s:compact}, which allows to apply LaSalle's invariance principle later. The proof of Theorem~\ref{s:longtime} concludes in subsection~\ref{s:longtime}. The appendix~\ref{s:ValleePoussin} provides a version of the Lemma de la Vallée-Poussin, which is needed for existence.
\section{Well-posedness}\label{s:well-posedness}
\subsection{Existence by truncation}\label{s:exist}
The basic properties of the space $\cX$ from~\eqref{e:def:X} are summarized below.
\begin{prop}[{\cite{BCP86}}]\label{prop:X:properties}
  The space $\cX$ is a Banach space and it is the dual space of
  \[
    ^*\cX= \set*{ (c_l)_{l\geq 0} \ :\ (1+l)^{-1} c_l \to 0 \text{ as } l \to \infty } \ . 
  \]
  Moreover, let a sequence $(c^{j})_{j\geq 0} \subset \cX$ and some $c \in \cX$ be given. Then
  \begin{enumerate}
   \item $c^{j}$ converges weakly$^*$ to $c$ in $\cX$ if and only if
   \begin{enumerate}
    \item $\sup_j \norm{ c^{j}} < \infty$, and
    \item $c^{j}_l \to c_l$ as $j\to \infty$ for all $l\geq 0$.
    \end{enumerate}
   \item $c^{j}$ converges strongly to $c$ in $\cX$ if and only if
    \begin{enumerate}
    \item $c^{j}$ converges weakly$^*$ to $c$ in $\cX$, and
    \item $\norm{ c^{j} } \to \norm{c}$ as $j \to \infty$. 
    \end{enumerate}
  \end{enumerate}
  For a given sequence $\set{\alpha_k}_{k\in \N_0}$, the functional $\cA: \cX \to \R$ given by
  \begin{equation}\label{e:potential:energy}
    \cA[c] = \sum_{k\geq 0} \alpha_k \, c_k 
  \end{equation}
  is weak$^*$ continuous if and only if $\alpha_k / k \to 0$ as $k\to \infty$. 
\end{prop}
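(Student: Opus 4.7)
The plan is to reduce every claim to the standard $c_0$--$\ell^1$ duality by absorbing the weight. The map $T:\cX \to \ell^1(\N_0)$ defined by $(Tc)_l = (1+l)\,c_l$ is a surjective linear isometry, so $\cX$ inherits completeness from $\ell^1$. Since $\ell^1(\N_0) = (c_0(\N_0))^*$, transporting this duality through $T$ identifies $\cX$ as the dual of
\[
  \set[\big]{(x_l)_{l\geq 0} : (1+l)^{-1} x_l \to 0 \text{ as } l\to \infty}
\]
equipped with the norm $\sup_l (1+l)^{-1}\abs{x_l}$, which is exactly ${}^*\cX$; the dual pairing is $\skp{x,c} = \sum_{l\geq 0} x_l\, c_l$, absolutely convergent via the splitting $\abs{x_l c_l} \leq \bigl((1+l)^{-1}\abs{x_l}\bigr)\bigl((1+l)\abs{c_l}\bigr)$.

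For part (1), in the forward direction (a) follows from Banach--Steinhaus and (b) from testing against the standard basis vector $e_l$, which lies in ${}^*\cX$. For the converse, given $x \in {}^*\cX$ and $\eps>0$, I would choose $L$ so large that $(1+l)^{-1}\abs{x_l} \leq \eps$ for $l\geq L$ and split
\[
  \abs{\skp{x,c^{j}-c}} \leq \sum_{l<L} \abs{x_l}\,\abs{c^{j}_l - c_l} + \eps\bigl(\norm{c^{j}} + \norm{c}\bigr).
\]
The finite head vanishes by (b), and the tail is uniformly controlled by (a). For part (2), the implication strong $\Rightarrow$ weak$^*$ plus norm convergence is trivial. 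For the converse I would use a Scheffé-type argument: from (b) one has $\sum_{l<L}(1+l)\abs{c^{j}_l} \to \sum_{l<L}(1+l)\abs{c_l}$ for every fixed $L$, and combined with $\norm{c^{j}}\to\norm{c}$ this forces $\sum_{l\geq L}(1+l)\abs{c^{j}_l} \to \sum_{l\geq L}(1+l)\abs{c_l}$; picking $L$ to make the latter tail smaller than $\eps$, the triangle estimate
\[
  \norm{c^{j}-c} \leq \sum_{l<L}(1+l)\abs{c^{j}_l-c_l} + \sum_{l\geq L}(1+l)\bra*{\abs{c^{j}_l}+\abs{c_l}}
\]
then closes the argument.

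For the final assertion on $\cA$, boundedness of $\cA$ on $\cX$ is equivalent to $\sup_k (1+k)^{-1}\abs{\alpha_k} < \infty$, so bounded linear functionals correspond to at-most-linearly growing sequences $(\alpha_k)$; among these, weak$^*$ continuous functionals are exactly those represented by elements of the predual ${}^*\cX$, i.e.\ sequences satisfying $(1+k)^{-1}\alpha_k \to 0$, which is manifestly equivalent to $\alpha_k/k \to 0$. The main obstacle of the proposition is the strong-convergence half of (2): upgrading coordinatewise convergence and matching total weighted mass to norm convergence requires the delicate tail-uniformity produced by the Scheffé step above, whereas all other statements follow essentially formally from standard sequence-space duality once the weighted isomorphism $T$ is in place.
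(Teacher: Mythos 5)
Your proof is correct. Note that the paper itself gives no argument for this proposition: it is quoted verbatim from \cite{BCP86} (where the analogous statements for the weight $k$ are proved), so there is no internal proof to diverge from. Your route --- the weighted isometry onto $\ell^1(\N_0)$ with predual $c_0$, Banach--Steinhaus plus a tail splitting for the weak$^*$ characterization, the Scheff\'e-type upgrade for strong convergence, and the identification of weak$^*$-continuous functionals $\cA$ with elements of $^*\cX$ --- is exactly the standard argument underlying the cited result, so nothing further is needed.
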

Any sequence in $\cB^\varrho$ from~\eqref{e:def:ball} is by definition bounded in $\cX^+$ and has by Proposition~\ref{prop:X:properties} a weak$^*$ convergent subsequence. The topology of weak$^*$ convergence in~$\cB^\varrho$ can be metrizised by
\begin{equation}\label{e:def:upd}
  \upd(c,d) = \sum_{k\geq 0} \abs*{ c_k -d_k} \ .
\end{equation}
The space $\cB^\varrho$ becomes a compact metric space equipped with $\upd$ and to avoid any confusion this space is denoted by $\cB^\varrho_\upd$.
\begin{defn}[Solution]\label{def:sol}
  For $T\in [0,\infty]$ a family of functions $\set{c_k(\cdot)}_{k\geq 0}$ is called a solution to~\eqref{e:master} on $[0,T)$ provided that
  \begin{enumerate}
   \item $c_k: [0,T ) \to [0,\infty)$ is continuous and bounded $\sup_{t\in [0,T)} c_k(t) < \infty$
   \item The nonlinear birth $A_{k-1}[c]$ and death rates $B_k[c]$ defined in~\eqref{e:bdrates} are integrable
   \begin{equation}\label{e:def:sol:bounded:birth-death}
     \int_0^t A_{k-1}[c(s)] \dx{s} < \infty \quad\text{and}\quad \int_0^t B_k[c(s)] \dx{s} < \infty \quad \text{for $k\geq 1$ and } t\in [0,T) \ . 
   \end{equation}
   \item The equation~\eqref{e:master:flux} holds in integrated form for $t\in [0,T)$
   \begin{equation}\label{e:def:sol:flux}
     c_k(t) = c_k(0) + \int_0^t \bra[\big]{ J_{k-1}[c(s)]  - J_{k}[c(s)]} \dx{s} \qquad \text{ for $k\geq 0$}\ ,
   \end{equation}
   again with the convention that $J_{-1}\equiv 0$.
  \end{enumerate}
\end{defn}
Definition~\ref{def:sol} is also used in~\cite{emre}, where the conservation laws~\eqref{e:ConverationLaws}, positivity, existence and uniqueness were deduced under additional assumptions on initial moments. The result proven in this section extends the well-posedness theory of~\cite{emre} to arbitrary initial data under the sole Assumption~\ref{ass:well-posedness}. The first step for the existence and stability of solution is done by considering for $N\geq 1$ the following truncated system of ordinary differential equations.
\begin{align}\label{e:master:trunc}
  \dot c_k^{N} = J^{N}_{k-1}[c^{N}] - J^{N}_k[c^{N}] \qquad\text{ for $k=0,\dots, N$ } \ ,
\end{align}
where
\begin{equation}\label{e:net:flux:trunc}
 J^{N}_k[c^{N}] = A_k^{N}[c^{N}] \, c_k^N - B_{k+1}^N[c^N] \, c_{k+1}^N \qquad\text{ for $k=0,\dots N-1$} \ .
\end{equation}
with
\begin{equation}
  A^N_{k-1}[c^N] = \sum_{l=1}^N K(l,k-1) \, c_l^N \quad\text{and}\quad  B_k^N[c^N] = \sum_{l=0}^{N-1} K(k,l) \, c_l^N \qquad\text{ for $k=1,\dots, N$} \ .  \label{e:def:AB:trunc} 
\end{equation}
Any element $c^{N} \in \R^{N+1}$ is extended to $\cX$ by setting $c_k^{N} = 0 $ for $k > N$. 

The well-posedness of the truncated system~\eqref{e:master:trunc} follows from standard theory of ordinary differential equations. However, to deduce stability properties of the infinite system~\eqref{e:master}, certain estimates for~\eqref{e:master:trunc}, uniform in $N$, are needed. First properties of the truncated system are deduced by the following simple Proposition, which is also the basis of the analysis in~\cite{emre}. It is convenient to rewrite~\eqref{e:master:trunc} as a nonlinear birth-death chain based on the above definitions 
\begin{align}
  \dot c_0^{N} &= - A_0^{N}[c^N] \, c_0^N + B_1^N[c^N] \, c_1^N \ ; \notag \\
  \dot c_k^{N} &= A_{k-1}^N[c^{N}] \, c_{k-1}^{N} - \bra*{ A_k^N[c^{N}] + B_k^N[c^{N}] } \, c_k^N  + B_{k+1}^N[c^{N}] \, c_{k+1}^{N}  \qquad \text{ for } k=1,\dots, N-1 \ ;\notag \\
  \dot c_N^{N} &= A_{N-1}^N[c^N] \, c_{N-1}^N - B_N^N[c^N] \, c_{N}^N \label{e:master:trunc:birth-death} \ .   
\end{align}
The basic properties of the truncated system are already established in~\cite{emre}.
\begin{prop}[Properties of truncated system {\cite[Lemma 1+2, Corollary 1]{emre}}]\label{prop:trunc:system}
  For any $N$ let $c^N$be the solution to~\eqref{e:master:trunc}. Then it holds for any sequence of real numbers~$(g_k)_{k \geq 0}$ 
  \begin{equation}\label{e:trunc:test}
  \pderiv{}{t} \sum_{k=0}^N \, g_k \, c_k^N + \sum_{k=1}^N \bra*{ g_k - g_{k-1}} \, B_k^N[c^N] \, c_k^N 
  = \sum_{k=0}^{N-1} \bra*{ g_{k+1}-g_k} \,  A_k^N[c^N] \, c_k^N   \ .
  \end{equation}
  In addition, the zeroth and first moments of $c^N$ are conserved. Moreover, nonnegativity of the initial data $c^N_k(0) \geq 0$ for $k\geq 0$ is preserved $c_k^N(t)\geq 0$ for all $t\in [0,\infty)$. In particular, for $c^N(0) \in  \cP^\varrho$ follows $c^N(t) \in \cP^\varrho$ for all $t\in [0,\infty)$.
\end{prop}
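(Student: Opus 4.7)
\smallskip

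\textbf{Proof proposal.} The plan is to derive \eqref{e:trunc:test} directly from the birth--death form~\eqref{e:master:trunc:birth-death} by Abel summation, and then deduce all remaining assertions as corollaries of this single identity together with a standard quasipositivity argument.

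First, I would multiply each line of~\eqref{e:master:trunc:birth-death} by $g_k$ and sum over $k=0,\dots,N$, using the convention $A_{-1}^N[c^N]\equiv 0$ and $A_N^N[c^N]\, c_N^N\equiv 0$ (which is compatible with the last line of~\eqref{e:master:trunc:birth-death} where no efflux from the top state occurs). Rewriting the sum $\sum_{k=0}^N g_k A_{k-1}^N[c^N] c_{k-1}^N$ by the index shift $k\mapsto k+1$ as $\sum_{k=0}^{N-1}g_{k+1} A_k^N[c^N] c_k^N$, and analogously $\sum_{k=0}^N g_k B_{k+1}^N[c^N] c_{k+1}^N = \sum_{k=1}^N g_{k-1} B_k^N[c^N] c_k^N$, the four birth and death contributions combine by telescoping into the two differences $(g_{k+1}-g_k)$ and $(g_k-g_{k-1})$ appearing in~\eqref{e:trunc:test}. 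The boundary terms at $k=0$ and $k=N$ vanish automatically because $A_{-1}^N$ and the formal $A_N^N c_N^N$ do, so no ad-hoc boundary conditions on $(g_k)$ are needed.

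With~\eqref{e:trunc:test} available, conservation of the zeroth moment follows from the choice $g_k\equiv 1$, which kills both sides. For the first moment I would take $g_k=k$, so that $g_{k+1}-g_k=g_k-g_{k-1}=1$, and the identity reduces to $\partial_t\sum_k k c_k^N = \sum_{k=0}^{N-1}A_k^N[c^N] c_k^N - \sum_{k=1}^N B_k^N[c^N] c_k^N$. Expanding both double sums via~\eqref{e:def:AB:trunc} and relabeling $(k,l)\leftrightarrow (l,k)$ shows that they coincide term by term, so the difference vanishes and $\sum_k k c_k^N$ is conserved.

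For nonnegativity I would appeal to the quasipositive (Metzler) structure of the ODE system~\eqref{e:master:trunc:birth-death}: when $c_k^N$ is held at zero while the remaining coordinates are nonnegative, the right-hand side reduces to $A_{k-1}^N[c^N] c_{k-1}^N + B_{k+1}^N[c^N] c_{k+1}^N \geq 0$ (with the obvious adjustments at $k=0$ and $k=N$). A standard comparison argument — for instance perturbing the system by $+\varepsilon$ and passing to the limit, or invoking the classical Müller--Kamke theorem for quasipositive ODEs — then shows the nonnegative cone $\R_{\geq 0}^{N+1}$ is forward invariant. Combined with the two conservation laws, this immediately yields $c^N(t)\in\cP^\varrho$ whenever $c^N(0)\in\cP^\varrho$. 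The only mildly delicate point is the bookkeeping of boundary terms at $k=0$ and $k=N$ in the Abel summation; everything else is routine.
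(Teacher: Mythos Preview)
Your argument is correct. The paper does not actually supply a proof of this proposition; it is quoted verbatim as \cite[Lemma~1+2, Corollary~1]{emre}, so there is no ``paper's own proof'' to compare against beyond the citation. Your derivation of~\eqref{e:trunc:test} via Abel summation from the birth--death form~\eqref{e:master:trunc:birth-death} is exactly the natural route, the boundary bookkeeping at $k=0$ and $k=N$ is handled correctly, and the deduction of the two conservation laws from $g_k\equiv 1$ and $g_k=k$ (with the $(k,l)\leftrightarrow(l,k)$ relabeling to cancel the double sums) is clean. The nonnegativity argument via quasipositivity is also sound; note only that the system is genuinely nonlinear (the rates $A_k^N[c^N]$, $B_k^N[c^N]$ depend on $c^N$), so ``Metzler'' is a slight abuse of terminology, but the inward-pointing property on the faces of $\R_{\geq 0}^{N+1}$ that you describe is precisely what the Müller--Kamke comparison principle requires, and your $\varepsilon$-perturbation alternative works as well.
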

Based on the above Proposition, the existence of solutions is obtained by a suitable limiting procedure with the help of the Arzelà-Ascoli Theorem similar to how it is done in~\cite[Theorem 2.2]{BCP86} for the classical Becker-Döring system.
\begin{thm}[Existence of solutions]\label{thm:stability}
  Let the rates satisfy the linear growth assumption for all $l,k\geq 1$
  \begin{equation}\label{e:linear:growth2}
    K(k,l-1) \leq a_k \, a_l  \ ,  \qquad\text{where for some $C_K>0$: }\quad  a_k \leq C_K \, k \ . 
  \end{equation}
  Let $(g_k)_{k\geq 0}$ be a positive increasing sequence satisfying for some $C_g>0$
  \begin{equation}\label{e:ass:a:g}
    a_k \, \bra*{ g_{k+1} - g_{k}} \leq C_g \, g_k \ . 
  \end{equation}
  If $a_k$ has linear growth, that is $\liminf_{k\to \infty}\frac{a_k}{k} >0$, then $g_k$ is additionally assumed to be of superlinear growth, that is $\lim_{k\to \infty} \frac{g_k}{k} =\infty$. 
  
  Then for any $T>0$, any $\varrho>0$ and any $\bar c \in \cP^\varrho$ with $\sum_{k\geq 0} g_k \, \bar c_k \leq \bar C_g < \infty$ there exists a nonnegative solution to~\eqref{e:master} with $c(0)=\bar c$ satisfying the bound
  \begin{equation}\label{e:stability}
     \sum_{k\geq 0} g_k \, c_k(T) + \int_0^T \sum_{k\geq 1} \bra*{g_k-g_{k-1}} \, B_k[c(s)] \, c_k(s) \dx{s} \leq\bar C_g e^{C_K\, (1+\varrho)\, C_g \, T} \ . 
  \end{equation}
  Moreover, for any $\bar c\in \cP^\varrho$ there exists a superlinear positive increasing sequence $(g_k)_{k\geq 0}$ satisfying~\eqref{e:ass:a:g} with $\sum_{k\geq 0} g_k \bar c_k < \infty$. In particular, there exists a nonnegative solution to~\eqref{e:master} with $c(0)=\bar c$ conserving the number density and total mass
  \begin{equation}\label{e:trunc:convervation}
    \sum_{k\geq 0} c_k(t) = \sum_{k\geq 0} \bar c_k =1  \qquad\text{and}\qquad \sum_{k\geq 1} k \, c_k(t) = \sum_{k\geq 1} k\, c_k(t) = \varrho \qquad\text{for all } t\in [0,T) \ . 
  \end{equation}
\end{thm}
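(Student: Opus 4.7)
The approach is to construct a solution as a subsequential limit $N\to\infty$ of solutions $c^N$ to the truncated system~\eqref{e:master:trunc}, whose existence, nonnegativity, and conservation of the first two moments are already in hand by Proposition~\ref{prop:trunc:system}. The decisive a priori bound comes from testing~\eqref{e:trunc:test} against the sequence $(g_k)$ itself. Using~\eqref{e:linear:growth2} and~\eqref{e:ass:a:g},
\[
\sum_{k=0}^{N-1}(g_{k+1}-g_k)\,A_k^N[c^N]\,c_k^N \;\leq\; \sum_{k,l} a_k(g_{k+1}-g_k)\,a_l\,c_k^N c_l^N \;\leq\; C_g\Bigl(\sum_l a_l c_l^N\Bigr)\sum_k g_k c_k^N,
\]
where the bracketed factor is bounded by $C_K(1+\varrho)$ from the conserved moments. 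Together with the dissipation term $\sum_k(g_k-g_{k-1})B_k^N[c^N]c_k^N$, which stays on the left-hand side with the correct sign, Gronwall's lemma yields~\eqref{e:stability} uniformly in $N$.

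\textbf{Compactness and passage to the limit.} From~\eqref{e:master:trunc:birth-death} together with $\sum_l a_l c_l^N(t)\leq C_K(1+\varrho)$, the derivatives $|\dot c_k^N(t)|$ are bounded uniformly in $N$ on $[0,T]$ for every fixed $k$. Arzelà-Ascoli and a diagonal extraction produce a subsequence $c^{N_j}$ with $c_k^{N_j}\to c_k$ uniformly on $[0,T]$ for all $k\geq 0$, with $c_k$ nonnegative and continuous. The only nontrivial step in passing to the limit in the integrated equation~\eqref{e:def:sol:flux} is the interchange of $\lim_j$ with the sum $\sum_{l\geq 1} K(l,k-1) c_l^{N_j}(s)c_{k-1}^{N_j}(s)$. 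Since $K(l,k-1)\leq a_l a_k$, the required tightness reduces to smallness of $\sum_{l\geq L} a_l c_l^{N_j}$ uniformly in $j,s$. When $a_l$ is sublinear this is automatic from mass conservation, because $\sum_{l\geq L} a_l c_l^{N_j}\leq \sup_{l\geq L}(a_l/l)\cdot\varrho \to 0$; when $a_l$ grows linearly one instead bounds $\sum_{l\geq L} a_l c_l^{N_j}\leq C_K\sup_{l\geq L}(l/g_l)\cdot\sum_l g_l c_l^{N_j}$ and uses the a priori estimate~\eqref{e:stability} together with the \emph{superlinearity} $g_l/l\to\infty$ to conclude. Dominated convergence then finishes the argument.

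\textbf{General initial data and conservation.} For an arbitrary $\bar c\in\cP^\varrho$ no polynomial moment beyond the first is a priori available, so a superlinear increasing weight $(g_k)$ satisfying both $\sum g_k\bar c_k<\infty$ and~\eqref{e:ass:a:g} has to be exhibited. The de la Vallée-Poussin lemma of Appendix~\ref{s:ValleePoussin} supplies a convex superlinear $g$ with $\sum g_k \bar c_k<\infty$; by choosing the construction so that $g_{k+1}-g_k = O(g_k/k)$ (for instance by taking $g$ concave on top of linear growth, or built from a slowly varying superlinear envelope), condition~\eqref{e:ass:a:g} holds in view of $a_k\leq C_K k$. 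With such a $g$ in hand the previous step provides a solution, and the conservation laws~\eqref{e:trunc:convervation} follow by dominated convergence in $k$ against the dominating sequence $g_k\geq 1+k$, applied to $\sum_k c_k^{N_j}(t)=1$ and $\sum_k k c_k^{N_j}(t)=\varrho$.

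\textbf{Main obstacle.} The heart of the proof is that only the first moment is conserved, so tightness of the mass distribution has to be manufactured along the evolution; this is exactly the role of the weight $(g_k)$ and the compatibility condition~\eqref{e:ass:a:g}, which together are engineered so that Gronwall can propagate an initial $g$-moment bound in time against the nonlinearity. The dichotomy in the hypothesis, namely that $g_k$ must be superlinear precisely when $a_k$ is linear, is a direct mirror of this: once the kernel saturates the linear growth regime, mass conservation no longer controls $\sum_l a_l c_l$, and a strictly stronger weight becomes unavoidable.
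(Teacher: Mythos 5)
Your proposal is correct and follows essentially the same route as the paper: truncation, testing the identity \eqref{e:trunc:test} with the weight $(g_k)$ and Gronwall to get \eqref{e:stability} uniformly in $N$, Arzelà--Ascoli with diagonal extraction, passage to the limit in the fluxes split according to whether $a_k$ is sublinear (first-moment tail bound) or linear (superlinear $g$-moment bound), the de la Vallée-Poussin lemma for general data, and the uniform superlinear moment bound to preserve the conservation laws in the limit. The only cosmetic differences are that the paper's appendix already builds the condition $(k+1)(g_{k+1}-g_k)\leq 2g_k$ into the de la Vallée-Poussin construction (so no extra adjustment is needed), and that the conservation step is a uniform-tightness argument rather than literal dominated convergence, plus a Fatou-type passage to the limit in \eqref{e:stability}; none of this changes the substance.
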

\begin{rem}
  Theorem~\ref{thm:stability} also contains the stability of solutions on compact time intervals. For this reason, the quantified growth condition~\eqref{e:linear:growth2} is introduced. Especially, by choosing $a_k = k+1$ and $g_k=(k+1)^p$ for $p\geq 1$, for which~\eqref{e:ass:a:g} holds with $C_g=2^p-1$, it shows that arbitrary high moments are bounded on compact time intervals, once the initial data has a $p$-th moment. 
  
  Likewise, Theorem~\ref{thm:stability} contains the existence part of Theorem~\ref{thm:well-posed}, for which it is applied with $g_k=k+1$ in the case where $a_k$ is of sublinear growth. In this case, \eqref{e:ass:a:g} is satisfied with $C_g=1$ and also the condition on the initial datum is satisfied, since $\bar c \in \cP^\varrho$ and therefore $\sum_{k\geq 0} (k+1)\, \bar c_k = 1 + \varrho = \bar C_g < \infty$. If $a_k$ is of linear growth, then a superlinear sequence $g_k$ satisfying~\eqref{e:ass:a:g} and $\sum_{k\geq 0} g_k\, \bar c_k <\infty$ is obtained from $\bar c\in \cP^\varrho$ via the Lemma de la Vallée-Poussin~\ref{lem:ValleePoussin}.
\end{rem}
\begin{proof}
Let $\bar c^N_k = \bar c_k$ for $k=0,\dots, N$. Then Proposition~\ref{prop:trunc:system} implies the existence of a unique solution $c^N$ to~\eqref{e:master:trunc} with $c^N(0) = (\bar c_k)_{0\leq k\leq N}$ and $c^N(t)$ preserves for $t\in [0,\infty)$ the zeroth and first moment, which in particular implies the bounds
\begin{equation}
  \sum_{l=0}^N c_l^N(t) = \sum_{l=0}^N \bar c_l  \leq 1 \qquad\text{and}\qquad \sum_{l=1}^N l \, c_l^N(t) = \sum_{l=0}^N l\, \bar c_l \leq \varrho \qquad\text{for all } t\in [0,\infty) \ . 
\end{equation}
The last bound translates to the pointwise bounds
\begin{align}\label{e:cN:bd}
  0\leq c_0^N(t) \leq 1 \qquad\text{and}\qquad 0 \leq  c_k^N(t) \leq \frac{\varrho}{k}  \qquad\text{for $k=1,\dots, N$} \ . 
\end{align}
The at most linear growth assumption~\eqref{e:linear:growth2} implies for all $k\geq 1$
\begin{align}
  A^N_{k-1}[c^N] &= \sum_{l=1}^N K(l,k-1) \, c_l^N \leq C_K^2 \, \varrho \, k \label{e:AN:bd} \\
  B^N_{k}[c^N] &= \sum_{l=1}^{N-1} K(k,l-1) \, c_{l-1}^N \leq C_K^2 \, (\varrho + 1) \, k \label{e:BN:bd}  \ .
\end{align}
Plugging these bounds into~\eqref{e:master:trunc:birth-death} yields for all $k= 0,\dots, N$ the estimate
\begin{align}
  \abs{\dot c_k^N} &\leq A_{k-1}^N[c^N] c_{k-1}^N  + \bra*{A_k^N[c^N]+ B_k^N[c^N]} c_{k}^N + B_{k+1}^N[c^N] c_{k+1}^N \notag\\
  &\leq C_K^2 \bra*{ \varrho (k-1) c_{k-1}^N + (2\varrho+1) k c_k^N + \bra*{\varrho+1} (k+1) c_{k+1}^N } \notag \\ 
  &\leq 2\,C_K^2 \, (2\varrho +1) \sum_{k= 1}^N k c_{k}^N = 2\,C_K^2 \, (2\varrho+1) \, \varrho  < \infty \ . \label{e:dotcN:bd}
\end{align}
This shows that, for each $k\geq 0$, the family $\bra*{c_k^N(\cdot)}_{N\geq k}$ is equicontinuous on $[0,\infty)$ and the Arzelà-Ascoli theorem implies after extracting a suitable diagonal subsequence $N_n \to \infty$, the existence of a sequence of continuous function $c_k : [0,\infty) \to \R$ with $c_k^{N_n} \to c_k$ uniformly on compact subintervals of $[0,\infty)$ for all $k\geq 0$. In particular, the limit satisfies 
\begin{equation}\label{e:exits:p0}
  \sum_{l\geq 0} c_l(t) \leq 1 \qquad\text{and}\qquad \sum_{l\geq 1} l c_l(t)\leq \varrho \qquad\text{for } t\in [0,\infty) \ .
\end{equation}
The uniform bounds~\eqref{e:AN:bd} and~\eqref{e:BN:bd} together with the uniform convergence give for any $M\geq 1$ and $k\geq 1$ the bounds
\begin{align*}
  \sum_{l=1}^M K(l,k-1) \, c_l \leq C_K^2 \, \varrho \, k  \qquad\text{ and }\qquad \sum_{l=1}^{M-1} K(k,l-1) \, c_{l-1}^N \leq C_K^2 \, (\varrho + 1) \, k  \ .
\end{align*}
Integration of these bounds over $[0,t)$ and letting $M\to \infty$ gives~\eqref{e:def:sol:bounded:birth-death}. 

Before turning to the proof of~\eqref{e:def:sol:flux}, the stability estimate~\eqref{e:stability} based on~\eqref{e:trunc:test} from Proposition~\ref{prop:trunc:system} is shown. The two assumptions~\eqref{e:linear:growth2} and~\eqref{e:ass:a:g} allow to bound the right hand side of~\eqref{e:trunc:test} 
\[
  \sum_{k=0}^{N-1} \bra*{ g_{k+1}-g_k} \, A_k^N[c^N] \, c_k^N \leq \sum_{k=0}^{N-1} \bra*{g_{k+1}-g_k} \, a_k \, c_k^N  \sum_{l\geq 1} a_l \, c_l^N  \leq C_K\, (1+\varrho) \, C_g \sum_{k=0}^{N-1}  g_k \, c_k^N  \ .
\]
Plugging this bound into~\eqref{e:trunc:test} yields
\[
  \pderiv{}{t} \sum_{k=0}^N g_k \, c_k^N + \sum_{k=1}^N \bra*{ g_k - g_{k-1}} \, B_k^N[c^N] \, c_k^N 
  \leq C_K\, (1+\varrho)\, C_g \sum_{k=0}^{N}  g_k \, c_k^N \ . 
\]
The Gronwall Lemma gives for any $T\in [0,\infty)$ the uniform in time bound 
\begin{equation}
  \sum_{k=0}^N g_k \, c_k^N(T) + \int_0^T \sum_{k=1}^N \bra*{ g_k - g_{k-1}} \, B_k^N[c^N(t)] \, c_k^N(t) \dx{t} \leq \bar C_g e^{ C_K\, (1+\varrho)\, C_g T} \ . 
\end{equation}
The map $k\mapsto g_k$ is increasing, which allows to pass to the limit $N\to \infty$ along the above used subsequence $N_n$ leading to the bound~\eqref{e:stability}.

To pass to the limit in the integrated form of~\eqref{e:master:trunc} to obtain~\eqref{e:def:sol:flux}, it is needed to show
\begin{equation}\label{e:exists:p1}
  \lim_{N\to \infty} \int_0^T A_{k-1}^N[c^N(s)] \dx{s} = \int_0^T A_{k-1}[c(s)]\dx{s} \ .
\end{equation}
By the uniform convergence of each $c_k^N(\cdot) \to c_k(\cdot)$ on compact time intervals follows
\begin{equation}
  \sum_{l=1}^M K(l,k-1) c_l^N(t) \to \sum_{l=1}^M K(l,k-1) c_l(t) \qquad\text{ for any } t\in [0,\infty) \text{ as } N\to \infty \ . 
\end{equation}
If $(a_k)_{k\geq 0}$ is of sublinear growth, then the bound~\eqref{e:exits:p0} allows to let $M\to \infty$. If $(a_k)_{k\geq 0}$ is of linear growth, then the superlinear growth of $(g_k)_{k\geq 0}$ in the just established stability estimate~\eqref{e:stability} gives 
$\sum_{k\geq 0} g_k \, c_k(t) < \infty$ for any $t\in[0,\infty)$, which allows again to let $M\to \infty$ also in this case. Hence, the limit \eqref{e:exists:p1} is shown and a similar argument allows to pass to the limit in $B_k^N[c(\cdot)]$. Both bounds allow to pass to the limit in the integrated form of~\eqref{e:master:trunc} to obtain~\eqref{e:def:sol:flux}. 
Hence, the family $\bra{c_k(\cdot)}_{k\geq 0}$ is a solution to~\eqref{e:master} in the sense of Definition~\ref{def:sol}.

Finally, the existence of a superlinear positive increasing sequence $(g_k)_{k\geq 0}$ satisfying~\eqref{e:ass:a:g} with $\sum_{k\geq 0} g_k \bar c_k < \infty$ is established in the Lemma de la Vallée-Poussin~\ref{lem:ValleePoussin}. The stability bound~\eqref{e:stability} allows then to pass to the limit in the conservation of the truncated system established in Proposition~\ref{prop:trunc:system}.
\end{proof}
\subsection{Generalized flow}\label{s:flow}
After having established the existence of solution, the next basic property, which should be satisfied by all solutions in the sense of Definition~\ref{def:sol} are the conservation laws~\eqref{e:ConverationLaws}, such that every solution is actually in $\cP^\varrho$ on compact time intervals. This is a consequence of a theorem from~\cite{BCP86}, which carries over with only minor modifications and its proof is omitted.
\begin{thm}[{\cite[Theorem 2.5]{BCP86}}]\label{thm:weak_form}
 Let $(g_k)_{k\geq 0}$ be a real sequence. Let $c$ be a solution to~\eqref{e:master} on some interval $[0,T)$ with $0< T \leq \infty$ in the sense of Definition~\ref{def:sol}. Suppose that for $0\leq t_1 < t_2 < T$, $\int_{t_1}^{t_2} \sum_{k\geq 0} \abs{g_{k+1} - g_k} A_k[c(t)] c_k(t) \dx{t} < \infty$ and either that $g_k = O(k)$ and $\int_{t_1}^{t_2} \sum_{k\geq 0} \abs{g_{k+1} - g_k} B_{k+1}[c(t)] c_{k+1}(t) \dx{t} < \infty$ or that $\sum_{k\geq 0} g_k c_k(t_i) < \infty$ for $i=1,2$ and $g_{k+1} \geq g_k \geq 0$ for sufficiently large $k$. Then, for all $m\geq 0$
 \begin{align}
   \sum_{k\geq m} g_k \, c_k(t_2) &- \sum_{k\geq m} g_k \, c_k(t_1) + \int_{t_1}^{t_2} \sum_{k\geq m} \bra*{g_{k+1} - g_k} \, B_{k+1}[c(t)] \, c_{k+1}(t) \dx{t} \label{e:flux:identity} \\
   &= \int_{t_1}^{t_2} \sum_{k\geq m} \bra*{ g_{k+1} - g_{k}} \, A_{k}[c(t)] \, c_k(t) \dx{t} + \int_{t_1}^{t_2} g_m \, J_{m-1}[c(t)] \dx{t} \ .\nonumber 
 \end{align}
\end{thm}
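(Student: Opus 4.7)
The plan is to derive \eqref{e:flux:identity} by truncating the master equation at level $N$, performing a discrete integration by parts (Abel summation), and then passing to the limit $N\to\infty$. For each $k\ge m$, Definition~\ref{def:sol}(3) gives $c_k(t_2)-c_k(t_1)=\int_{t_1}^{t_2}(J_{k-1}[c(s)]-J_k[c(s)])\dx{s}$. Multiplying by $g_k$, summing from $k=m$ to $k=N$ (a finite sum, so interchanging with the integral is trivial), and using Abel's identity
\[
  \sum_{k=m}^N g_k\,(J_{k-1}-J_k) = g_m\,J_{m-1} + \sum_{k=m}^{N-1}(g_{k+1}-g_k)\,J_k - g_N\,J_N,
\]
together with $J_k=A_k[c]c_k-B_{k+1}[c]c_{k+1}$, produces a finite analogue of~\eqref{e:flux:identity}; the only discrepancy is an additional boundary term $-\int_{t_1}^{t_2}g_N\,J_N[c(s)]\dx{s}$ on the right.

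To pass to the limit $N\to\infty$, three ingredients are needed: convergence of the left-hand side $\sum_{k=m}^N g_k\,c_k(t_i)$ for $i=1,2$; convergence of the $L^1_t$-bulk sums involving $(g_{k+1}-g_k)A_k c_k$ and $(g_{k+1}-g_k)B_{k+1}c_{k+1}$ to their infinite counterparts; and the vanishing of the boundary integral. The bulk $A$-sum is handled by dominated convergence using the absolute integrability hypothesis on $\sum_k|g_{k+1}-g_k|A_k c_k$. In the first case the analogous $B$-hypothesis handles the $B$-sum directly, while in the second case the eventual nonnegativity of $g_{k+1}-g_k$ lets monotone convergence do the same job with unsigned series. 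For the left-hand side, the second case assumes the needed summability outright; in the first case $g_k=O(k)$ combined with the a priori first-moment control on $c(t_i)$ (an immediate consequence of Definition~\ref{def:sol}(2) under~\eqref{e:ass:K1}) gives $\sum_k g_k c_k(t_i)<\infty$.

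The main obstacle is showing $\int_{t_1}^{t_2}g_N\,J_N[c(s)]\dx{s}\to 0$. In the first case, since all other terms in the finite identity have a finite limit, the boundary integral necessarily converges; to identify the limit as zero I would decompose $J_N=A_N c_N-B_{N+1}c_{N+1}$, rewrite $g_N=g_m+\sum_{k=m}^{N-1}(g_{k+1}-g_k)$, and appeal to the absolute integrability of both the $A$- and $B$-series to show that each resulting contribution is a tail of an integrable series and therefore vanishes. In the second case, the summability $\sum_{k\ge N}g_k c_k(t_i)\to 0$ together with eventual monotonicity yields $g_N c_N(t)\to 0$ and $g_N c_{N+1}(t)\to 0$ pointwise in $t$; combined with an integrable majorant extracted from the $A$-assumption this gives the vanishing of the boundary integral by dominated convergence. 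With these three pieces assembled, passing $N\to\infty$ in the finite identity produces exactly \eqref{e:flux:identity}.
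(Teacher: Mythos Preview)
The paper does not give its own proof of this theorem: it cites \cite[Theorem~2.5]{BCP86} and states that the argument ``carries over with only minor modifications and its proof is omitted.'' Your global strategy---truncate, apply Abel summation to produce a finite identity with boundary term $-\int_{t_1}^{t_2} g_N J_N[c(s)]\dx{s}$, and pass $N\to\infty$---is precisely the standard route in \cite{BCP86}, so at the level of approach there is nothing to contrast.

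There are, however, two genuine gaps in the details. First, the claim that first-moment control on $c(t_i)$ is ``an immediate consequence of Definition~\ref{def:sol}(2) under \eqref{e:ass:K1}'' is not right: integrability of $A_{k-1}[c]$ and $B_k[c]$ does not by itself force $\sum_k k\,c_k(t)<\infty$. In the paper's setting solutions live in $\cX$ from the outset, and that is what you should invoke. Second, your treatment of the boundary term is incomplete in both cases. In Case~1, writing $g_N=g_m+\sum_{k=m}^{N-1}(g_{k+1}-g_k)$ and multiplying by $\int A_N c_N$ does \emph{not} display $\int g_N A_N c_N$ as a tail of the assumed convergent series $\sum_k\int |g_{k+1}-g_k|A_k c_k$: the index $N$ on $A_N c_N$ is fixed while the summation index $k$ runs, so no tail structure emerges. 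In Case~2, the hypothesis $\sum_k g_k c_k(t_i)<\infty$ is only given at the endpoints $t_1,t_2$; it does not yield $g_N c_N(t)\to 0$ for intermediate $t\in(t_1,t_2)$, so the pointwise premise of your dominated-convergence step fails, and no integrable majorant for $g_N|J_N|$ is exhibited. The customary fix is to note that all other terms in the finite identity converge, hence $\int g_N J_N$ has a limit, and then to combine the identities at two truncation levels (or argue via the tail identity with lower cutoff $m=N$) to force that limit to be zero; this bookkeeping is a little more involved than your sketch indicates.
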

The conservation of mass is a direct consequence of the above statement.
\begin{cor}[Conservation laws]\label{cor:convervation_laws}
  Let $c$ be a solution to~\eqref{e:master} with $c(0) = \bar c\in \cP^\varrho$ on some interval $[0,T)$ for $0< T \leq \infty$. Then for all $t\in [0, T)$ holds $c(t) \in \cP^\varrho$. 
\end{cor}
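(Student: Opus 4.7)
The two conservation laws follow from Theorem~\ref{thm:weak_form} with two judicious choices of the test sequence $(g_k)_{k \geq 0}$. The zeroth moment is essentially immediate: take $g_k \equiv 1$ and $m = 0$. Both integrability hypotheses of Theorem~\ref{thm:weak_form} are then trivial since $g_{k+1} - g_k = 0$, and $g_k = O(k)$ holds. The identity~\eqref{e:flux:identity} collapses---both the gain and loss integrals vanish, as does the boundary term $g_0 \, J_{-1}$ since $J_{-1} \equiv 0$---to give $\sum_{k \geq 0} c_k(t_2) = \sum_{k \geq 0} c_k(t_1)$. Choosing $t_1 = 0$ yields $\sum_{k \geq 0} c_k(t) = 1$ for every $t \in [0, T)$, and in particular $0 \leq c_k(t) \leq 1$ for all $k$ and $t$.

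For the first moment I apply Theorem~\ref{thm:weak_form} with the truncated sequence $g_k^{N} = k \wedge N$ and $m = 0$. The increments $g_{k+1}^{N} - g_k^{N}$ equal $1$ for $k < N$ and $0$ otherwise, so both integrability hypotheses reduce to finite sums $\sum_{k=0}^{N-1} \int_0^t A_k[c] c_k \dx{s}$ and $\sum_{k=0}^{N-1} \int_0^t B_{k+1}[c] c_{k+1} \dx{s}$; each summand is finite by the zeroth-moment bound $c_k \leq 1$ together with the integrability required in Definition~\ref{def:sol}. Theorem~\ref{thm:weak_form} therefore gives
\[
  \sum_{k \geq 0} (k \wedge N) \, c_k(t) - \sum_{k \geq 0} (k \wedge N) \, \bar c_k \;=\; \int_0^t \sum_{k=0}^{N-1} \bra[\big]{A_k[c(s)] c_k(s) - B_{k+1}[c(s)] c_{k+1}(s)} \dx{s} \ .
\]
Unfolding $A_k$ and $B_{k+1}$, the right-hand side equals $\int_0^t \sum_{k=1}^N \sum_{l \geq 1} (j_{l, k-1}[c] - j_{k, l-1}[c]) \dx{s}$, and the contribution of the diagonal block $\{1 \leq k, l \leq N\}$ cancels by the swap $k \leftrightarrow l$, leaving only the tail
\[
  \int_0^t \sum_{k=1}^N \sum_{l > N} \bra[\big]{K(l, k-1) \, c_l(s) \, c_{k-1}(s) - K(k, l-1) \, c_k(s) \, c_{l-1}(s)} \dx{s} \ .
\]

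It remains to pass to the limit $N \to \infty$. Monotone convergence yields $\sum_k (k \wedge N) c_k(t) \uparrow \sum_k k \, c_k(t) \in [0, \infty]$ on the left, and $\sum_k (k \wedge N) \bar c_k \uparrow \varrho$, so conservation of the first moment reduces to showing that the tail vanishes. This is the delicate step and the main obstacle, since Definition~\ref{def:sol} provides no a priori local bound on the first moment. Under Assumption~\eqref{e:ass:K1} each summand in the tail is dominated by $C_K \, l \, k \, (c_l c_{k-1} + c_k c_{l-1})$; the prefactor $\sum_{k=1}^N k \, c_{k-1}$ is bounded by $1 + \rho_N(s)$ with $\rho_N(s) := \sum_k (k \wedge N) c_k(s)$, and substituting back into the identity above produces a Gronwall-type inequality for $\rho_N$ whose driver contains the residual $\sum_{l > N} l \, c_l(s)$ and a factor linear in $\rho_N$. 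A simultaneous bootstrap then yields uniform-in-$N$ boundedness of $\rho_N$ on compact time intervals, so that $\sum_l l \, c_l(s) < \infty$ almost everywhere, and its tail $\sum_{l > N} l \, c_l(s)$ vanishes by dominated convergence, killing the residual. Passing to the limit gives $\sum_k k \, c_k(t) = \varrho$, which combined with the zeroth-moment identity establishes $c(t) \in \cP^\varrho$.
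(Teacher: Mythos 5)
Your zeroth-moment argument is exactly the paper's (Theorem~\ref{thm:weak_form} with $g_k\equiv 1$, $m=0$) and is fine. The first-moment argument, however, has a genuine gap at precisely the step you flag as delicate. After truncating with $g_k=k\wedge N$ and cancelling the block $\set{1\le k,l\le N}$, the residual is the net monomer flux across level $N$, and your proposed control of it is circular: the inequality you set up has the form $\rho_N(t)\le\varrho+C_K\int_0^t\bra*{1+\rho_N(s)}\,T_N(s)\dx{s}$ with $T_N(s)=\sum_{l>N}l\,c_l(s)$, and nothing in Definition~\ref{def:sol} guarantees that $T_N(s)$ is finite, let alone locally integrable in time. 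The only available majorant of $T_N$ is the full first moment, which is exactly the quantity whose finiteness and conservation you are trying to prove, so the announced ``simultaneous bootstrap'' does not close: Gronwall needs an integrable driver, and the final dominated-convergence step needs a time-integrable dominating function for $(1+\rho_N)T_N$; neither is available from the definition of a solution. As written, the passage $N\to\infty$ is not justified.

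The paper avoids all of this through a structural observation your route misses: apply Theorem~\ref{thm:weak_form} directly with $g_k=k$, $m=0$, and note that the gain and loss terms are literally the same nonnegative double series, since by relabelling the two summation indices (Tonelli) one has $\sum_{k\ge1}B_k[c]\,c_k=\sum_{k\ge1}\sum_{l\ge1}K(k,l-1)\,c_k\,c_{l-1}=\sum_{k\ge1}A_{k-1}[c]\,c_{k-1}$. Hence the two integrals in~\eqref{e:flux:identity} cancel exactly and $\sum_{k\ge1}k\,c_k(t)=\varrho$ follows with no tail estimate and no Gronwall argument. Note that this cancellation is exact only over the full index range: your cross-level tail equals $\sum_{k\le N<l}\bra*{j_{l,k-1}[c]-j_{k,l-1}[c]}$ after a swap, which does not vanish and genuinely encodes mass transport across level $N$ --- this is why the untruncated weight $g_k=k$ is the right choice here, with the required time-integrability absorbed into the hypotheses of Theorem~\ref{thm:weak_form} rather than re-derived by a bootstrap.
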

\begin{proof}
  Setting $m=0$ and $g_k = 1$ for all $k$ in~\eqref{e:flux:identity} yields $\sum_{k\geq 0} c_k(t)= \sum_{k\geq 0} c_k(0)= 1$. Similarly, choosing $m=0$ and $g_k = k$ gives for any $0\leq t_1 \leq t_2 < T$ the identity 
  \[
     \sum_{k\geq 1} k \, c_k(t_2) - \sum_{k\geq 0} k \, c_k(t_1) + \int_{t_1}^{t_2} \sum_{k\geq 0} B_{k+1}[c(t)] \, c_{k+1}(t) \dx{t} 
   = \int_{t_1}^{t_2} \sum_{k\geq 0} A_{k}[c(t)] \, c_k(t) \dx{t} \ .
  \]
  The conservation of the first moment follows from noting that
  \[
    \sum_{k\geq 1} B_{k}[c] \, c_{k} = \sum_{k\geq 1} \sum_{l\geq 1} K(k,l-1) \, c_{k} \,c_{l-1} = \sum_{k\geq 1} A_{k-1}[c] \, c_{k-1} \ .
  \]
\end{proof}
Another consequence of the mass conservation is the continuity of solutions, which proof follows along the lines of~\cite[Proposition 3.1]{BCP86}. 
\begin{prop}\label{prop:cont:solution}
  Let $c$ be a solution to~\eqref{e:master} on some interval $[0,T)$ for $0< T \leq \infty$. Then $c: [0,T) \to \cX$ is continuous and the series $\sum_{k\geq 0} (1+k) \, c_k(t)$ is uniformly convergent on compact intervals of $[0,T)$. 
\end{prop}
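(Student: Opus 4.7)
My plan is to exploit the two key facts already established: (i) each coordinate function $c_k(\cdot)$ is continuous by Definition~\ref{def:sol}, and (ii) by Corollary~\ref{cor:convervation_laws} the solution $c(t)$ remains in $\cP^\varrho$ for all $t \in [0,T)$, so that
\[
  \norm{c(t)} = \sum_{k\geq 0} (1+k)\, c_k(t) = 1 + \varrho \qquad \text{for every } t\in [0,T).
\]
Thus the infinite series defining $\norm{c(t)}$ is a sum of nonnegative continuous functions whose pointwise sum is the constant function $1+\varrho$.

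The first key step is to obtain uniform convergence of $\sum_{k\geq 0}(1+k)\, c_k(t)$ on any compact subinterval $[0,\tau]\subset [0,T)$. Define the partial sums $S_N(t) = \sum_{k=0}^N (1+k)\, c_k(t)$. Since $c_k(t) \geq 0$, the sequence $\bra{S_N}_{N\geq 0}$ is monotone increasing in $N$; each $S_N$ is continuous on $[0,\tau]$; and by the conservation laws the pointwise limit is the constant $1+\varrho$, which is also continuous. Dini's theorem then yields that $S_N \to 1+\varrho$ uniformly on $[0,\tau]$, i.e., for every $\eps > 0$ there exists $N_0$ such that
\[
  \sum_{k > N_0} (1+k)\, c_k(t) < \eps/4 \qquad \text{for all } t \in [0,\tau].
\]

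The second step is to upgrade this tail estimate to continuity of $c: [0,T)\to \cX$. Fix $t_0 \in [0,T)$ and $\eps > 0$; choose $\tau > t_0$ and $N_0$ as above. For any $t \in [0,\tau]$, split
\[
  \norm{c(t) - c(t_0)} = \sum_{k=0}^{N_0} (1+k)\, \abs{c_k(t)-c_k(t_0)} + \sum_{k> N_0} (1+k)\, \abs{c_k(t)-c_k(t_0)}.
\]
The tail is bounded by $\sum_{k>N_0}(1+k)\, c_k(t) + \sum_{k>N_0}(1+k)\, c_k(t_0) < \eps/2$. The finite initial sum is controlled by the continuity of each $c_k$: for $t$ sufficiently close to $t_0$, each of the finitely many terms $\abs{c_k(t)-c_k(t_0)}$ is small enough that the sum is below $\eps/2$. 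Hence $\norm{c(t)-c(t_0)} < \eps$ for $t$ near $t_0$, proving continuity in $\cX$.

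I do not anticipate any real obstacle: the argument is essentially the standard Dini plus triangle-inequality scheme used e.g.\ in~\cite[Proposition~3.1]{BCP86}. The only delicate point is to recognize that the nonnegativity of $c_k(t)$ (from Theorem~\ref{thm:stability}, applied to the solutions under consideration; for a general solution in the sense of Definition~\ref{def:sol} this follows from item~(1) of that definition) is what makes the partial sums monotone in $N$ and thus allows Dini's theorem. Without this, one would have to produce a uniform tail bound directly from the flux identity~\eqref{e:flux:identity}, which would be less clean.
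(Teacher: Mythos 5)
Your argument is correct and is essentially the paper's proof: the paper likewise applies Dini's theorem to the monotone partial sums $f_l(t)=\sum_{k=0}^l(1+k)c_k(t)$, using the mass conservation of Corollary~\ref{cor:convervation_laws} to identify the constant limit $1+\varrho$, and deduces continuity in $\cX$ from the continuity of the individual $c_k$ (your explicit tail-plus-finite-sum splitting just spells out that last step). Your remark that nonnegativity comes directly from item~(1) of Definition~\ref{def:sol} is also the right justification for monotonicity of the partial sums.
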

\begin{proof}
  For $l\geq 0$, the functions $f_l(t) = \sum_{k=0}^l (1+k)\, c_k(t)$ are continuous and monotone $f_{l+1}\geq f_l$ on $[0,T)$. By the mass conservation from Corollary~\ref{cor:convervation_laws}, it follows $\lim_{l\to \infty} f_l(t) = 1+ \varrho$. Hence, the uniform convergence statement is a consequence of Dini's theorem and the continuity of $c$ in $\cX$ is a consequence of the continuity of the individual $c_k$ for all $k\geq 0$.
\end{proof}
\begin{defn}[Semigroup/Generalized flow]\label{def:flows}
  A \emph{generalized flow} $G$ on a metric space $Y$ is a family of continuous mappings $\varphi: [0,\infty) \to Y$ such that
 \begin{enumerate}
   \item If $\varphi(\cdot)\in G$, so is for any $\tau>0$ also $\varphi(\cdot +\tau)\in G$.
   \item For all $y\in Y$ exists $\varphi(\cdot)\in G$ with $\varphi(0)= y$.
   \item If a family $\bra{\varphi^{j}}_{j\geq 0}\subset G$ satisfies $\varphi^{j}(0) \to \varphi(0)$ in $Y$, then there exists a subsequence $(j_k)_{k\geq 0}$ and $\varphi \in G$ such that $\varphi^{j_k}(t) \to \varphi(t)$ on compact time intervals.
  \end{enumerate}
  If $G$ is a generalized flow such that for each $y\in Y$ exists a \emph{unique} $\varphi \in G$ with $\varphi(0)=y$, then $G$ is called~\emph{semigroup}. In this case, for any $t\geq 0$ the mapping $T(t): Y\to Y$ defined by $T(t)y= \varphi(t)$ satisfies
  \begin{enumerate}
   \item $T(0) = \operatorname{Id}$ ,
   \item $T(t+s) = T(s) T(t)$ for all $t,s\geq 0$ ,
   \item $(t,y) \mapsto T(t) y$ is continuous from $[0,\infty) \times Y \to Y$.
  \end{enumerate}
\end{defn}
The fact that all solutions constitute a generalized flow with respect to the strong topology is shown along the lines of~\cite[Theorem 3.4]{BCP86}.
\begin{thm}[Generalized flow in strong topology]\label{thm:generalized_flow}
  Let $G$ the set of all solutions $c$ to~\eqref{e:master} with $c(0) \in \cP = \set{ c \in \cX^+ : \sum_{l\geq 0} c_l = 1}$. Then $G$ is a generalized flow on the subspace $\cP$ of $\cX$. 
\end{thm}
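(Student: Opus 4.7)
The plan is to verify the three properties of Definition~\ref{def:flows} for the set $G$ of solutions to~\eqref{e:master} started in $\cP$. Properties~(1) and~(2) come for free: any time-translate of a solution is again a solution, and existence for every $\bar c\in\cP$ is supplied by Theorem~\ref{thm:stability} applied with $\varrho$ equal to the first moment of $\bar c$. The real content is property~(3), which I address in what follows, broadly along the lines of~\cite[Theorem 3.4]{BCP86}.

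Given initial data $\bar c^{\,j} \to \bar c$ strongly in $\cX$, pick any corresponding solutions $c^{j}$. Strong convergence of the initial data delivers two critical ingredients: first, the first moments $\varrho^{j} := \sum_k k\,\bar c^{\,j}_k$ converge to $\varrho := \sum_k k\,\bar c_k$ and are in particular uniformly bounded; second, the family $\{(1+l)\bar c^{\,j}_l\}_j$ is uniformly integrable with respect to counting measure on $\N_0$. By the de la Vallée-Poussin Lemma~\ref{lem:ValleePoussin}, there exists a positive increasing superlinear sequence $(g_k)$ satisfying~\eqref{e:ass:a:g} (with $a_k = C_K k$) and $\sup_j \sum_k g_k\,\bar c^{\,j}_k < \infty$. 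Feeding $(g_k)$ into the stability estimate~\eqref{e:stability} then yields, for every $T>0$, a uniform bound $\sum_k g_k\, c^{j}_k(t) \leq C_T$ for all $j\in\N$, $t\in[0,T]$, and superlinearity of $g$ produces the uniform tail control
\[
 \sup_{j,\,t\in [0,T]} \sum_{l>L} (1+l)\, c^{j}_l(t) \;\leq\; \sup_{l>L} \frac{1+l}{g_l}\cdot C_T \;\xrightarrow{L\to\infty}\; 0.
\]

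Combined with the pointwise bound $c^{j}_k(t) \leq \varrho^{j}/k$ (from mass conservation and nonnegativity) and the derivative bound $|\dot c^{j}_k(t)| \leq C\,k$ coming from~\eqref{e:ass:K1} (analogous to~\eqref{e:dotcN:bd}), the family $t\mapsto c^{j}_k(t)$ is equicontinuous on $[0,T]$ for each fixed $k$. An Arzelà-Ascoli argument with a diagonal extraction then produces a subsequence (not relabeled) and continuous $c_k:[0,T]\to \R_{\geq 0}$ with $c^{j}_k \to c_k$ uniformly on $[0,T]$ for every $k\geq 0$. The uniform tail bound upgrades this pointwise convergence to uniform convergence of the birth and death rates $A_{k-1}[c^{j}(\cdot)]$ and $B_k[c^{j}(\cdot)]$ on $[0,T]$; passing to the limit in the integrated form~\eqref{e:def:sol:flux} therefore shows that the limit $c$ is a solution in the sense of Definition~\ref{def:sol} with $c(0)=\bar c$.

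Strong convergence on $[0,T]$ in $\cX$ follows by invoking Corollary~\ref{cor:convervation_laws}: the norms satisfy $\norm{c^{j}(t)} = 1+\varrho^{j} \to 1+\varrho = \norm{c(t)}$, which combined with the pointwise (i.e.\ weak-$*$) convergence forces strong convergence at each $t$ via Proposition~\ref{prop:X:properties}(2); a Dini-type argument built on the monotonicity of the partial sums (cf.~Proposition~\ref{prop:cont:solution}) yields uniformity in $t\in[0,T]$. A further diagonal extraction over $T_n\to\infty$ produces a single subsequence that converges to a solution $c$ on $[0,\infty)$, uniformly on compact subintervals in the strong topology of $\cX$. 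The central technical hurdle is the uniform tail control on the nonlinear reaction rates, and it is precisely here that the strong (as opposed to merely weak-$*$) convergence of the initial data is essential: without it, mass could leak to infinity along the sequence and the candidate limit would fail to remain in~$\cP$.
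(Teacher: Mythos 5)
Your proposal is correct and follows essentially the same route as the paper: de la Vallée-Poussin applied to the strongly convergent initial data, the stability estimate~\eqref{e:stability} for a uniform superlinear moment (hence tail) bound, Arzelà--Ascoli with diagonal extraction, passage to the limit in the integrated equation, and finally strong convergence in $\cX$ from norm convergence via the conservation laws and Proposition~\ref{prop:X:properties}. The only shared wrinkle (present in the paper as well) is that the estimate~\eqref{e:stability} is stated for the solutions constructed in Theorem~\ref{thm:stability}, so the argument as written tacitly restricts to those rather than fully arbitrary solutions in $G$.
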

\begin{proof}
  Any solution $c:[0,\infty) \to  \cP$ to~\eqref{e:master} is continuous by Proposition~\ref{prop:cont:solution}. The existence of a solution to initial data $\bar c \in \cP$ is a consequence of Theorem~\ref{thm:stability}.
  This shows the first two properties of a generalized flow in Definition~\ref{def:flows}. It is left to prove the third semicontinuity property. Hence, let $\bar c^{j}$ be a sequence of initial data converging to $\bar c$ in $\cX$. Now, let $c^{j}$ be a sequence of solutions with $c^{j}(0) = \bar c^{j}$. Since, $\bar c^j \to \bar c$ in $\cX$, it follows that $\cC = \set{\bar c^j}_{j\geq 0} \cup \bar c$ is tight in $\cX$ and there exists a superlinear function $(g_k)_{k\geq 0}$ satisfying~\eqref{e:ass:a:g} such that $\sup_{j} \sum_{k\geq 0} g_k \bar c_k^j < \infty$ based on the Lemma of de la Vallée-Poussin~\ref{lem:ValleePoussin}. Hence, the family $(c^j(\cdot))_{j}$ constructed in Theorem~\ref{thm:well-posed} is also tight satisfying the bound $\sup_{t\in [0,T)} \sup_j \sum_{k\geq 0} g_k c_k^j(t) <\infty$ for any $T>0$ by the stability estimate~\eqref{e:stability}.  Hence there is $c\in C([0,\infty),\cX^+)$ such that, after the possible extraction of a subsequence $(j_k)_{k \geq 0}$, $c^{j_k}_l(t) \to c_l(t)$ uniformly on $[0,T]$ for any $T>0$ and all $l\geq 0$. Then, by similar argument as in the proof of Theorem~\ref{thm:stability}, $c$ is a solution to~\eqref{e:master}. Moreover, the tightness of the family of solutions implies the conservation of the zeroth and first moment
  \[
    \sum_{l\geq 0} (l+1) \, c_l^{j_k}(t) = \sum_{l\geq 0} (l+1) \, c_l^{j_k}(0) \to \sum_{l\geq 0} (l+1) \, c_l(0) = \sum_{l\geq 0} (l+1) \, c_l(t) \quad\text{for $k\to \infty$}\ .
  \]
  In particular, convergence in $\cX$ on compact time intervals holds by Proposition~\ref{prop:X:properties}.
\end{proof}
Likewise all solutions constitute a generalized flow with respect to the weak$^*$ topology under the additional assumption of sublinear growth of the kernel, which is a result analog to~\cite[Theorem 3.5]{BCP86}.
\begin{thm}[Generalized flow in weak topology]\label{thm:generalized_flow:weak}
  Let the rates satisfy the sublinear growth assumption
  \begin{equation}\label{e:ass:sublinear}
    K(k,l-1) \leq a_k \, a_l  \ ,  \qquad\text{where }\quad \lim_{k\to \infty} \frac{a_k}{k} = 0 \ , \tag{$\mathrm{K}_1'$}
  \end{equation}
  and let $G$ the set of all solutions $c$ to~\eqref{e:master} with $c(0) \in \cX^+$.  Then for any $\varrho>0$ is $G$ a generalized flow on the compact metric space $(\cB^\varrho,\upd)$ with $\upd$ defined in~\eqref{e:def:upd}.
\end{thm}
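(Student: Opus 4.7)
The plan is to adapt the proof of Theorem~\ref{thm:generalized_flow} to the weak$^*$ setting, using the intrinsic $\upd$-compactness of $\cB^\varrho$ in place of the superlinear moment bound obtained from de la Vall\'ee-Poussin. Properties (1) and (2) of Definition~\ref{def:flows} are immediate: translation invariance follows from the autonomy of~\eqref{e:master}, and existence of a solution starting from any $\bar c \in \cB^\varrho$ is furnished by Theorem~\ref{thm:stability} applied with $\varrho_0 = \sum_{l\geq 1} l \bar c_l \leq \varrho$, since Corollary~\ref{cor:convervation_laws} keeps the resulting solution in $\cP^{\varrho_0} \subset \cB^\varrho$. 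The real work is the semicontinuity property (3).

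Given a sequence $c^j(0) \to \bar c$ in $(\cB^\varrho, \upd)$, let $c^j$ denote the corresponding solutions with $\varrho^j = \sum_l l c^j_l(0) \leq \varrho$. Conservation of the first moment yields the pointwise bounds $c^j_k(t) \leq \varrho/k$ for $k \geq 1$. The sublinear bound \eqref{e:ass:sublinear} gives $A_{k-1}[c^j(t)] \leq a_k \sum_{l\geq 1} a_l c^j_l(t)$, and splitting at some $L$ with $a_l \leq \eps\, l$ for $l \geq L$ bounds $\sum_l a_l c^j_l(t)$ uniformly in $j$ and $t$ by $\max_{l < L} a_l + \eps \varrho$; the analogous estimate handles $B_k[c^j(t)]$. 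Inserting these bounds into~\eqref{e:master} produces $\abs{\dot c^j_k(t)} \leq C_k$ uniformly in $j$, so each family $(c^j_k(\cdot))_j$ is equicontinuous on compact time intervals. A diagonal extraction via Arzel\`a-Ascoli then yields a subsequence $c^{j_n}_k(t) \to c_k(t)$ uniformly on compacts, for every $k \geq 0$.

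Passing to the limit in the integrated flux identity~\eqref{e:def:sol:flux} is the heart of the matter. At each fixed $s$, componentwise convergence plus the uniform norm bound $\norm{c^{j_n}(s)} \leq 1 + \varrho$ gives $c^{j_n}(s) \weakto c(s)$ in $\cX$ by Proposition~\ref{prop:X:properties}. For fixed $k$ the coefficient sequence $\alpha_l = K(l, k-1)$ satisfies $\alpha_l/l \leq a_k\, (a_l/l) \to 0$, so the same proposition makes $c \mapsto A_{k-1}[c]$ weak$^*$ continuous, and an identical argument handles $B_k$. Combined with the pointwise bounds on $c^{j_n}_k, c^{j_n}_{k+1}$ this yields pointwise convergence $J_k[c^{j_n}(s)] \to J_k[c(s)]$ together with a uniform envelope $\abs{J_k[c^{j_n}(s)]} \leq C(k,\varrho)$, so dominated convergence closes the time integral. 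Finally, $\sum_l c_l(t) = 1$ follows from weak$^*$ continuity of the zeroth moment (the constant sequence $\alpha_l \equiv 1$ lies in ${}^*\cX$), while $\sum_l l c_l(t) \leq \varrho$ follows from Fatou, placing $c(t) \in \cB^\varrho$; continuity of $t \mapsto c(t)$ in the $\upd$-metric reduces to componentwise continuity because tails are uniformly controlled by $\sum_{l \geq K} c_l(t) \leq \varrho/K$. The one step where \eqref{e:ass:sublinear} is truly indispensable is exactly the weak$^*$ continuity of the nonlinear rates $A_{k-1}$ and $B_k$: without $a_l/l \to 0$, first-moment mass could escape to infinity in the weak$^*$ limit and $J_k[c(s)]$ would no longer be identifiable as the pointwise limit of $J_k[c^{j_n}(s)]$, which is also the reason this phenomenon reappears as the phase transition in Theorem~\ref{thm:longtime}.
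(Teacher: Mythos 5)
Your proposal is correct and follows essentially the same route as the paper: equicontinuity of the components from the conservation laws and the (sub)linear growth bounds, a diagonal Arzelà--Ascoli extraction, and limit passage in the integrated equation using that $A_{k-1}[\cdot]$ and $B_k[\cdot]$ are weak$^*$ continuous functionals of the form~\eqref{e:potential:energy} precisely because of~\eqref{e:ass:sublinear}, via Proposition~\ref{prop:X:properties}. The only cosmetic difference is that you derive the uniform bound on $\dot c^j_k$ directly from the conservation laws for arbitrary solutions, whereas the paper quotes the corresponding estimate~\eqref{e:dotcN:bd} from the truncation argument; the substance is identical.
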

\begin{proof}
 Thanks to the existence from Theorem~\ref{thm:stability} it is left to prove (3) of Definition~\ref{def:flows}. Hence, let $\bar c^j \weakto \bar c$ as $j\to \infty$ and let $(c^j)_{g\geq 1}$ be the sequence of solutions to~\eqref{e:master} constructed in Theorem~\ref{thm:stability}. The solutions are uniformly absolutely continuous satisfying $\dot c_l^j(t) \leq C_K (\varrho+1)\varrho$ for all $l\geq 0$ by the bound~\eqref{e:dotcN:bd}. Hence, by the Arzelà-Ascoli theorem, there exists a diagonal sequence $(j_k)_{k\geq 1}$ such that $c^{j_k}_l \to c^{j_k}_l$ uniformly in $l\geq 0$ on compact time intervals, which implies that $\upd\bra*{c^{j_k}(t),c(t)}\to 0$ on compact time intervals. Also, this convergence implies that
 \[
   \sum_{l\geq 1} l c_l(t) \leq \liminf_{k\to\infty} \sum_{l\geq 1} l c_l^{j_k}(t) = \varrho 
 \]
 for all $t\geq 0$, so that $c(t)\in \cB^\varrho$. Finally, it is possible to pass to the limit in the equation, because the coefficients $A_k[\cdot]$ and $B_k[\cdot]$ are of the form \eqref{e:potential:energy} and hence weak$^*$ continuous thanks to the sublinear growth assumption~\eqref{e:linear:growth2} by Proposition~\ref{prop:X:properties}. 
\end{proof}
\subsection{Uniqueness, semigroup and positivity}\label{s:unique}
The uniqueness result is based on ideas from~\cite{LM02}. 
It requires to slightly enforce the Assumption~\eqref{e:linear:growth2} by additionally requiring some regularity on the exchange rates.
\begin{thm}[Uniqueness]\label{thm:unique}
  If $K$ satisfies Assumption~\eqref{e:ass:K2}, then for any $\bar c\in \cP^\varrho$ with $\varrho>0$ and all $T>0$ exists a unique solution $c$ of~\eqref{e:master} on $[0,T)$ satisfying $c(0) = \bar c$.
\end{thm}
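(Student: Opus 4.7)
The plan is to compare two solutions $c, \tilde c \in C([0,T);\cP^\varrho)$ emanating from the same initial datum $\bar c \in \cP^\varrho$ by deriving a Gronwall inequality for the weighted $\ell^1$-distance
\[
 \Psi(t) := \sum_{k\geq 0}(1+k)\,\abs*{c_k(t) - \tilde c_k(t)} \ .
\]
By Corollary~\ref{cor:convervation_laws} both solutions remain in $\cP^\varrho$, so $\Psi(t) \leq 2(1+\varrho) < \infty$ and by construction $\Psi(0) = 0$. Moreover the conservation laws yield $\sum_l e_l = 0$ and $\sum_l l\, e_l = 0$ for $e := c - \tilde c$, an orthogonality I plan to exploit below.

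First I would replace $\abs{\cdot}$ by the smooth approximation $\phi_\eps(x) = \sqrt{x^2+\eps^2}$, differentiate $\sum_k (1+k)\phi_\eps(e_k(t))$ along~\eqref{e:master:flux} with integrable right-hand side~\eqref{e:def:sol:bounded:birth-death}, and send $\eps \to 0$ via dominated convergence using $\abs{\phi_\eps'}\leq 1$. Using the decomposition
\[
 J_k[c] - J_k[\tilde c] = A_k[c]\, e_k - B_{k+1}[c]\, e_{k+1} + \delta A_k\, \tilde c_k - \delta B_{k+1}\, \tilde c_{k+1}
\]
with $\delta A_k := A_k[c] - A_k[\tilde c] = \sum_{l\geq 1} K(l,k)\, e_l$ and analogously $\delta B_k$, the resulting derivative splits into diagonal terms (linear in $e$) and bilinear cross terms involving $\tilde c$. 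Shifting indices in the diagonal part collapses it to $\sum_{k\geq 0} (A_k[c] - B_k[c])\abs{e_k}$, which is bounded by $C(\varrho)\Psi(t)$ using~\eqref{e:ass:K1} and $c \in \cP^\varrho$.

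The main obstacle will be handling the bilinear cross terms $\sum_k (1+k)(\abs{\delta A_k} + \abs{\delta B_k})\tilde c_k$. A crude pointwise bound $\abs{\delta A_k} \leq C(1+k)\Psi$ would force a second moment $\sum_k (1+k)^2 \tilde c_k$ that is not available in $\cP^\varrho$. To avoid this I would use~\eqref{e:ass:K2} together with the conservation laws to perform a discrete Abel summation in $l$, rewriting (with the convention $K(0,k)=0$)
\[
 \delta A_k = \sum_{l\geq 1} \bra*{K(l,k) - K(l-1,k)}\, E_l \ , \qquad E_l := \sum_{j\geq l} e_j \ ,
\]
so that the full kernel is replaced by its discrete increment, controlled by $C_K\cdot\min(k+1,l)$ via~\eqref{e:ass:K2}, while $\sum_l \abs{E_l} \leq \sum_l l\abs{e_l} \leq \Psi$ uses the first conservation law. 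A matching summation by parts in~$k$ against $\tilde c_k$ should then trade the remaining factor of $(1+k)$ for a finite increment of $\tilde c$-moments controlled by $\varrho$; an analogous manipulation handles $\delta B_k$. This is exactly where the Lipschitz structure of the kernel provided by~\eqref{e:ass:K2} — and the ideas of~\cite{LM02} — enter crucially.

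Combining the estimates yields $\dot\Psi \leq C(\varrho)\Psi$, and Gronwall with $\Psi(0)=0$ forces $c\equiv \tilde c$. The semigroup statement in Definition~\ref{def:flows} then follows immediately: given $t,s\geq 0$ and $\bar c \in \cP^\varrho$, uniqueness applied to the shifted solution $c(\cdot + t)$ with initial datum $c(t)$ gives $T(s+t)\bar c = T(s)T(t)\bar c$, while the joint continuity in $(t,\bar c)$ is inherited from the generalized flow property of Theorem~\ref{thm:generalized_flow}.
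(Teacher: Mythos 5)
Your overall toolbox is the right one (tail sums $E_l$, the increment bounds from~\eqref{e:ass:K2}, the idea of~\cite{LM02}), but the way you assemble it has a genuine gap: you run Gronwall on the weighted distance $\Psi(t)=\sum_k(1+k)\abs{e_k}$, and for that quantity the bilinear cross terms cannot be closed with the information available in $\cP^\varrho$. Your own Abel summation gives at best $\abs{\delta A_k}\leq C_K(k+1)\sum_l\abs{E_l}\leq C(k+1)\Psi$ (the bound $C_K\min(k+1,l)$ is not what~\eqref{e:ass:K2} yields for the increment in the first argument), so the cross contribution to $\dot\Psi$ is of the form $\sum_k(1+k)\abs{\delta A_k}\,\tilde c_k\lesssim \Psi\sum_k(1+k)^2\tilde c_k$, i.e.\ it needs a second moment of $\tilde c$, which a general element of $\cP^\varrho$ does not have. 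The proposed rescue — "a matching summation by parts in $k$ against $\tilde c_k$" — does not repair this: after summing by parts the factor multiplying $\delta A_k\tilde c_k$ is $(k+2)\operatorname{sgn}(e_{k+1})-(k+1)\operatorname{sgn}(e_k)$, which has no sign structure relative to $\delta A_k\tilde c_k$ and in general is of size $O(k)$, so one is back to a second-moment sum. (A smaller inaccuracy: the diagonal part does not simply "collapse" to $\sum_k(A_k[c]-B_k[c])\abs{e_k}$ in the weighted setting; it can be salvaged by tracking signs, but as written the claim is not correct.)

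The paper avoids this obstruction by choosing a different Gronwall quantity: not the weighted norm of $e$, but the \emph{unweighted} $\ell^1$ norm of the tail differences, $\sum_{k\geq 1}\abs{E_k}$ with $E_k=\sum_{j\geq k}e_j$. The point is that $\dot E_k=J_{k-1}[c]-J_{k-1}[d]$ is a single flux difference carrying no extra factor of $k$; decomposing it as $A_{k-1}[e]c_{k-1}-B_k[e]c_k+A_{k-1}[d](E_{k-1}-E_k)-B_k[d](E_k-E_{k+1})$, the Abel-summation bound $\abs{A_{k-1}[e]}\leq C_K\,k\sum_l\abs{E_l}$ is then paid for by the \emph{first} moment of $c$ (since $\sum_k k\,c_{k-1}\leq 1+\varrho$), and the differences $A_k[d]-A_{k-1}[d]$, $B_{k-1}[d]-B_k[d]$ are bounded by $C_K\varrho$ directly from~\eqref{e:ass:K2}; boundary terms vanish by mass conservation. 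This closes the Gronwall estimate with only the conserved moments, yields $E_k\equiv 0$ hence $c_k=d_k$ for $k\geq 1$, and the remaining component $c_0=d_0$ follows from the conservation laws of Corollary~\ref{cor:convervation_laws}. In short: you use the tail sums only inside the estimate of $\delta A_k$, whereas the proof needs them as the distance itself; with your choice of $\Psi$ the argument as outlined does not go through.
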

\begin{proof}
  Let $c$ be the solution to $\bar c$ constructed in Theorem~\ref{thm:stability} and $d$ another solution with the same initial datum $\bar c$. The core idea from~\cite{LM02} is to consider the tail distributions
  \[
    C_j(t) = \sum_{k\geq j} c_k(t) \qquad \text{and}\qquad D_j = \sum_{k\geq j} d_j(t) \ . 
  \]
  Proposition~\ref{prop:cont:solution} implies that $C \in C([0,T); \ell^1(\N_0))$, since 
  \[
    \sum_{j\geq 0} C_j = \sum_{k\geq 0} c_k \sum_{j=0}^k = \sum_{k\geq 0} (k+1)\, c_k = 1 + \varrho \ . 
  \]
  Furthermore, it holds $C_0(t) = 1 = D_0(t)$ for all $t \in [0,T)$. The differences
  \[
    E_k(t) = C_k(t) - D_k(t) = \sum_{j\geq k} e_j(t) \qquad\text{with}\qquad e_j(t) = c_j(t) - d_j(t) 
  \]
  satisfy by Theorem~\ref{thm:weak_form} applied with $g_k = 1$
  \begin{align*}
    \pderiv{E_k(t)}{t} &= J_{k-1}[c] - J_{k-1}[d]\\
    &=  A_{k-1}[e] \, c_{k-1} - B_{k}[e] \, c_{k}  + A_{k-1}[d] \, \bra*{E_{k-1}- E_k}  - B_{k}[d] \, \bra*{E_k - E_{k+1}}  \ .
  \end{align*}
  For any absolutely continuous function $\sigma : [0,T) \to \R$ holds by the chain rule $\pderiv{}{t} \abs{\sigma(t)} = \operatorname{sgn} \sigma(t) \; \dot \sigma(t)$ for a.e. $t\in [0,T)$. Hence, carefully tracking the signs results in the estimate
  \[
    \pderiv{\abs{E_k(t)}}{t} \leq \abs[\big]{A_{k-1}[e]} \, c_{k-1} + \abs[\big]{B_{k}[e]} \, c_{k} + A_{k-1}[d] \, \bra[\big]{\abs{E_{k-1}} - \abs*{E_k}}  + B_{k}[d] \, \bra[\big]{\abs*{E_{k+1}} - \abs*{E_k}} \ . 
  \]
  Summation gives the bound
  \begin{align*}
    \sum_{k=1}^N \pderiv{\abs{E_k(t)}}{t} &\leq \sum_{k=1}^N \bra*{ \abs[\big]{A_{k-1}[e]} \, c_{k-1} + \abs[\big]{B_k[e]} \, c_k} + \sum_{k=1}^N \abs{ E_k}  \bra[\big]{ A_{k}[d] - A_{k-1}[d]} - \abs{E_N} \, A_{N}[d] \\
    &\qquad + \sum_{k=1}^N \abs{E_k} \bra*{ B_{k-1}[d] - B_{k}[d]} + \abs*{E_{N+1}} \,  B_{N}[d]  \ ,
  \end{align*}
  where $E_0 = 0 = B_0[d]$ by definition. The Assumption~\eqref{e:ass:K2} implies that the kernel grows at most linearly $K(k,l-1)\leq C_K\, k\,l$, from which the estimate
  \[ 
    \abs*{E_{N+1}} \,  B_{N}[d] \leq  C_K \, (1+\varrho) \, N \sum_{j\geq N+1} \bra*{ \abs{c_j} + \abs{d_j}} \leq C_K (1+\varrho) \sum_{j \geq N+1} j \, \bra*{ \abs{c_j} + \abs{d_j}} \to 0 
  \]
  as $N\to \infty$ is obtained. The convergence statement is a consequence of the two conservation laws from Corollary~\ref{cor:convervation_laws} on compact time intervals. 
  
  The terms $A_{k-1}[e]$ and $B_k[e]$ are estimated using the identity
  \begin{align*}
    A_k[e] = \sum_{l\geq 1} K(l,k) \, \bra*{E_{l-1} - E_l} = \sum_{l \geq 1} E_l \, \bra[\big]{ K(l+1,k) - K(l,k)} ,
  \end{align*}
  which by Assumption~\eqref{e:ass:K2} implies
  \[
    \sum_{k=1}^N \abs[\big]{ A_{k-1}[e] } \, c_{k-1} \leq C_K \sum_{l\geq 1} \abs{E_l}  \sum_{k=1}^N k \, c_{k-1} \leq C_K \bra*{1+\varrho} \sum_{l \geq 1} \abs{E_l} \ . 
  \]
  A similar bound applies to $B_k[e]$. Again Assumption~\eqref{e:ass:K2} results in the bound
  \[
     A_{k}[d] - A_{k-1}[d] = \sum_{l\geq 1} \bra[\big]{ K(l,k)-K(l,k-1)} d_l \leq C_K \sum_{l\geq 1} l \, d_l = C_K \varrho
  \]
  and similarly the difference in $B_{k}[d]$. 
  In total, there is some constant $\tilde C = \tilde C(C_K, \varrho)$ such that after passing to the limit $N \to \infty$, the bound
  \[
    \pderiv{}{t} \sum_{k\geq 1} \abs*{E_k(t)} \leq \tilde C \sum_{k\geq 1} \abs*{E_k(t)} 
  \]
  is established, which shows $E_k(t) = 0$ for all $t\in [0,T)$ and $k\geq 1$. This implies that $c_j(t) = d_j(t)$ for all $j\geq 1$ and $t\in [0,T)$. The conservation laws from Corollary~\ref{cor:convervation_laws} imply $\norm{c(t)}= \norm{d(t)}$ and hence $c(t) =d(t)$ on $[0,T)$ in $\cX$ by Proposition~\ref{prop:X:properties}.
\end{proof}
In particular under the refined linear growth Assumption~\eqref{e:ass:K2} the constructed solutions in Theorem~\ref{thm:stability} are unique and generate a semigroup in the sense of Definition~\ref{def:flows}.
\begin{cor}[Semigroup]\label{cor:semigroup}
  Let $K$ satisfy~\eqref{e:ass:K2}, then the solutions to the exchange-driven growth dynamic~\eqref{e:master} are a semigroup on $\cP^\varrho$ in the strong topology for any $\varrho >0$. If in addition $K$ satisfies~\eqref{e:ass:sublinear}, then it constitutes a semigroup on $(\cB^\varrho,\upd)$.
\end{cor}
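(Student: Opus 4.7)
The plan is to combine the uniqueness of Theorem~\ref{thm:unique} with the generalized flow properties of Theorems~\ref{thm:generalized_flow} and~\ref{thm:generalized_flow:weak}, and then verify the three semigroup axioms of Definition~\ref{def:flows} one by one.

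First I would work in the strong setting. Under~\eqref{e:ass:K2} assumption, Theorem~\ref{thm:stability} supplies for each $\bar c \in \cP^\varrho$ a solution and Theorem~\ref{thm:unique} shows this solution is unique on every $[0,T)$, so the generalized flow~$G$ on $\cP^\varrho$ provided by Theorem~\ref{thm:generalized_flow} is in fact a semigroup. I would then define $T(t)\bar c :=c(t)$ where $c$ is the unique solution with $c(0)=\bar c$. The identity $T(0)=\mathrm{Id}$ is immediate. For the composition law $T(t+s)=T(s)T(t)$, I would use the autonomy property~(1) in Definition~\ref{def:flows}: $c(\cdot + t)$ is itself a solution, and by uniqueness it must coincide with the unique solution starting from $c(t)=T(t)\bar c$, evaluated at time $s$. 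This gives $T(t+s)\bar c = T(s) T(t) \bar c$.

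The remaining axiom, joint continuity of $(t,\bar c)\mapsto T(t)\bar c$, is the only non-cosmetic point. I would split it using the triangle inequality in $\cX$:
\begin{equation*}
  \norm{T(t_n)\bar c^{\,n} - T(t)\bar c} \leq \norm{T(t_n)\bar c^{\,n} - T(t_n)\bar c} + \norm{T(t_n)\bar c - T(t)\bar c}.
\end{equation*}
The second term tends to $0$ by Proposition~\ref{prop:cont:solution} (continuity of each orbit). For the first term, I would appeal to property~(3) of the generalized flow: any sequence $\bar c^{\,n}\to\bar c$ in $\cX$ produces (along a subsequence) convergence of the corresponding orbits to \emph{some} orbit starting at $\bar c$ uniformly on compact time intervals; by uniqueness that orbit is forced to be $T(\cdot)\bar c$, and a standard subsequence argument upgrades this to convergence of the full sequence. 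Since $t_n\to t$ lies eventually in a compact interval, $\norm{T(t_n)\bar c^{\,n} - T(t_n)\bar c}\to 0$.

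For the weak$^*$ part, I would repeat the same three steps on $(\cB^\varrho,\upd)$ using Theorem~\ref{thm:generalized_flow:weak} in place of Theorem~\ref{thm:generalized_flow} (uniqueness still applies because~\eqref{e:ass:K2} is assumed). The continuity of each orbit in $\upd$ follows from the continuity of each coordinate $c_k(\cdot)$ (Definition~\ref{def:sol}(i)) combined with dominated convergence, using that $c(t)\in\cB^\varrho$ makes the tail of $\upd$ uniformly small. Combined with the compactness of $(\cB^\varrho,\upd)$, joint continuity again reduces to the stability in property~(3) of Theorem~\ref{thm:generalized_flow:weak} together with orbit continuity. The one subtle point I expect is the passage from subsequential to full-sequence convergence in the stability argument, but this is immediate once the limit is uniquely determined by~$\bar c$.
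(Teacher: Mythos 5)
Your proposal is correct and follows essentially the same route as the paper: the corollary is exactly the combination of the generalized-flow statements (Theorems~\ref{thm:generalized_flow} and~\ref{thm:generalized_flow:weak}) with the uniqueness of Theorem~\ref{thm:unique}, which by Definition~\ref{def:flows} is precisely what it means to be a semigroup. The paper leaves the verification of the three mapping properties (identity, composition via time-shift invariance and uniqueness, joint continuity via orbit continuity plus the stability property and a subsequence argument) implicit in that definition, and your sketch fills them in the intended way.
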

The uniqueness theorem states that the solution to the truncated system~\eqref{e:master:trunc} converges strongly to the solution of~\eqref{e:master}, whose proof follows along the lines of~\cite[Theorem 3.9]{BCP86} and is omitted.
\begin{prop}\label{prop:strong:approx}
  Let $\bar c \in \cP^\varrho$ for some $\varrho>0$ and suppose that $K$ satisfies~\eqref{e:ass:K2}. Let $c^N$ be the solution of~\eqref{e:master:trunc} with initial data $\bar c^N_k = \bar c_k$ for $k=0,1,\dots, N$. Then as $N\to \infty$ it holds $c^N(t) \to c(t)$ in $\cX$ uniformly on compact time intervals of $[0,\infty)$ with $c$ the unique solution of~\eqref{e:master} on $[0,\infty)$ and $c(0) = \bar c$.
\end{prop}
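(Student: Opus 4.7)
The plan is to extract componentwise uniform-on-compacts convergence by an Arzelà–Ascoli argument exactly as in Theorem~\ref{thm:stability}, identify the limit via the uniqueness statement Theorem~\ref{thm:unique}, and then upgrade componentwise convergence to strong $\cX$-convergence by combining the conservation laws with a uniform superlinear moment bound.

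First, since $\bar c\in \cP^\varrho$, the Lemma de la Vallée-Poussin (Lemma~\ref{lem:ValleePoussin}) supplies a positive superlinearly increasing sequence $(g_k)_{k\geq 0}$ satisfying~\eqref{e:ass:a:g} with $\sum_k g_k \bar c_k < \infty$. Because $\bar c^N_k = \bar c_k$ for $k\leq N$ and vanishes otherwise, $\sum_k g_k \bar c^N_k \leq \sum_k g_k \bar c_k$ uniformly in $N$, and the truncated stability estimate proved inside Theorem~\ref{thm:stability} yields for any $T>0$ the $N$-uniform bound
\[
  M_g := \sup_{N\geq 1} \sup_{t\in[0,T]} \sum_{k\geq 0} g_k \, c^N_k(t) < \infty.
\]
The a-priori derivative bound~\eqref{e:dotcN:bd} then gives equicontinuity of $(c^N_k(\cdot))_{N\geq k}$ for each $k\geq 0$, so the diagonal Arzelà–Ascoli procedure from the proof of Theorem~\ref{thm:stability} produces a subsequence $(c^{N_n})$ with $c^{N_n}_k(t) \to \tilde c_k(t)$ uniformly on compact subintervals of $[0,\infty)$ for every $k$, and the limit $\tilde c$ is a solution of~\eqref{e:master} with $\tilde c(0)=\bar c$. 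Under Assumption~\eqref{e:ass:K2} Theorem~\ref{thm:unique} forces $\tilde c = c$, and the usual subsequence-of-subsequence argument then upgrades this to convergence of the \emph{full} sequence: $c^N_k(t)\to c_k(t)$ uniformly on $[0,T]$ for every $k\geq 0$.

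The main step is to upgrade this componentwise convergence to strong convergence in $\cX$, uniformly on $[0,T]$. By Proposition~\ref{prop:X:properties}, strong $\cX$-convergence follows from weak$^*$ convergence together with convergence of norms. The bound $M_g$ and componentwise convergence give $c^N(t)\weakto c(t)$ in $\cX$ for each $t$, while the conservation of zeroth and first moments from Proposition~\ref{prop:trunc:system} combined with Corollary~\ref{cor:convervation_laws} gives
\[
  \norm{c^N(t)} = \sum_{k=0}^N (1+k) \, \bar c_k \xrightarrow{N\to\infty} 1+\varrho = \norm{c(t)},
\]
independently of $t\in[0,T]$. Hence strong $\cX$-convergence holds at each fixed $t$.

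The only delicate point is making this convergence uniform in $t\in[0,T]$. For any $L\geq 1$, I would split
\[
  \norm{c^N(t)-c(t)} \leq \sum_{l=0}^L (1+l)\, \abs[\big]{c^N_l(t) - c_l(t)} + \sum_{l>L} (1+l)\,\bra*{c^N_l(t) + c_l(t)}.
\]
The tail is controlled by $2 M_g \, \sup_{l>L}(1+l)/g_l$, which tends to zero as $L\to\infty$ by the superlinearity $g_k/k \to \infty$ and is uniform in $N$ and $t$; for any fixed $L$ the head tends to zero uniformly on $[0,T]$ by the componentwise uniform convergence from the previous paragraph. Choosing $L$ first and then $N$ sufficiently large concludes the proof. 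Apart from this tail estimate, the argument is a direct reassembly of the Arzelà–Ascoli extraction of Theorem~\ref{thm:stability}, the uniqueness of Theorem~\ref{thm:unique}, and the strong/weak$^*$ characterization of Proposition~\ref{prop:X:properties}.
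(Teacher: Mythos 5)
Your proposal is correct and follows essentially the route the paper intends (the paper omits this proof, referring to \cite[Theorem 3.9]{BCP86}): Arzelà--Ascoli extraction of the truncated solutions with the uniform superlinear moment bound from Lemma~\ref{lem:ValleePoussin} and the stability estimate~\eqref{e:stability}, identification of the limit via Theorem~\ref{thm:unique} and a subsequence argument, and the head/tail splitting to get strong $\cX$-convergence uniformly on compact time intervals. The only step left implicit is that the limit $c$ itself inherits the bound $\sum_{k\geq 0} g_k\, c_k(t)\leq M_g$ (by Fatou along the convergent subsequence, or from~\eqref{e:stability} for the unique solution), which is what your $2M_g$ tail estimate uses.
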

The last property of solutions to \eqref{e:master} under Assumption~\eqref{e:ass:K2} is their strict positivity for positive times, provided the kernel is strictly positive, too. This result is analog to the one in~\cite[Theorem 4.6]{BCP86}.
\begin{prop}\label{prop:positive}
  Suppose $K: \N\times \N_0 \to [0, \infty)$ is strictly positive and satisfies~\eqref{e:ass:K1}. Let $c$ be a solution to~\eqref{e:master} on some interval $[0,T)$ with $0< T \leq \infty$ such that $c_m(0) >0$ for some $m \geq 1$. Then, it holds $c_k(t) > 0$ for all $k\geq 0$ and $t\in (0,T)$. Moreover, for any $0< t_0 < t < T$, it holds the quantitative lower bound
  \begin{equation}\label{e:lower:bound}
    c_k(t) \geq c_k(t_0) \exp\bra[\big]{ - C_K \, (2\varrho + 1) \, (k+1) \, (t-t_0) } \qquad\text{ for } k\geq 0 \ . 
  \end{equation}
\end{prop}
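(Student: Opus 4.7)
The plan is to combine an integrating-factor argument, which directly yields the quantitative bound \eqref{e:lower:bound}, with a two-sided induction on $k$ that propagates strict positivity away from the initial support. Rewriting \eqref{e:master} in birth-death form gives
\[
\dot c_k = A_{k-1}[c]\, c_{k-1} + B_{k+1}[c]\, c_{k+1} - \bra*{A_k[c] + B_k[c]}\, c_k,
\]
with the convention $A_{-1} \equiv 0 \equiv B_0$. Assumption~\eqref{e:ass:K1} yields $K(l,k) \leq C_K(k+1)\, l$ and $K(k,l) \leq C_K\, k\,(l+1)$, so the conservation laws of Corollary~\ref{cor:convervation_laws} give the pointwise bound
\[
A_k[c(t)] + B_k[c(t)] \leq C_K(k+1)\varrho + C_K\, k\,(1+\varrho) \leq C_K\,(2\varrho+1)\,(k+1).
\]

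Setting $R_k(t) := \int_0^t \bra*{A_k[c(s)] + B_k[c(s)]}\,\dx{s}$, which is finite by \eqref{e:def:sol:bounded:birth-death}, the integrated birth-death form produces the Duhamel identity
\[
c_k(t)\, e^{R_k(t)} = c_k(t_0)\, e^{R_k(t_0)} + \int_{t_0}^t \bra*{A_{k-1}[c(s)]\, c_{k-1}(s) + B_{k+1}[c(s)]\, c_{k+1}(s)}\, e^{R_k(s)}\,\dx{s} .
\]
Dropping the nonnegative integral and using the just-derived upper bound on $R_k(t) - R_k(t_0)$ immediately yields~\eqref{e:lower:bound}.

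For strict positivity, applying the quantitative bound at $t_0 = 0$ already gives $c_m(t) > 0$ on $[0,T)$ since $c_m(0) > 0$. The induction proceeds outward from $m$: assume $c_j(t) > 0$ on $(0,T)$ for some $j \geq 1$. Keeping only the $A_j c_j$ branch in the Duhamel identity for $c_{j+1}$ gives
\[
c_{j+1}(t)\, e^{R_{j+1}(t)} \geq \int_0^t A_j[c(s)]\, c_j(s)\, e^{R_{j+1}(s)}\,\dx{s},
\]
and the strict positivity of $K$ forces $A_j[c(s)] \geq K(j,j)\, c_j(s) > 0$ for $s \in (0,t)$, so the integrand is strictly positive and hence $c_{j+1}(t) > 0$ on $(0,T)$. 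Symmetrically, keeping only the $B_j c_j$ branch in the Duhamel identity for $c_{j-1}$ and using $B_j[c(s)] \geq K(j,j)\, c_j(s) > 0$ forces $c_{j-1}(t) > 0$ on $(0,T)$. Starting at $j = m \geq 1$, upward induction covers all $k \geq m$ and downward induction covers $k = m-1,\ldots,0$, completing the proof.

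The main subtlety is tracking which branch of the Duhamel identity carries positivity at each induction step, together with the observation that only the diagonal kernel entries $K(j,j)$ are actually invoked; strict positivity of $K$ is used nowhere else. The integrating-factor step is routine since $c_k$ is absolutely continuous by \eqref{e:def:sol:flux} and $R_k$ is absolutely continuous by \eqref{e:def:sol:bounded:birth-death}, so the pointwise differentiation producing $\tfrac{\upd}{\upd t}(c_k e^{R_k}) = (A_{k-1} c_{k-1} + B_{k+1} c_{k+1}) e^{R_k}$ holds almost everywhere and integrates back to the Duhamel form.
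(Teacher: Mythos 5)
Your proof is correct, and for the positivity part it takes a genuinely different route from the paper. The quantitative bound is obtained essentially as in the paper: both arguments rest on the integrating-factor identity with $R_k=\int (A_k[c]+B_k[c])$ and the pointwise bound $A_k[c]+B_k[c]\leq C_K(2\varrho+1)(k+1)$ coming from \eqref{e:ass:K1} and the conservation laws of Corollary~\ref{cor:convervation_laws} (the paper phrases this as Gronwall applied to $\dot c_l\geq -\theta_l c_l$, you phrase it as dropping the nonnegative Duhamel integral -- same content). For strict positivity, however, the paper argues by contradiction and propagates \emph{backward}: if $c_l(\tau)=0$ for some $\tau>0$, the integrated identity forces $c_l(0)=0$ and $A_{l-1}[c]c_{l-1}=B_{l+1}[c]c_{l+1}=0$ on $[0,\tau]$, and an induction (using positivity of \emph{all} kernel entries to conclude from $A_{l-1}[c(\tau)]=0$ or $B_{l+1}[c(\tau)]=0$ that every component vanishes) drives the whole initial datum to zero, contradicting $c_m(0)>0$. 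You instead propagate \emph{forward} from the positive component $c_m$: keeping a single gain branch of the Duhamel identity, $A_j[c]\geq K(j,j)c_j$ pushes positivity up to $c_{j+1}$ and $B_j[c]\geq K(j,j)c_j$ pushes it down to $c_{j-1}$. This direct two-sided induction avoids the paper's case analysis, and as you note it only invokes the diagonal entries $K(j,j)$, $j\geq 1$, so it in fact establishes the conclusion under a weaker positivity hypothesis than ``$K$ strictly positive'' (a point worth making explicit if one wanted to sharpen the statement, e.g.\ toward the relaxed positivity discussed after Assumption~\ref{ass:BDA}). The only cosmetic remark is that you apply \eqref{e:lower:bound} at $t_0=0$, which is outside the range stated in the proposition, but your own Duhamel identity with lower limit $0$ justifies exactly that use, so nothing is missing.
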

\begin{rem}\label{rem:vacuum}
  The assumption $c_m(0)>0$ for some $m\geq 1$ is crucial due to the vacuum state $c_k^{\operatorname{vac}} = \delta_{k,0}$, which is an element of $\cP^0$, i.e.~having mass density $\varrho=0$. In particular, the assumption is satisfied for any $c\in \cP^\varrho$ with $\varrho>0$. 
\end{rem}
\begin{proof}
  The proof follows by contradiction. First, the case $c_0(\tau)=0$ for some $\tau \in (0,T)$ is considered. Integrating the equation $\dot c_0(t) = - A_0[c(t)] \, c_0(t) + B_1[c(t)] \, c_1(t)$ on $(0,\tau)$ gives the identity
  \begin{align*}
    0 &= c_0(\tau)\,  \exp\bra*{\int_0^\tau A_0[c(s)] \dx{s}} \\
    &= c_0(0) + \int_0^\tau \exp\bra*{ \int_0^t A_0[c(s)]\dx{s}}\, B_1[c(t)] \, c_1(t) \dx{t} \ . 
  \end{align*}
  Hence, $c_0(0)=0$ and  $B_1[c(t)] c_1(t)=0$ for a.e.~$t\in (0,\tau)$ and by the continuity property from Proposition~\ref{prop:cont:solution} also for all $t\in [0,\tau]$. If $B_1[c(\tau)]=0$, then $c_l(\tau)  = 0$ for all $l\geq 0$ from the positivity of the rates. In the case $B_1[c(\tau)] >0$ follows $c_1(\tau)=0$ and the argument is contained in the case considered below.
  
  Let $c_l(\tau) = 0$ for some $l\geq 1$ and some $\tau \in (0,T)$. The equation
  \begin{align}\label{e:dotc:theta}
    \dot c_l(t)&= \bra*{ A_{l-1}[c(t)] \, c_{l-1}(t) + B_{l+1}[c(t)]\, c_{l+1}(t)} - \theta_l(t) \,c_l(t) \\
    \text{with}\quad \theta_l(t) &= A_l[c(t)] + B_l[c(t)] \ ,
  \end{align}
  integrates to
  \begin{equation}\label{e:zero:p0}
  \begin{split}
    0 &= c_l(\tau) \exp\bra*{ \int_0^\tau \theta_l(s) \dx{s} } \\
    &= c_l(0) + \int_0^\tau \exp\bra*{ \int_0^t \theta_l(s) \dx{s}}\, \bra[\big]{ A_{l-1}[c(t)] \, c_{l-1}(t) + B_{l+1}[c(t)] \, c_{l+1}(t)} \dx{t} \ . 
    \end{split}
  \end{equation}
  Again, it follows that $c_l(0)=0$ and $A_{l-1}[c(t)] c_{l-1}(t) = 0 = B_{l+1}[c(t)] c_{l+1}(t)$ for all $t\in [0,\tau]$. Hence, $A_{l-1}[c(\tau)]=0$ entails $c_k(\tau)=0$ for all $k\geq 1$ or $c_{l-1}(\tau)=0$. Likewise, $B_{l+1}[c(\tau)]=0$ implies $c_k(\tau)= 0$ for all $k\geq 0$ or $c_{l+1}(\tau)=0$. Both cases imply $c_{l-1}(\tau)=0 =c_{l+1}(\tau)$. Then, by induction follows that $c_k(\tau) = 0$ for all $k\geq 0$, which by~\eqref{e:zero:p0} implies $c_k(0)=0$ for all $k\geq 0$. This is a contradiction to the assumption that $c_m(0)>0$ for some $m\geq 1$.
  The lower bound~\eqref{e:lower:bound} follows now from bounding~\eqref{e:dotc:theta} from below. The Assumption~\eqref{e:ass:K1} implies
  \begin{equation}\label{e:unique:p0}
    A_k[d] \leq C_K \, \varrho \, (k+1) \qquad\text{and} \qquad B_k[d] \leq C_K \, \bra*{\varrho +1} \, k \ ,
  \end{equation}
  which in~\eqref{e:dotc:theta} leads to the lower bound. 
  \[
    \dot c_l(t) \geq - \theta_l(t) \, c_l(t) \geq - C_k \, (2\varrho +1 ) \, (k+1) \, c_l(t) \ .
  \]
  The claim~\eqref{e:lower:bound} is now a consequence of the Gronwall Lemma.
\end{proof}
\section{Convergence to equilibrium}\label{s:equilibrium}
\subsection{The Lyapunov function and equilibria}\label{s:lyap}
The goal of this section is to show that all equilibria for the evolution~\eqref{e:master} on $\cP^\varrho$ with $0< \varrho \leq \varrho_c$ are given by $\omega^\varrho$ in~\eqref{def:omega}. The observation that the nonlinear birth death rates~\eqref{e:bdrates} also satisfy a detailed balance condition is useful in this context.
\begin{lem}[Detailed balance for nonlinear birth and death rates]\label{lem:DBC:AB}
  Suppose $K$ satisfies~\eqref{e:ass:K1} and~\eqref{e:ass:BDA}, then for all $\varrho<\infty$ with $0<\varrho\leq \varrho_c$ and $\phi =\phi(\varrho)\in (0,\phi_c]$ uniquely defined through~\eqref{e:def:phi:varrho} holds
  \begin{equation}\label{e:DBC:AB:omega}
    \frac{A_{k-1}[\omega^\varrho] }{B_{k}[\omega^\varrho]} = \frac{\phi(\varrho) \, K(1,k-1)}{K(k,0)} = \frac{\phi(\varrho) \, Q_k}{Q_{k-1}} = \frac{\omega_{k}^\varrho}{\omega_{k-1}^\varrho}\qquad \text{ for all $k\geq 1$} \ .
  \end{equation}
\end{lem}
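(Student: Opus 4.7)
The plan is to verify the chain of four expressions by reading from right to left, disposing of the easy equalities first, and then closing the loop with a single detailed-balance manipulation to identify the outermost pair. The two rightmost equalities are immediate from the definitions. From $\omega_k^\varrho = Z(\phi)^{-1}\phi^k Q_k$, the ratio $\omega_k^\varrho/\omega_{k-1}^\varrho$ is at once $\phi\,Q_k/Q_{k-1}$. And $Q_k/Q_{k-1} = K(1,k-1)/K(k,0)$ falls straight out of the telescoping definition \eqref{def:Q}. Neither step uses more than the formulas.

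The content of the lemma lies in the first equality $A_{k-1}[\omega^\varrho]/B_k[\omega^\varrho] = \phi\,Q_k/Q_{k-1}$. I would prove this by applying the detailed balance condition \eqref{e:Q:DBC} termwise to the series
\[
  A_{k-1}[\omega^\varrho] = Z(\phi)^{-1}\sum_{l\geq 1} K(l,k-1)\,\phi^l\,Q_l.
\]
The identity $K(l,k-1)\,Q_l = K(k,l-1)\,Q_k\,Q_{l-1}/Q_{k-1}$ lets me pull out the prefactor $\phi\,Q_k/Q_{k-1}$, and the index shift $l\mapsto m=l-1$ then converts the remaining sum $\sum_{l\geq 1} K(k,l-1)\,\phi^{l-1}\,Q_{l-1}$ into $\sum_{m\geq 0} K(k,m)\,\phi^m\,Q_m = Z(\phi)\,B_k[\omega^\varrho]$. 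Cancelling $Z(\phi)$ yields the claim.

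The only technical check is that the series are absolutely convergent so the reindexing is legitimate and $B_k[\omega^\varrho]>0$ so the quotient is defined. Assumption \eqref{e:ass:K1} gives $K(l,k-1)\leq C_K\,k\,l$ and $K(k,l)\leq C_K\,k\,(l+1)$, and since $\omega^\varrho\in\cP^\varrho$ has unit zeroth moment and first moment $\varrho$, both $A_{k-1}[\omega^\varrho]$ and $B_k[\omega^\varrho]$ are bounded by $C_K\,k\,(1+\varrho)$, so all rearrangements are absolute. Strict positivity of $K$ from Assumption~\ref{ass:BDA} together with $\omega_0^\varrho = Z(\phi)^{-1} > 0$ forces $B_k[\omega^\varrho] > 0$, so division is permitted. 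I do not foresee a genuine obstacle: the whole proof is the curl-free computation encoded by \eqref{e:Q:DBC}, with the shift of summation index (lining up the start $l\geq 1$ in $A_{k-1}$ with $m\geq 0$ in $B_k$) being the only piece of bookkeeping worth double-checking.
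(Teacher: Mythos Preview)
Your proof is correct and follows essentially the same route as the paper's. The only cosmetic difference is that the paper applies the assumption in its~\eqref{e:ass:BDA} form, rewriting $K(k,0)K(l,k-1)=K(1,k-1)K(k,l-1)K(l,0)/K(1,l-1)$ and then using $K(l,0)\omega_l^\varrho=\phi K(1,l-1)\omega_{l-1}^\varrho$, whereas you invoke the equivalent~\eqref{e:Q:DBC} form $K(l,k-1)Q_l=K(k,l-1)Q_kQ_{l-1}/Q_{k-1}$ directly; both are the same curl-free identity and lead to the same index shift.
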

\begin{proof}
  By the definition~\eqref{def:Q} of $(Q_k)_{k\geq 0}$ holds $K(k,0) \omega_k^\varrho = \phi(\varrho) K(1,k-1)\omega_{k-1}^\varrho$ and hence~\eqref{e:ass:BDA} allows to write
  \begin{align*}
    K(k,0) \, A_{k-1}[\omega^\varrho] &= \sum_{l\geq 1} \frac{K(1,k-1) \, K(k,l-1)}{K(1,l-1)}  \, K(l,0) \,  \omega^\varrho_l \\
    &= \phi(\varrho) \, K(1,k-1) \sum_{l\geq 1} K(k,l-1) \, \omega_{l-1}^\varrho = \phi(\varrho) \, K(1,k-1) \, B_k[\omega^\varrho] \ . 
  \end{align*}
  The result follows from noting that \eqref{def:Q} and \eqref{e:def:omega} imply
  \[
    \omega_k^\varrho / \omega_{k-1}^\varrho = \phi(\varrho) K(1,k-1)/K(k,0) \ .  \qedhere
  \]
\end{proof}
The main tool for the proof of convergence to equilibrium is the Lyapunov function from~\eqref{e:def:Lyapunov}, which is split for the following discussion into an entropy part and potential part
\begin{equation}\label{e:split:Lyapunov}
    \cF[c] = \cS[c] - \sum_{k\geq 0} k\, c_k \log Q_k^{\nicefrac{1}{k}} \qquad\text{with}\qquad \cS[c] = \sum_{k\geq 0} c_k \log c_k  \ . 
\end{equation}
For $\varrho<\infty$ with $0< \varrho \leq \varrho_c$, $\cF$ is actually equivalent to the relative entropy between $c\in \cP^\varrho$ and $\omega^\varrho$, since with $\phi = \phi(\varrho)$ chosen according to~\eqref{e:def:phi:varrho} it holds
\begin{equation}\label{e:RelEnt:F}
  \cH[c | \omega^\varrho] = \sum_{k\geq 0} c_k \log \frac{c_k}{Z(\phi)^{-1} \phi^k Q_k} = \cF[c] + \log Z(\phi) - \varrho \, \log \phi \ .
\end{equation}
The weak$^*$ continuity does not hold in general for $\cF$, but for the relative entropy with respect to the maximal density equilibrium $\omega(\phi_c)$ by an application of the criteria from Proposition~\ref{prop:X:properties}. The following result is analog to \cite[Proposition 4.5]{BCP86} with the only difference that also the number density is fixed. 
\begin{prop}\label{prop:weak:cont:free:energy}
  The relative entropy $\cH[ \cdot | \omega(\phi)]$ is weak$^*$ continuous on $\cB^\varrho$ for any $\varrho>0$ if and only if 
  \begin{equation}\label{e:weak:cont:entropy}
    \lim_{l \to \infty} Q_l^{\nicefrac{1}{l}} = \phi_c^{-1} \in (0,\infty), \qquad Z(\phi_c)<\infty \qquad\text{and}\qquad \phi = \phi_c \ . 
  \end{equation}
\end{prop}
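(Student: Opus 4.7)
The plan is to decompose
\[
  \cH[c|\omega(\phi)] = \cS[c] + \log Z(\phi) + \cA_\phi[c], \quad \cS[c] := \sum_{k\geq 0} c_k\log c_k, \quad \cA_\phi[c] := -\sum_{k\geq 0} c_k\log\bra*{\phi^k Q_k},
\]
which uses only $\sum_k c_k = 1$. The linear piece $\cA_\phi$ is of the form in Proposition~\ref{prop:X:properties}: with $\alpha_k := -k\log\phi - \log Q_k$, we have $\alpha_k/k \to \log(\phi_c/\phi)$ whenever $Q_k^{1/k}\to \phi_c^{-1}\in(0,\infty)$. Hence $\cA_\phi$ is weak$^*$ continuous on $\cB^\varrho$ exactly when $\phi_c\in(0,\infty)$ and $\phi=\phi_c$; this observation is the source of both directions of the equivalence.

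For sufficiency, assume the three conditions, so that $\omega^*:=\omega(\phi_c)$ is a probability measure and $\cA_{\phi_c}$ is weak$^*$ continuous. I would then establish upper semicontinuity of $\cS$ and lower semicontinuity of $\cH[\cdot|\omega^*]$ by two parallel Fatou arguments. For the first, if $c^j\weakto c$ in $\cB^\varrho$ then $c_k^j\to c_k$ pointwise, and since $-x\log x\geq 0$ on $[0,1]$, Fatou gives $\liminf_j\sum_k(-c_k^j\log c_k^j)\geq \sum_k(-c_k\log c_k)$, i.e.\ $\cS$ is usc. For the second, write $\cH[c|\omega^*] = \sum_k\omega_k^*\,\psi(c_k/\omega_k^*)$ with $\psi(x):=x\log x-x+1\geq 0$ (using $\sum c_k=\sum\omega_k^*=1$), and apply Fatou to the nonnegative summands $\omega_k^*\,\psi(c_k^j/\omega_k^*)$. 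Combining with $\cH = \cS + \log Z(\phi_c) + \cA_{\phi_c}$ expresses $\cH$ as the sum of an usc function and a weak$^*$ continuous one, hence usc; together with the already-established lsc this delivers continuity.

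For necessity, the test sequence is $c^j := (1-\varrho/j)\delta_0 + (\varrho/j)\delta_j$ for $j>\varrho$, which lies in $\cP^\varrho\subset \cB^\varrho$ and satisfies $c^j\weakto\delta_0$. A direct calculation using $\omega_j(\phi) = Z(\phi)^{-1}\phi^j Q_j$ and $\log Q_j/j\to -\log\phi_c$ yields
\[
  \cH[c^j|\omega(\phi)] \xrightarrow{j\to\infty} \log Z(\phi) + \varrho\log(\phi_c/\phi),
\]
whereas $\cH[\delta_0|\omega(\phi)] = \log Z(\phi)$. Weak$^*$ continuity therefore forces $\varrho\log(\phi_c/\phi) = 0$ for every $\varrho > 0$, i.e.\ $\phi = \phi_c$. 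The same sequence rules out $\phi_c\notin(0,\infty)$: if $\phi_c = \infty$ then $-(\varrho/j)\log Q_j\to +\infty$ and the limit diverges, while $\phi_c = 0$ forces $\phi = 0$ and $\omega(\phi) = \delta_0$, for which $\cH[c^j|\delta_0] = +\infty$. Finally, once $\phi = \phi_c$ is known, requiring $\omega(\phi)$ to be a probability measure gives $Z(\phi_c)<\infty$.

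The main obstacle is the careful matching of Fatou-type one-sided estimates for $\cS$ and $\cH$, which point in opposite directions; the algebraic decomposition couples them, and identifying the weak$^*$ continuous linear piece via Proposition~\ref{prop:X:properties} is exactly what pins down $\phi=\phi_c$ together with $\phi_c^{-1}\in(0,\infty)$. A minor side issue is that $\cS[c]$ and $\sum_k c_k\log Q_k$ could a priori be $-\infty$ on $\cB^\varrho$, but Gibbs' inequality applied against a geometric reference measure gives a uniform lower bound $\cS[c]\geq -\varrho - C$ on $\cB^\varrho$, ensuring that all quantities in the decomposition are finite whenever the three conditions hold.
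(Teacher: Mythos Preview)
Your proof is correct and starts from the same decomposition $\cH = \cS + \log Z(\phi) + \cA_\phi$ as the paper, but the two arguments diverge from there. The paper invokes directly that $\cS$ is weak$^*$ continuous on $\cB^\varrho$ (citing \cite[Lemma~4.2]{BCP86}), which reduces the entire question to the weak$^*$ continuity of the linear piece $\cA_\phi$; both directions of the equivalence then follow in one line from the ``if and only if'' characterization in Proposition~\ref{prop:X:properties}. You instead establish only the upper semicontinuity of $\cS$ (via Fatou applied to $-x\log x \geq 0$) and, separately, the lower semicontinuity of $\cH$ (via Fatou applied to the nonnegative summands $\omega_k^*\,\psi(c_k/\omega_k^*)$), and combine these through the decomposition for sufficiency; for necessity you construct the explicit test sequence $c^j = (1-\varrho/j)\delta_0 + (\varrho/j)\delta_j$ rather than appealing to the ``only if'' direction of Proposition~\ref{prop:X:properties}. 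Your route is more self-contained and makes transparent why $\phi=\phi_c$ is forced, but the paper's is considerably shorter once the full weak$^*$ continuity of $\cS$ is taken as a black box---and since that result is used repeatedly elsewhere in the paper (e.g.\ in the proof of Lemma~\ref{lem:cF:c}), there is no real economy in avoiding it here.
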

\begin{proof}
  The relative entropy expands to
  \[
    \cH[c | \omega(\phi) ] = \sum_{k\geq 0} c_k \log c_k - \sum_{k=0} c_k \log\bra*{ \phi^k Q_k} + \log Z(\phi) \ .
  \]
  The first entropy term is weak$^*$ continuous on $\cX^+$ by~\cite[Lemma 4.2]{BCP86} and the second is of the form~\eqref{e:potential:energy} with $\alpha_k(\phi) = - \log\bra*{ \phi^k Q_k}$. Hence the sufficient and necessary condition reads \[
    \frac{1}{k}\log\bra*{ \phi^k Q_k} = \log\bra*{ \phi Q_k^{\nicefrac{1}{k}}} \to 0 \qquad\text{if and only if} \qquad \phi = \phi_c \ , 
  \]
  by~\eqref{e:weak:cont:entropy}. Finally, the condition $Z(\phi_c)<\infty$ ensures that the relative entropy is indeed well-defined in this case.
\end{proof}
Lower semicontinuity of the free energy on $\cX^+$ is needed to prove the free energy dissipation relation. In addition, the following result proves continuity in the case $\phi_c\in (0,\infty)$.
\begin{lem}[Strong continuity of the free energy]\label{lem:cF:c}
  Suppose Assumptions~\eqref{e:ass:BDA} and~\eqref{e:ass:Kc} hold.
  \begin{enumerate}
   \item If $\phi_c \in (0,\infty)$, then for any $\varrho>0$ and $c^j \in \cP^\varrho$ such that $c^j \to c$ in $\cX$, it holds
    \begin{equation}\label{e:cF:cont}
      \lim_{j\to \infty} \cF[c^j] = \cF[c] \ . 
    \end{equation}
   \item If $\phi_c = \infty$, then for any $\varrho>0$ and $c^j \in \cP^\varrho$ such that $c^j \to c$ in $\cX$ with $\cF[c]< \infty$, it holds
     \begin{equation}\label{e:cF:cont_lower}
       \liminf_{j\to \infty} \cF[c^j] \geq \cF[c] \ . 
      \end{equation}
      Additionally, for $c\in \cP^\varrho \cap \set{\cF <\infty}$ let $c^N = (c_1,\dots,c_N,0,\dots)$ its truncation at $N\geq 1$. Then it holds
      \begin{equation}\label{e:cF:cont:phi_c}
        \lim_{N\to\infty} \cF[c^N] = \cF[c] \ . 
      \end{equation}
  \end{enumerate}
\end{lem}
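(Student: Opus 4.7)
The plan is to split the free energy according to~\eqref{e:split:Lyapunov}, writing $\cF[c] = \cS[c] + \cU[c]$ with $\cS[c] = \sum_{k\geq 0} c_k \log c_k$ the Shannon entropy and $\cU[c] = -\sum_{k\geq 1} c_k \log Q_k$ the potential contribution (using $Q_0=1$), and to treat the two pieces separately. For the Shannon part I would prove strong continuity on $\cP^\varrho$ via pointwise convergence combined with a \emph{uniform} tail estimate. Strong convergence $c^j \to c$ in $\cX$ gives componentwise convergence $c^j_k \to c_k$ and, crucially, the uniform integrability $\sup_j \sum_{k>N} k\, c^j_k \to 0$ as $N\to\infty$. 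Fixing any $\alpha>1$ and partitioning the index tail into $\{c_k \leq k^{-\alpha}\}\cup\{c_k > k^{-\alpha}\}$, one obtains $|c_k\log c_k| \leq \alpha k^{-\alpha}\log k$ on the first set (summable independently of the sequence) and $|c_k\log c_k|\leq \alpha\, c_k \log k \leq \alpha\, k\, c_k$ on the second. Consequently $\sup_j \sum_{k>N} |c^j_k \log c^j_k|\to 0$, and a standard $\varepsilon/3$ argument (finite head plus two tails) yields $\cS[c^j]\to \cS[c]$.

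In case \emph{(1)} with $\phi_c\in(0,\infty)$, the asymptotic~\eqref{e:Qk:phi_c} provides the two-sided bound $|\log Q_k|\leq C(1+k)$ for some $C>0$, so that
\[
  |\cU[c^j] - \cU[c]| \leq C \sum_{k\geq 1}(1+k)|c^j_k - c_k| \longrightarrow 0,
\]
which combined with the strong continuity of $\cS$ establishes~\eqref{e:cF:cont}.

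In case \emph{(2)} with $\phi_c=\infty$, the entropy part remains continuous, but $-\log Q_k$ grows superlinearly. Choosing $k_0$ with $-\log Q_k\geq 0$ for $k\geq k_0$, Fatou's lemma applied to the nonnegative tail $\sum_{k\geq k_0}(-c^j_k \log Q_k)$, together with convergence of the finite head $\sum_{k<k_0} c^j_k \log Q_k$, gives $\liminf_j \cU[c^j] \geq \cU[c]$, and~\eqref{e:cF:cont_lower} follows by adding the entropy. For the truncation identity~\eqref{e:cF:cont:phi_c}, interpreting $c^N$ as $c^N_k = c_k \mathbf{1}_{k\leq N}$, one has $\cF[c] - \cF[c^N] = \sum_{k>N} c_k \log(c_k/Q_k)$; the entropy portion of this tail vanishes by the dichotomy above (since $c\in \cP^\varrho$ forces $\cS[c]>-\infty$), and the potential portion $\sum_{k>\max(N,k_0)}(-c_k\log Q_k)$ vanishes because the finiteness of $\cF[c]$ forces absolute convergence of this eventually nonnegative series.

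The main obstacle is obtaining the uniform tail bound for $\cS$: pointwise convergence together with bounded first moments is in general not enough for continuity of Shannon entropy on $\cP^\varrho$, and the dichotomy inequality $|x\log x| \leq \alpha k^{-\alpha}\log k + \alpha k x$ for $x\in[0,1]$ is the key device that converts $\ell^1$-plus-first-moment control into the required tail smallness.
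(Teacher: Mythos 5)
Your proof is correct and follows the paper's overall strategy: the same splitting~\eqref{e:split:Lyapunov} of $\cF$ into the Shannon entropy $\cS$ and a potential part, with part \emph{(1)} handled exactly as in the paper via the linear bound $\abs{\log Q_k}\leq C(1+k)$ coming from~\eqref{e:Qk:phi_c}. Where you deviate is in the sub-steps. For the continuity of $\cS$ the paper simply cites \cite[Lemma~4.2]{BCP86} (weak$^*$ continuity of the entropy on $\cB^\varrho$), whereas you prove the needed strong continuity from scratch via the dichotomy $c_k\leq k^{-\alpha}$ versus $c_k>k^{-\alpha}$ combined with the uniform integrability supplied by strong convergence in $\cX$; this is a valid, self-contained replacement (it yields only strong continuity, but that is all this lemma needs — note the weak$^*$ version is still used elsewhere in the paper). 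For the liminf in part \emph{(2)} the paper runs a quantitative two-tail $\eps$-argument using the uniform bound $\sup_k\max\set{\log Q_k^{\nicefrac{1}{k}},0}=\log\overbar Q<\infty$ together with tightness of $(c^j)$, whereas you split off a finite head and apply Fatou's lemma to the eventually nonnegative potential tail; this is shorter and needs only that $-\log Q_k\geq 0$ for $k\geq k_0$, which follows from $\phi_c=\infty$. Your treatment of the truncation limit~\eqref{e:cF:cont:phi_c} as the tail of a convergent series (absolute convergence of the entropy series on $\cP^\varrho$, convergence of the eventually nonnegative potential series forced by $\cF[c]<\infty$) is likewise correct and matches the paper's estimate in substance. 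In short: same decomposition and same part \emph{(1)}, but more elementary, citation-free arguments for the entropy continuity and a Fatou shortcut for the lower semicontinuity, at the cost of some redundancy with results the paper imports from \cite{BCP86}.
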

\begin{proof}
  The proof uses several times the splitting~\eqref{e:split:Lyapunov} and the result from~\cite[Lemma 4.2]{BCP86} that $\cS$ is finite and weak$^*$ continuous on $\cB^\varrho \subset \cX^+$ for any $\varrho>0$. In particular, it holds $\lim_{j\to \infty} \cS[c^j] = \cS[c]$ if $c^j \to c$ in $\cX$. 
  
  The proof of the (lower semi-)continuity of the second term in the splitting~\eqref{e:split:Lyapunov} splits into the two cases $\phi_c\in (0,\infty)$ and $\phi_c=\infty$. 
  
  \medskip
  
  \noindent\emph{Proof of (1):} The Assumption $\phi_c\in (0,\infty)$ implies that $\lim_{k\to \infty} \log Q_k^{\nicefrac{1}{k}} = - \log \phi_c \in (-\infty,\infty)$ and in particular $\sup_{l\geq 1} \abs[\big]{\log Q_l^{\nicefrac{1}{l}}}<\infty$. Hence, it holds the estimate
  \[
    \lim_{j\to \infty} \abs[\bigg]{ \sum_{k\geq 1} k \, c_k^j \, \log Q_k^{\nicefrac{1}{k}}- \sum_{k\geq 1} k \, c_k \, \log Q_k^{\nicefrac{1}{k}}} \leq \sup_{l\geq 1} \abs[\big]{\log Q_l^{\nicefrac{1}{l}}} \ \sum_{k\geq 1} k \, \abs[\big]{c_k^j - c_k} \to 0 \quad\text{as } j\to \infty \ . 
  \] 
  \noindent\emph{Proof of (2):} The assumption $\cF[c]<\infty$ and the fact that $\cS$ is bounded from below on $\cP^\varrho$ for any $\varrho>0$ by~\cite[Theorem 4.4]{BCP86} imply for any $c\in \cP^\varrho$ the estimate
   \begin{equation}\label{e:cF:cont:p1}
     - \sum_{k\geq 0} k \, c_k \log Q_k^{\nicefrac{1}{k}} \leq \cF[c] - \inf_{c\in \cP^\varrho} \cS[c] < \infty  \ .
   \end{equation}
   The Assumption~\eqref{e:ass:Kc} yields the estimate $\sup_{k\geq 0} \max\set[\big]{ \log Q_k^{\nicefrac{1}{k}},0 } = \log \overbar Q <\infty$ for some $\overbar Q\geq 1$, implying the bound
   \begin{equation}\label{e:cF:cont:p2}
     - \sum_{k\geq 0} k \, c_k \log Q_k^{\nicefrac{1}{k}} \geq - \varrho \log \overbar Q  > -\infty \ . 
   \end{equation}
   Hence, for any $\eps>0$, there exists $M_1=M_1(\eps)$ such that $\abs[\big]{\sum_{k\geq M_1} k \, c_k \log Q_k^{\nicefrac{1}{k}}} \leq \frac{\eps}{4}$. Since $c^j \to c$ in $\cX$, there exists $M_2=M_2(\eps)$, such that $\sum_{k\geq M_2} k \, c_k^j \leq \frac{\eps}{4 \log \overbar Q}$ uniformly in $j$. 
   Altogether, for any $M\geq \max\set{M_1,M_2}$ holds the bound
   \begin{align*}
     - \sum_{k\geq 0} k\, \bra[\big]{ c_k^j - c_k} \log Q_k^{\nicefrac{1}{k}} &\geq - \sum_{k=0}^{M-1} k\, \bra[\big]{ c_k^j - c_k} \log Q_k^{\nicefrac{1}{k}} -  \log \overbar Q \sum_{k\geq M} k \, c_k^j  - \abs[\bigg]{\sum_{k\geq M} k \, c_k \log Q_k^{\nicefrac{1}{k}}} \\
     &\geq - \sup_{0\leq l \leq M-1} \abs[\big]{\log Q_l^{\nicefrac{1}{l}}} \ \sum_{k=0}^{M-1} k \, \abs[\big]{c_k^j - c_k} - \frac{\eps}{2} \ . 
   \end{align*}
   Now, using once more $c^j \to c$ in $\cX$, there exists $j$ large enough such that the first term becomes bounded from below by $-\eps /2$.
   
   Since $c^N \to c$ in $\cX$, it also holds $\cS[c^N]\to \cS[c]$ as before. By the splitting~\eqref{e:split:Lyapunov}, the result follows from~\eqref{e:cF:cont:p1} and~\eqref{e:cF:cont:p2} by noting that
   \[
     \abs*{ \cF[c^N] - \cF[c]} \leq \abs*{\cS[c^N] - \cS[c]} + \abs[\bigg]{ \sum_{k\geq N+1} k \, c_k \log Q_k^{\nicefrac{1}{k}}} \to 0 \qquad\text{as } N\to \infty \ .
   \]
\end{proof}
\begin{prop}[Free energy dissipation relation]\label{prop:EED}
  Suppose the Assumptions~\eqref{e:ass:K1}, \eqref{e:ass:BDA} and~\eqref{e:ass:Kc} hold.
  Let $c$ be the solution constructed in Theorem~\ref{thm:stability} to some initial data $\bar c \in \cP^\varrho$ for some $\varrho>0$ with $\cF[\bar c] < \infty$. Then, for all $t\in [0,T)$ it holds 
  \begin{equation}\label{e:FED}
    \cF[c(t)] + \int_0^t \cD[c(s)] \dx{s} \leq \cF[\bar c] \ ,
  \end{equation} 
  where the dissipation is given by
\begin{equation}\label{e:def:dissipation}
  \cD[c] = \frac{1}{2} \sum_{k\geq 1} \sum_{l\geq 1} \psi_{\Boltz}\bra[\big]{ K(k,l-1) \, c_k \, c_{l-1}, K(l,k-1) \, c_{l} \, c_{k-1}} 
\end{equation}
with $\psi_{\Boltz}(a,b) = \bra*{a-b}\bra*{ \log a - \log b}$. 
\end{prop}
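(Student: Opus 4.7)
The plan is to derive an exact free-energy dissipation identity for the truncated dynamics~\eqref{e:master:trunc}, integrate in time, and then pass to the limit $N\to\infty$ along the subsequence $N_n$ produced in the proof of Theorem~\ref{thm:stability}. Only nonnegativity of the dissipation together with the (lower semi)continuity of $\cF$ from Lemma~\ref{lem:cF:c} will be needed, which is why a one-sided inequality results without any equi-integrability input.

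The core algebraic step is a chain rule for the truncated system. Using the boundary conditions $J^N_{-1}\equiv 0\equiv J^N_N$, mass conservation $\sum_k \dot c^N_k = 0$ removes the additive constant in the functional derivative, and a discrete Abel summation turns $\pderiv{}{t}\cF[c^N]$ into $\sum_{k=0}^{N-1} J^N_k[c^N]\log\bigl(c^N_{k+1}Q_k/(c^N_k Q_{k+1})\bigr)$. Expanding $J^N_k$ via~\eqref{e:net:flux:trunc} and re-indexing $m=k+1$ rewrites this as a double sum over $m,l\in\{1,\dots,N\}$ of terms $(a_{l,m}-b_{l,m})\log\bigl(c^N_m Q_{m-1}/(c^N_{m-1}Q_m)\bigr)$ with $a_{l,m}=K(l,m-1)c^N_l c^N_{m-1}$ and $b_{l,m}=K(m,l-1)c^N_m c^N_{l-1}$. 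The swap $(m,l)\leftrightarrow(l,m)$ interchanges $a$ and $b$, so symmetrizing and invoking~\eqref{e:Q:DBC} in the form $Q_{m-1}Q_l/(Q_m Q_{l-1}) = K(m,l-1)/K(l,m-1)$ collapses the symmetrized log-factor to exactly $\log(b_{l,m}/a_{l,m})$, yielding
\begin{equation*}
  \pderiv{}{t}\cF[c^N(t)] = -\cD^N[c^N(t)], \qquad \cD^N[c^N] := \frac{1}{2}\sum_{k,l=1}^N \psi_{\Boltz}(a_{l,k},b_{l,k}) \geq 0 .
\end{equation*}
Vanishing entries are handled either by strict positivity on $(0,T)$ in the truncated system or by an $\eps$-regularization $c^N\mapsto c^N+\eps$ with $\eps\downarrow 0$, using the continuous extension $0\log 0 = 0$ and lower semicontinuity of $\psi_{\Boltz}$ on $[0,\infty)^2$. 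Integrating in time then gives the exact truncated identity $\cF[c^N(t)] + \int_0^t\cD^N[c^N(s)]\dx{s} = \cF[c^N(0)]$.

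For the limit, along the subsequence $N_n$ from Theorem~\ref{thm:stability} one has $c^{N_n}_k(s)\to c_k(s)$ uniformly on compact intervals in $s$ for each $k$, and since both moments are preserved, $\norm{c^{N_n}(s)} = 1+\varrho = \norm{c(s)}$, so Proposition~\ref{prop:X:properties} upgrades the pointwise convergence to strong convergence in $\cX$. Lemma~\ref{lem:cF:c} then gives $\cF[c^{N_n}(0)]\to\cF[\bar c]$---by strong continuity when $\phi_c<\infty$, and by the truncation statement~\eqref{e:cF:cont:phi_c} when $\phi_c=\infty$, which precisely uses the assumption $\cF[\bar c]<\infty$---and also $\liminf_n\cF[c^{N_n}(t)]\geq\cF[c(t)]$. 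Because $\psi_{\Boltz}\geq 0$ is lower semicontinuous on $[0,\infty)^2$, two applications of Fatou's lemma (first over the index pair $(k,l)$, then over $s\in[0,t]$) yield $\int_0^t\cD[c(s)]\dx{s} \leq \liminf_n\int_0^t\cD^{N_n}[c^{N_n}(s)]\dx{s}$. Combining these with the truncated identity and passing to $\liminf_n$ produces~\eqref{e:FED}.

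The principal technical hurdles I anticipate are (i) the symmetrization step where~\eqref{e:Q:DBC} is used to fold the log-factors into $\psi_{\Boltz}$, which must remain valid at points where some $c^N_k=0$; and (ii) the case $\phi_c=\infty$, where the potential part of $\cF$ is no longer weak$^*$ continuous, so strong convergence in $\cX$ together with the hypothesis $\cF[\bar c]<\infty$ must be used in tandem with the truncation continuity supplied by Lemma~\ref{lem:cF:c}.
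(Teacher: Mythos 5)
Your proposal is correct and follows essentially the same route as the paper: the same exact dissipation identity for the truncated system (chain rule, summation by parts, symmetrization via~\eqref{e:Q:DBC}), positivity of the truncated solution to justify the logarithms, and then passage to the limit along the subsequence from Theorem~\ref{thm:stability} using strong convergence in $\cX$, Lemma~\ref{lem:cF:c} for both $\cF[\bar c^{N_n}]\to\cF[\bar c]$ and $\liminf_n\cF[c^{N_n}(t)]\geq\cF[c(t)]$, and a Fatou/monotonicity argument for the dissipation. The only cosmetic difference is that you offer an $\eps$-regularization as an alternative to the positivity argument and phrase the dissipation limit as a double Fatou rather than via the monotonicity $\cD^N\geq\cD^m$ for $m<N$, which amounts to the same thing.
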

\begin{proof}
  For $N\geq 1$, the truncated Lyapunov function and  dissipation are defined by
  \begin{align*}
    \cF^N[c] &= \sum_{k=0}^N c_k \log \frac{c_k}{Q_k} \ , \\
    \cD^N[c] &= \frac{1}{2} \sum_{k=1}^N \sum_{l=1}^N \psi_{\Boltz}\bra[\big]{K(k,l-1) \, c_k \, c_{l-1}, K(l,k-1) \, c_l \, c_{k-1}} \ . 
  \end{align*}
  Let $c^N$ be the solution of the truncated system~\eqref{e:master:trunc} to the truncated initial data $\bar c^N_k = \bar c_k$ for $k=0,1,\dots, N$.
  Since $\bar c\in \cP^\varrho$ for some $\varrho>0$, there exists $\bar c_m >0$ for some $m\geq 1$ and hence also $\bar c_m^N>0 $ for $N$ sufficiently large. By the same argument as in Proposition~\ref{prop:positive}, it holds that $c_k^N(t_0)>0$ for all $k = 0,1,\dots , N$ and any $t_0 >0$. From here the free energy dissipation relation for the truncated system can be calculated by using that the relation~\eqref{e:def:net:flux} holds similarly for the truncated system, a summation by parts noting that $J_{-1}^N[c]= 0 = J_N^N[c]$ and symmetrization of the sum
  \begin{align*}
    \pderiv{}{t} \cF^N[c^N] &= \sum_{k=0}^N \log \frac{c_k^N}{Q_k}  \dot c_k = \sum_{k=0}^N \log \frac{c_k^N}{Q_k} \bra*{J_{k-1}^N[c] - J_k^N[c]} \\
    &= - \sum_{k=1}^N \sum_{l=1}^N  \bra*{ \log \frac{c_{k-1}^N}{Q_{k-1}} - \log \frac{c_{k}^N}{Q_{k}}} \bra*{ j_{l,k-1}[c^N] - j_{k,l-1}[c^N]} \\
    &= - \frac{1}{2} \sum_{k=1}^N \sum_{l=1}^N \bra*{ \log \frac{c_{k-1}^N c_l^N}{Q_{k-1}Q_l} - \log \frac{c_{k}^N c_{l-1}^N}{Q_{k}Q_{l-1}}}  \bra*{ j_{l,k-1}[c^N] - j_{k,l-1}[c^N]} \\
    &= - \frac{1}{2} \sum_{k=1}^N \sum_{l=1}^N \psi_{\Boltz}\bra*{ j_{l,k-1}[c^N], j_{k,l-1}[c^N]} = - \cD^N[c^N] \ .
  \end{align*}
  Hereby, the last identity is a consequence of~\eqref{e:Q:DBC}. Now, along the subsequence $(N_n)$, for which $c^{N_n} \weakto c$ as in the proof of Theorem~\ref{thm:stability}, holds the energy estimate
  \[
    \cF[c^{N_n}(t)] + \int_0^t \cD^{N_n}[c^{N_n}(s)] \dx{s} = \cF[\bar c^{N_n}] \ .
  \]
  Since $\cF[\bar c]< \infty$ and $\bar c^{N_n}\to \bar c$ in $\cX$, Lemma~\ref{lem:cF:c} implies $\lim_{n\to \infty} \cF[\bar c^{N_n}] = \cF[\bar c]$. Since $\cD^{N}[c^{N}] \geq \cD^{m}[c^N]$ for all $m< N$ is holds 
  \[
    \liminf_{n\to \infty} \int_0^t \cD^{N_n}[c^{N_n}(s)] \dx{s} \geq \int_0^t \cD[c(s)] \dx{s} \ . 
  \]
  Finally, by Corollary~\ref{cor:convervation_laws} and Proposition~\ref{prop:trunc:system} follows for all $t>0$ and all $N$
  \[
    \norm{c^N(t) } = \sum_{k\geq 0} (k+1) \, c_k^N(t) = \varrho + 1  = \sum_{k\geq 0} (k+1) \, c_k(t) = \norm{c(t)} \ .
  \]
  Hence, $c^N(t) \xrightharpoonup{*} c(t)$ and $\norm{c^N(t)} = \norm{c(t)}$ for all $t\geq 0$, which by Proposition~\ref{prop:X:properties} implies that $c^N(t) \to c(t)$ in $\cX$. Finally, Lemma~\ref{lem:cF:c} yields that $\liminf_{N\to \infty} \cF[c^N(t)] \geq \cF[c(t)]$ as $N\to \infty$ for all $t\in (0,\infty)$.
\end{proof}
\begin{prop}[Stationary states]\label{prop:stationary}
  Suppose the Assumptions~\eqref{e:ass:K2}, \eqref{e:ass:BDA} and~\eqref{e:ass:Kc} hold. Then, the stationary states are characterized by:
  \begin{enumerate}
   \item For $\varrho <\infty$ with $0< \varrho \leq \varrho_c$, the equilibrium state $\omega^\varrho$ defined in~\eqref{def:omega} are unique on $\cP^\varrho\cap \set{\cF < \infty}$. 
   \item For $\varrho_c< \varrho < \infty$, there exists no equilibrium state on $\cP^\varrho\cap \set{\cF < \infty}$.
  \end{enumerate}
\end{prop}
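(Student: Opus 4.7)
The plan is to combine the free energy dissipation relation from Proposition~\ref{prop:EED} with the detailed balance condition~\eqref{e:Q:DBC} to characterize every stationary state. If $c\in \cP^\varrho \cap \set{\cF<\infty}$ with $\varrho>0$ is a stationary state, the unique solution of~\eqref{e:master} starting from $c$ is constant, so~\eqref{e:FED} applied at any $t>0$ reads $\cF[c]+t\,\cD[c] \leq \cF[c]$ and forces $\cD[c]=0$. Since each summand $\psi_{\Boltz}(K(k,l-1)\,c_k\, c_{l-1},\, K(l,k-1)\,c_l\, c_{k-1})$ in~\eqref{e:def:dissipation} is non-negative and vanishes only when both arguments coincide, this yields the pointwise detailed balance identity
\[
 K(k,l-1)\,c_k\,c_{l-1} = K(l,k-1)\,c_l\,c_{k-1} \qquad\text{for all }k,l\geq 1.
\]

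Before using~\eqref{e:Q:DBC} to reduce this to a product form, strict positivity $c_k>0$ for every $k\geq 0$ must be established. Stationarity and $J_{-1}[c]\equiv 0$ give $J_k[c]=0$ for all $k\geq 0$, i.e.\ $A_{k-1}[c]\,c_{k-1}=B_k[c]\,c_k$. Using the strict positivity of $K$ built into~\eqref{e:ass:BDA}, the propagation-of-zeros reasoning familiar from Proposition~\ref{prop:positive} applies: if $c_0=0$, then inductively every $c_k=0$, contradicting $\sum_k c_k=1$; if some $c_m=0$ with $m\geq 1$, then $A_{m-1}[c]\,c_{m-1}=0$ forces either $c_{m-1}=0$ (reducing to the previous case by backward induction) or $A_{m-1}[c]=0$, which in turn forces $c_l=0$ for all $l\geq 1$ and contradicts $\varrho>0$. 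Dividing the detailed balance for $c$ by~\eqref{e:Q:DBC} then yields
\[
 \frac{c_k}{Q_k}\,\frac{c_{l-1}}{Q_{l-1}}=\frac{c_{k-1}}{Q_{k-1}}\,\frac{c_l}{Q_l} \qquad\text{for all } k,l\geq 1,
\]
and setting $l=1$, recalling $Q_0=Q_1=1$, produces $c_k/Q_k=\phi\,c_{k-1}/Q_{k-1}$ with $\phi=c_1/c_0>0$. An induction then gives the product form $c_k=c_0\,\phi^k\,Q_k$ for all $k\geq 0$.

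The two constraints from $\cP^\varrho$ now pin down $c$: the normalization $\sum_k c_k=1$ requires $Z(\phi)<\infty$ and fixes $c_0=Z(\phi)^{-1}$, so $c=\omega(\phi)$ in the notation of~\eqref{e:def:omega}; the mass constraint $\sum_{k\geq 1}k\,\omega_k(\phi)=\varrho$ then pins $\phi$ down thanks to the strict monotonicity~\eqref{e:phi:strict:monotone} and the definition of $\varrho_c$. Hence for $0<\varrho\leq \varrho_c$ there is a unique admissible $\phi=\phi(\varrho)\in[0,\phi_c]$ and $c=\omega^\varrho$ from~\eqref{def:omega}; conversely, $\omega^\varrho$ really is stationary, since~\eqref{e:Q:DBC} and Lemma~\ref{lem:DBC:AB} make every $J_k[\omega^\varrho]$ vanish. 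This proves (1). For (2), when $\varrho_c<\varrho<\infty$ no value of $\phi\in[0,\phi_c]$ realises the prescribed mass, so no stationary state in $\cP^\varrho\cap\set{\cF<\infty}$ can exist.

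The main technical hurdle is the strict positivity step, since, unlike in Proposition~\ref{prop:positive}, the stationary setting offers no time variable to integrate against, and the coupling of the $c_k$ through the nonlinear rates $A_{k-1}[c]$ and $B_k[c]$ must be handled by the case analysis sketched above, relying crucially on both the strict positivity of $K$ in~\eqref{e:ass:BDA} and the mass bound $\varrho>0$.
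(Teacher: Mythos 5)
Your proof is correct and follows essentially the same route as the paper: under~\eqref{e:ass:K2} the solution is unique, so stationarity together with the free energy dissipation relation~\eqref{e:FED} forces $\cD[c]=0$, hence pointwise detailed balance, and the characterization then follows from the construction of $\omega^\varrho$ in~\eqref{e:def:omega}. The paper compresses the last step into one sentence, whereas you supply the details it leaves implicit (the propagation-of-zeros positivity argument and the induction to the product form $c_k=c_0\phi^k Q_k$), which is a faithful elaboration rather than a different approach.
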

\begin{proof}
  The Assumption~\eqref{e:ass:K2} entails that the system~\eqref{e:master} has a unique solution by Theorem~\ref{thm:unique}, which additionally satisfies the free energy dissipation relation~\eqref{e:FED} by Proposition~\ref{prop:EED}. Hence $c^*$ is a stationary state if and only if $\cD[c^*] = 0$. The definition of the dissipation~\eqref{e:def:dissipation} shows that $c^*$ satisfies~\eqref{e:Q:DBC}, since $\psi_{\Boltz}(a,b)=0$ if and only if $a=b$. The statement of the proposition follows from the construction of $\omega^\varrho$ in~\eqref{e:def:omega}. 
\end{proof}
\begin{thm}[{Free energy minimizer \cite[Theorem 4.4]{BCP86}}]
  Suppose the Assumptions \eqref{e:ass:K1}, \eqref{e:ass:BDA} and~\eqref{e:ass:Kc} hold.
\begin{itemize}
 \item  Let $\varrho< \infty$ and $0\leq \varrho \leq \varrho_s$. Then $\omega^\varrho$ defined in~\eqref{def:omega} is the unique minimizer of $\cF[c]$ and $\cH[c| \omega^\varrho]$ over $\cP^\varrho$ and every minimizing sequence converges strongly to $\omega^\varrho$ in $\cX$.
 \item Let $\varrho > \varrho_c$. Then 
 \[
   \inf_{c\in \cP^\varrho} \cH[ c | \omega^{\varrho_c}] = 0 
 \]
 and any minimizing sequence converges weak$^*$ to $\omega^{\varrho_c}$ in $\cX$, but not strongly.
\end{itemize}
\end{thm}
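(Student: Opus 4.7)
The plan is to convert the minimization of $\cF$ into that of a relative entropy and then exploit a Csiszár--Kullback--Pinsker bound together with the criterion for strong convergence in $\cX$ from Proposition~\ref{prop:X:properties}. For any $\varrho \leq \varrho_c$, pick $\phi := \phi(\varrho) \in [0,\phi_c]$ from \eqref{e:def:phi:varrho}. Identity \eqref{e:RelEnt:F} yields $\cH[c\,|\,\omega^\varrho] = \cF[c] + \log Z(\phi) - \varrho \log \phi$ on $\cP^\varrho$, so $\cF$ and $\cH[\,\cdot\,|\,\omega^\varrho]$ share the same minimizers and minimizing sequences; I would work with the relative entropy throughout.

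In the subcritical regime $\varrho \leq \varrho_c$, Gibbs' inequality applied to the two probability vectors $c$ and $\omega^\varrho$ on $\N_0$ gives $\cH[c\,|\,\omega^\varrho]\geq 0$ on $\cP^\varrho$ with equality if and only if $c = \omega^\varrho$; since $\omega^\varrho \in \cP^\varrho$ by construction of $\phi(\varrho)$, uniqueness of the minimizer is immediate. For any minimizing sequence $(c^j)$, the Csiszár--Kullback--Pinsker inequality delivers $\sum_l \abs{c^j_l - \omega^\varrho_l} \leq \sqrt{2\,\cH[c^j\,|\,\omega^\varrho]} \to 0$, and in particular $c^j_l \to \omega^\varrho_l$ for each $l$; combined with the uniform bound $\norm{c^j} = 1+\varrho$, Proposition~\ref{prop:X:properties}(1) gives weak$^*$ convergence in $\cX$. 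Since also $\norm{c^j} = 1+\varrho = \norm{\omega^\varrho}$, Proposition~\ref{prop:X:properties}(2) upgrades this to strong convergence in $\cX$.

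For the supercritical regime $\varrho > \varrho_c$, recall from the introduction that $\varrho_c < \infty$ forces $\phi_c < \infty$ and $Z(\phi_c) < \infty$, so $\omega^{\varrho_c}$ is well-defined. I would exhibit a minimizing sequence by sending the excess mass $\varrho - \varrho_c$ to infinity: fix $k_N \to \infty$, set $\alpha_N := (\varrho - \varrho_c)/(k_N - \varrho_c)$, and define $c^N$ componentwise by $c^N_l := (1-\alpha_N)\,\omega_l^{\varrho_c} + \alpha_N\,\delta_{l,k_N}$, which lies in $\cP^\varrho$ by direct computation. Splitting $\cH[c^N\,|\,\omega^{\varrho_c}]$ at $l = k_N$, the bulk part reduces to $(1-\alpha_N)\log(1-\alpha_N)(1-\omega_{k_N}^{\varrho_c})\to 0$, and the atomic part to $\alpha_N\log\alpha_N - \alpha_N\log\omega_{k_N}^{\varrho_c} + o(1)$. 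Using $\omega_{k_N}^{\varrho_c} = Z(\phi_c)^{-1}\phi_c^{k_N}Q_{k_N}$ together with \eqref{e:Qk:phi_c} rewritten as $\log Q_{k_N} = -k_N\log\phi_c + o(k_N)$, the leading terms cancel once one uses $\alpha_N k_N \to \varrho - \varrho_c$, giving $\cH[c^N\,|\,\omega^{\varrho_c}]\to 0$; hence $\inf_{\cP^\varrho}\cH[\,\cdot\,|\,\omega^{\varrho_c}] = 0$. Any minimizing sequence then converges pointwise to $\omega^{\varrho_c}$ by the same Csiszár--Kullback argument, hence weak$^*$ in $\cX$ by Proposition~\ref{prop:X:properties}(1). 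Strong convergence is impossible since $\norm{c^j} = 1 + \varrho > 1 + \varrho_c = \norm{\omega^{\varrho_c}}$, violating Proposition~\ref{prop:X:properties}(2).

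The delicate point is the atomic bookkeeping in the supercritical construction: the asymptotic $\log Q_{k_N} = -k_N \log \phi_c + o(k_N)$ furnished by \eqref{e:Qk:phi_c} is just sharp enough to cancel the $O(\alpha_N k_N) = O(1)$ contribution from $\alpha_N k_N \log \phi_c$, which is precisely what allows an $O(1)$ amount of mass to escape to infinity at zero entropy cost.
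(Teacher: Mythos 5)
Your argument is correct, and it is worth noting that the paper itself gives no proof of this statement: it is imported wholesale from Ball--Carr--Penrose \cite[Theorem 4.4]{BCP86}, so any comparison is with that classical argument rather than with text in this paper. Your subcritical part (Gibbs' inequality for uniqueness, Pinsker/Csisz\'ar--Kullback for $\ell^1$-convergence of minimizing sequences, then the norm identity $\norm{c^j}=1+\varrho=\norm{\omega^\varrho}$ combined with Proposition~\ref{prop:X:properties} to upgrade to strong convergence in $\cX$) is a cleaner route than the convexity/weak$^*$ lower-semicontinuity argument of \cite{BCP86}, and it buys you the convergence of \emph{every} minimizing sequence essentially for free; the key point making it legitimate is that on $\cP^\varrho$ both moments are frozen, so $\cF$ and $\cH[\cdot\,|\,\omega^\varrho]$ differ by the constant $\log Z(\phi)-\varrho\log\phi$, exactly as in \eqref{e:RelEnt:F}. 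Your supercritical construction (bulk $\omega^{\varrho_c}$ plus an atom of mass $\alpha_N\sim(\varrho-\varrho_c)/k_N$ at $k_N\to\infty$) is in spirit the same excess-mass-to-infinity test sequence as in \cite{BCP86}, with the cancellation $-\alpha_N\log\omega^{\varrho_c}_{k_N}=\alpha_N\log Z(\phi_c)+\alpha_N\,o(k_N)\to 0$ resting precisely on \eqref{e:Qk:phi_c} and $\alpha_Nk_N\to\varrho-\varrho_c$, as you say; and the impossibility of strong convergence from $\norm{c^j}=1+\varrho>1+\varrho_c=\norm{\omega^{\varrho_c}}$ is the standard conclusion. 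One small point to tighten: your expansion of the atomic term $c^N_{k_N}\log\bra*{c^N_{k_N}/\omega^{\varrho_c}_{k_N}}$ as $\alpha_N\log\alpha_N-\alpha_N\log\omega^{\varrho_c}_{k_N}+o(1)$ tacitly assumes $\alpha_N\gg\omega^{\varrho_c}_{k_N}$; in the complementary regime $\alpha_N\lesssim\omega^{\varrho_c}_{k_N}$ the ratio $c^N_{k_N}/\omega^{\varrho_c}_{k_N}$ stays bounded and the term is $O(\omega^{\varrho_c}_{k_N})\to 0$ anyway, so the conclusion is unaffected, but the case distinction (or a uniform bound such as $x\log(x/y)\le x\log(1+\alpha_N/y)$) should be recorded. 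Also note the prerequisites you invoke for $\varrho>\varrho_c$ --- that $\varrho_c<\infty$ forces $\phi_c<\infty$, $Z(\phi_c)<\infty$ and that $\omega^{\varrho_c}$ has first moment exactly $\varrho_c$ --- are indeed established in the introduction of the paper, so your use of them is legitimate.
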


\subsection{Relative compactness of trajectories}\label{s:compact}

The starting point is that the solution to~\eqref{e:master} is a generalized flow in strong topology (Theorem~\ref{thm:generalized_flow} under Assumption~\eqref{e:ass:K1}) or a generalized flow in the weak topology (Theorem~\ref{thm:generalized_flow:weak} under Assumption~\eqref{e:ass:sublinear}) and constitutes in both cases a semigroup (Corollary~\ref{cor:semigroup} under Assumption~\eqref{e:ass:K2}). The dissipative nature of the evolution is captured by the free energy dissipation relation~\eqref{e:FED} providing a Lyapunov function for the evolution. If relative compactness of the orbits in $\cX$ for the according topology is proven, then the longtime limit can be deduced by the following invariance principle.
\begin{prop}[Invariance principle {\cite[Proposition 5.3]{BCP86}}]\label{prop:invariance_principle}
  Let $G$ be a generalized flow (Definition~\ref{def:flows}) on some metric space $(\cY,d)$. Let $\varphi(\cdot)\in G$ and suppose that its positive orbit $\mathcal{O}^+(\varphi) = \bigcup_{t\geq 0} \varphi(t)$ is relatively compact. Then 
  \[
    \varOmega(\varphi) = \set*{ \varPhi \in \cY: \varphi(t_j) \to \varPhi \text{ for some sequence $t_j\to \infty$ as $j\to \infty$} }
  \]
  is nonempty and satisfies
  \[
    d\bra*{\varphi(t),\varOmega(\varphi)} \to 0 \quad\text{as}\quad t \to \infty \ .
  \]
  Moreover $\varOmega(\varphi)$ is quasi-invariant, that is for any $\varPhi \in \varOmega(\varphi)$ there exists $\varphi(\cdot)\in G$ with $\varphi(0) = \varPhi$ and $\mathcal{O}^+(\varphi) \subset \varOmega(\varphi)$.
\end{prop}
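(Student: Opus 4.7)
The plan is to decompose the statement into three pieces: nonemptiness of $\varOmega(\varphi)$, attraction of $\varphi(t)$ to $\varOmega(\varphi)$ as $t\to\infty$, and quasi-invariance of $\varOmega(\varphi)$. The first two are direct consequences of the relative compactness of $\mathcal{O}^+(\varphi)$, while the third is where the defining semicontinuity property (3) of a generalized flow in Definition~\ref{def:flows} is essential.

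For nonemptiness I would pick any sequence $t_j \to \infty$; the sequence $(\varphi(t_j))_{j\geq 1}$ lies in the relatively compact set $\mathcal{O}^+(\varphi)$, so some subsequence converges in $\cY$ to a limit $\varPhi$, which belongs to $\varOmega(\varphi)$ by definition. For the attraction statement I would argue by contradiction: if there exist $\varepsilon > 0$ and a sequence $s_n \to \infty$ with $d(\varphi(s_n),\varOmega(\varphi)) \geq \varepsilon$ for all $n$, then relative compactness yields a subsequence $s_{n_k}$ along which $\varphi(s_{n_k}) \to \varPsi$ for some $\varPsi \in \cY$. Since $s_{n_k} \to \infty$, the limit $\varPsi$ lies in $\varOmega(\varphi)$, and hence $d(\varphi(s_{n_k}),\varOmega(\varphi)) \leq d(\varphi(s_{n_k}),\varPsi) \to 0$, contradicting the lower bound $\varepsilon$.

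For quasi-invariance I would fix $\varPhi \in \varOmega(\varphi)$, select $t_j \to \infty$ with $\varphi(t_j) \to \varPhi$, and introduce the shifted trajectories $\varphi^{j}(\cdot) := \varphi(\cdot + t_j)$. These lie in $G$ by property (1) of Definition~\ref{def:flows} and satisfy $\varphi^{j}(0) = \varphi(t_j) \to \varPhi$. Property (3) of Definition~\ref{def:flows} then produces a subsequence $(j_k)$ and a trajectory $\psi \in G$ with $\psi(0) = \varPhi$ such that $\varphi^{j_k}(s) = \varphi(s + t_{j_k}) \to \psi(s)$ uniformly on compact subintervals of $[0,\infty)$. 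For any fixed $s \geq 0$, the times $s + t_{j_k} \to \infty$ and $\varphi(s+t_{j_k}) \to \psi(s)$, so $\psi(s) \in \varOmega(\varphi)$ by the very definition of $\varOmega(\varphi)$. This gives $\mathcal{O}^+(\psi) \subset \varOmega(\varphi)$, which is exactly the quasi-invariance claim.

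The main obstacle, and the reason for the prefix \emph{quasi}, is that without uniqueness one cannot define a single forward map $T(t)\varPhi$ at limit points of the orbit; the argument must therefore produce a witness trajectory through $\varPhi$ from the abstract closure property (3) of generalized flows rather than from a semigroup structure. Once (3) is invoked on the shifted family $(\varphi^{j})_j$ — whose membership in $G$ is guaranteed precisely by the shift-invariance (1) — the remaining diagonal extraction and the pointwise identification of each $\psi(s)$ as an element of $\varOmega(\varphi)$ are routine.
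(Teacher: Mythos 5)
Your proof is correct and follows essentially the same route as the standard argument in the cited reference [BCP86, Proposition 5.3], which the paper itself does not reproduce: compactness of the orbit gives nonemptiness and attraction, and quasi-invariance comes from applying the shift-invariance (1) and the semicontinuity property (3) of Definition~\ref{def:flows} to the time-shifted trajectories $\varphi(\cdot+t_j)$, noting that convergence on compact time intervals at $t=0$ identifies the limit trajectory's initial value as $\varPhi$ and at each fixed $s$ exhibits $\psi(s)$ as an $\omega$-limit point. No gaps.
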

The relative compactness in the strong topology can be easily deduced in the case, where the radius of convergence in~\eqref{e:ass:Kc} is infinite. 
\begin{lem}[Relative compactness for $\phi_c = \infty$]\label{lem:relcompact:infty}
  Suppose Assumptions~\eqref{e:ass:BDA} and \eqref{e:ass:Kc} with $\phi_c = \infty$ hold. Let $\varrho>0$ and $c \in \cP^{\varrho}$ be any solution of~\eqref{e:master} on $[0, \infty)$ satisfying $c(0) \ne 0$, $\cF[c(0)] < \infty$ and the free energy dissipation relation~\eqref{e:FED}. Then $\bra{c(t)}_{t\geq 0}$ is relatively compact in $\cP^{\varrho}$, that is for any $\eps>0$ exists $M=M(\eps)$ such that
  \begin{equation}\label{e:tight:1}
    \sup_{t\geq 0} \sum_{k\geq M(\eps)} k \, c_k(t) \leq \eps \ . 
  \end{equation}
\end{lem}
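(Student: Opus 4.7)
The strategy is a de la Vallée-Poussin-type tightness argument in which the superlinear weight is supplied by the chemical potential $(Q_k)_{k\geq 0}$. The key structural point is that when $\phi_c=\infty$, the sequence $\Phi_k := -\log Q_k$ grows superlinearly in $k$, while the Lyapunov bound controls $\sum_k c_k(t)\,\Phi_k$ uniformly in $t$.

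First, from the free energy dissipation relation~\eqref{e:FED} assumed in the hypothesis, $\cF[c(t)] \leq \cF[c(0)] < \infty$ for all $t \geq 0$. Using the splitting~\eqref{e:split:Lyapunov} and the uniform lower bound on the entropy $\cS$ over $\cP^\varrho$ from~\cite[Theorem 4.4]{BCP86} (already exploited in the proof of Lemma~\ref{lem:cF:c}), this yields
\begin{equation*}
  \sup_{t \geq 0} \sum_{k\geq 1} c_k(t) \, \Phi_k \leq C := \cF[c(0)] - \inf_{\cP^\varrho} \cS < \infty \ .
\end{equation*}

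Second, the assumption $\phi_c = \infty$ together with the identity~\eqref{e:Qk:phi_c} gives $\Phi_k/k = -\log Q_k^{1/k} \to +\infty$. In particular, there exists $k_0$ with $\Phi_k > 0$ for all $k \geq k_0$, and the finitely many earlier values $\Phi_1,\dots,\Phi_{k_0-1}$ are bounded. Combined with $\sum_k c_k(t) = 1$, the preceding bound upgrades to $\sup_t \sum_{k\geq 1} c_k(t) \, \Phi_k^+ \leq C'$ for $\Phi_k^+ := \max(\Phi_k,0)$ and some $C' < \infty$. Given $\varepsilon > 0$, I choose $M = M(\varepsilon) \geq k_0$ so that $k/\Phi_k \leq \varepsilon/C'$ for all $k \geq M$, and estimate
\begin{equation*}
  \sup_{t \geq 0} \sum_{k\geq M} k\, c_k(t) \leq \frac{\varepsilon}{C'} \sup_{t\geq 0} \sum_{k \geq M} \Phi_k \, c_k(t) \leq \varepsilon \ ,
\end{equation*}
which is precisely~\eqref{e:tight:1}.

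Overall the argument is quite short: the superlinearity of $\Phi_k/k$, which is exactly what fails in the critical regime $\phi_c < \infty$, does all the work. The only mild subtlety is the uniform lower bound on $\cS$ on $\cP^\varrho$, which stems from an entropy-maximization computation under the fixed-mean constraint and is borrowed from~\cite{BCP86}. I do not anticipate a genuine obstacle beyond careful bookkeeping of the finitely many indices where $\Phi_k$ might fail to be positive.
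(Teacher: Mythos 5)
Your proposal is correct and follows essentially the same route as the paper: both extract the uniform-in-time bound on $\sum_k c_k(t)\,(-\log Q_k)$ from the dissipation inequality~\eqref{e:FED} together with the lower bound on $\cS$, and then use that $\phi_c=\infty$ forces $-\log Q_k^{\nicefrac{1}{k}}\to\infty$ to convert this into the tail estimate~\eqref{e:tight:1}. The only difference is bookkeeping of the finitely many indices where $\log Q_k>0$ (you absorb them via $\Phi_k^+$ and the zeroth moment, the paper via $\varrho\log\overbar Q$ with $\overbar Q=\sup_k Q_k^{\nicefrac{1}{k}}$), which is immaterial.
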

\begin{proof}
  The free energy dissipation relation~\eqref{e:FED} implies $\cF[c(t)]\leq \cF[c(0)]< \infty$ for all $t\geq 0$. The entropy $\cS$ is bounded from below on $\cB^{\varrho}$ by~\cite[Lemma 4.2]{BCP86} entailing
  \[
    - \sum_{k\geq 1} k c_k(t) \log Q_k^{\nicefrac{1}{k}} \leq \cF[c(0)]- \inf_{c\in \cB^\varrho} \cS[c] = C < \infty \qquad\text{for all } t\geq 0 \ .
  \]
  Since $\phi_c=\infty$, it holds $\lim_{k\to \infty} Q_k^{\nicefrac{1}{k}} = 0$. Hence, for any $\eps>0$, there exists $M=M(\eps)$ such that 
  \[
    -\log Q_k^{\nicefrac{1}{k}} \geq \frac{C + \varrho \log \overbar Q}{\eps} \quad \text{ for all } k\geq M \qquad\text{with}\qquad \overbar Q = \sup_{k\geq 0} Q_k^{\nicefrac{1}{k}} \geq 1 \ . 
  \]
  This estimate implies the bound
  \[\begin{split}
    \sum_{k\geq M} k \, c_k(t) &\leq \frac{\eps}{C+ \varrho \log \overbar Q} \bra*{ - \sum_{k\geq 1} k \, c_k(t) \log Q_k^{\nicefrac{1}{k}} + \sum_{k=1}^{M-1} k \, c_k(t) \max\set*{ \log Q_k^{\nicefrac{1}{k}},0}} \ .
  \end{split}\] 
  The first sum is bounded by $C$ and the second by $\log \overbar Q$ concluding~\eqref{e:tight:1}.
\end{proof}
Lemma~\ref{lem:relcompact:infty} together with Corollary~\ref{cor:semigroup} and Proposition~\ref{prop:invariance_principle} establish the proof of Theorem~\ref{thm:longtime} in the case $\phi_c=\infty$. 
\begin{cor}[Longtime behavior for $\phi_c=\infty$]\label{cor:longtime:phc:infty}
   If Assumptions~\eqref{e:ass:BDA}, \eqref{e:ass:K2} and \eqref{e:ass:Kc} with $\phi_c = \infty$ hold, then for any $\varrho_0\in [0,\infty) $ and any $\bar c\in \cP^{\varrho_0}$ with $\cF[\bar c] < \infty$ the unique solution $c$ of~\eqref{e:master} with $c(0) = \bar c$ satisfies $c(t) \to \omega^{\varrho}$ strongly in $\cX$ as $t \to \infty$.
\end{cor}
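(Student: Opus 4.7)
The plan is to combine the three ingredients highlighted in the paragraph just before the statement: the semigroup property in the strong topology of $\cX$ from Corollary~\ref{cor:semigroup} (available under~\eqref{e:ass:K2}), the uniform tail tightness of positive orbits from Lemma~\ref{lem:relcompact:infty}, and the invariance principle of Proposition~\ref{prop:invariance_principle}. The case $\varrho_0 = 0$ is trivial since $\bar c = \delta_0 = \omega^0$ is already stationary. For $\varrho_0 > 0$, I would first observe that~\eqref{e:tight:1} together with mass conservation and Proposition~\ref{prop:X:properties} promotes the pointwise (equivalently weak$^*$) relative compactness of $\mathcal{O}^+(c) = \set{c(t) : t\geq 0}$ to strong relative compactness in $\cP^{\varrho_0}$. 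The invariance principle then yields a nonempty, quasi-invariant $\omega$-limit set $\Omega(c)$ with $\norm{c(t) - \Omega(c)} \to 0$, and the proof reduces to establishing $\Omega(c) = \set{\omega^{\varrho_0}}$.

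For this identification I would exploit the free energy dissipation relation~\eqref{e:FED}. The map $t \mapsto \cF[c(t)]$ is nonincreasing by Proposition~\ref{prop:EED} and bounded below on $\cP^{\varrho_0}$, since $\cS$ is bounded below on $\cB^{\varrho_0}$ by~\cite[Lemma 4.2]{BCP86} and $-\sum_k k\, c_k \log Q_k^{1/k} \geq -\varrho_0 \log \overbar Q$ with $\overbar Q = \sup_k Q_k^{1/k} < \infty$. Hence $F^* := \lim_{t\to\infty} \cF[c(t)]$ exists and $\cD[c(\cdot)] \in L^1([0,\infty))$. Now fix $\Phi \in \Omega(c)$ and a sequence $t_j \to \infty$ with $c(t_j) \to \Phi$ strongly in $\cX$. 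Continuity of the semigroup (Definition~\ref{def:flows}) yields $c(t_j + r) \to T(r)\Phi$ strongly for every $r \geq 0$, while the dissipation identity on $[t_j, t_j + t]$ produces
\[
  \int_{t_j}^{t_j + t} \cD[c(s)] \dx{s} = \cF[c(t_j)] - \cF[c(t_j + t)] \xrightarrow{j \to \infty} 0 .
\]
Because each term of~\eqref{e:def:dissipation} is nonnegative and continuous in the entries of $c$, $\cD$ is lower semicontinuous on $\cX^+$; combining this with Fatou's lemma in time forces $\int_0^t \cD[T(r)\Phi]\dx{r} = 0$, so $\cD[T(r)\Phi] = 0$ for almost every $r \in [0, t]$. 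Choosing $r_n \searrow 0$ at which this holds and applying LSC of $\cD$ at $\Phi$ gives $\cD[\Phi] = 0$, so $\Phi$ is a stationary state.

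To conclude, the LSC of $\cF$ from Lemma~\ref{lem:cF:c}(2) yields $\cF[\Phi] \leq F^* \leq \cF[\bar c] < \infty$, so $\Phi \in \cP^{\varrho_0} \cap \set{\cF < \infty}$. Moreover, $\phi_c = \infty$ forces $\varrho_c = \infty$ via the same monotonicity computation displayed below~\eqref{e:phi:strict:monotone}, now driving $\phi \to \infty$, so $\varrho_0 \leq \varrho_c$ is automatic and Proposition~\ref{prop:stationary}(1) identifies $\Phi = \omega^{\varrho_0}$. Therefore $\Omega(c) = \set{\omega^{\varrho_0}}$, and the invariance principle delivers $c(t) \to \omega^{\varrho_0}$ strongly in $\cX$. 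I expect the LaSalle-type identification of $\Omega(c)$ with the single equilibrium $\omega^{\varrho_0}$ to be the delicate step: when $\phi_c \in (0,\infty)$ the full continuity of $\cF$ from Lemma~\ref{lem:cF:c}(1) enables the classical shortcut of showing $\cF$ is constant on $\Omega(c)$ and deducing stationarity from there, but here only lower semicontinuity is available, which is why the argument is routed through the integrated dissipation rather than through the energy itself.
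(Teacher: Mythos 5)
Your proposal is correct and is essentially the paper's own argument: the paper proves this corollary precisely by combining the relative compactness of Lemma~\ref{lem:relcompact:infty} with the invariance principle of Proposition~\ref{prop:invariance_principle} (and the semigroup of Corollary~\ref{cor:semigroup}), leaving the dissipation-based identification of the $\omega$-limit set implicit, which is what you have spelled out via the integrated form of~\eqref{e:FED} and lower semicontinuity of $\cD$. Two small repairs: \eqref{e:FED} is only an inequality rather than an identity (which suffices for your limit $\int_{t_j}^{t_j+t}\cD[c(s)]\dx{s}\to 0$), and citing Lemma~\ref{lem:cF:c}(2) to conclude $\cF[\Phi]\leq F^*<\infty$ is circular as written, since that statement assumes finiteness of $\cF$ at the limit point; this is easily fixed either by noting that the lower semicontinuity holds without that hypothesis when $\phi_c=\infty$ (the negative part of $-\sum_k k\,c_k\log Q_k^{\nicefrac{1}{k}}$ is bounded by $\varrho_0\log\overbar{Q}$, so Fatou applies), or by identifying $\Phi=\omega^{\varrho_0}$ directly from $\cD[\Phi]=0$, the strict positivity of $K$, and $\Phi\in\cP^{\varrho_0}$, thereby bypassing the finiteness clause of Proposition~\ref{prop:stationary}.
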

\begin{proof}
  The case $\varrho_c=\infty$ follows immediately by the relative compactness statement of Lemma~\ref{lem:relcompact:infty} in combination with the invariance principle from Proposition~\ref{prop:invariance_principle}. The strong convergence implies also the continuity statement for the free energy in the limit by Lemma~\ref{lem:cF:c}. 
\end{proof}
Before turning to the more involved proof of relative compactness in the strong topology of solutions to~\eqref{e:master} in $\cP^\varrho$ for $0<\phi_c < \infty$, a weak$^*$ convergence result is stated. Since the result relies on the semigroup in weak$^*$ topology, the sublinear growth assumption~\eqref{e:ass:sublinear} is needed. 
This result is an immediate consequence of the generalized flow in the weak$^*$ topology from Theorem~\ref{thm:generalized_flow:weak} and the free energy dissipation relation~\eqref{e:FED}. It is the analog to~\cite[Theorem 5.5]{BCP86} and~\cite[Theorem 5.10]{Slemrod1989} for the Becker-Döring system.
\begin{thm}\label{thm:longtime:weak}
  Suppose Assumptions~\eqref{e:ass:sublinear} and~\eqref{e:ass:Kc} hold with $0 < \phi_c <\infty $. For $\varrho_0>0$ and $\bar c\in \cP^{\varrho_0}$ and let $(c_t)_{t\geq 0}\subset \cP^{\varrho_0}$ be a solution of~\eqref{e:master} to $\bar c$ satisfying the free energy dissipation relation~\eqref{e:FED}. Then $c(t) \xrightharpoonup{*} \omega^\varrho$ as $t\to \infty$ for some $0 \leq \varrho\leq \min\set*{ \varrho_0 , \varrho_c}$. 
\end{thm}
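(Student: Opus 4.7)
The plan is to carry out a LaSalle-type invariance principle argument on the compact metric space $(\cB^{\varrho_0}, \upd)$, exploiting the generalized flow structure from Theorem~\ref{thm:generalized_flow:weak} and the Lyapunov property from the free energy dissipation relation~\eqref{e:FED}. Since $c(t) \in \cP^{\varrho_0} \subset \cB^{\varrho_0}$ for all $t\geq 0$ by Corollary~\ref{cor:convervation_laws}, and $\cB^{\varrho_0}$ is compact in $\upd$, the positive orbit is automatically relatively compact. Proposition~\ref{prop:invariance_principle} therefore yields a nonempty, quasi-invariant $\omega$-limit set $\varOmega(c)$ that attracts $c(t)$ in $\upd$.

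Next I would show that every $c^* \in \varOmega(c)$ is of the form $\omega^\varrho$ for some admissible $\varrho$. Because $\phi_c \in (0,\infty)$, the splitting~\eqref{e:split:Lyapunov} shows $\cF$ is bounded on $\cP^{\varrho_0}$, so $\cF[c(t)]$ decreases to some $L \in \R$ and $\int_0^\infty \cD[c(s)] \dx{s} \leq \cF[\bar c] - L < \infty$ by~\eqref{e:FED}. Hence there is a sequence $t_j \to \infty$ with $\cD[c(t_j)] \to 0$, and by compactness a subsequence satisfies $c(t_j) \weakto c^*\in \varOmega(c)$. Each summand in~\eqref{e:def:dissipation} is nonnegative and continuous under pointwise (hence weak$^*$) convergence by Proposition~\ref{prop:X:properties}, so Fatou gives $\cD[c^*] = 0$ and thus $c^*$ satisfies~\eqref{e:Q:DBC}. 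The uniform tail bound $\sum_{l \geq L} c_l(t_j) \leq \varrho_0/L$ provides tightness of the zeroth moment, so $\sum_l c^*_l = 1$ and $c^* \in \cP$. With $\sum_l l\, c^*_l \leq \varrho_0$ (Fatou), Proposition~\ref{prop:stationary} together with the construction in~\eqref{def:omega} forces $c^* = \omega^\varrho$ for some $0 \leq \varrho \leq \min(\varrho_0, \varrho_c)$.

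The main obstacle is to upgrade this subsequential statement to full weak$^*$ convergence, i.e., to exclude that different subsequences produce different equilibria $\omega^{\varrho_1}, \omega^{\varrho_2}$. To rule this out I would introduce the auxiliary functional $\tilde L(\varrho) := \cF[\omega^\varrho] + (\varrho_0 - \varrho) \log \phi_c$ and show, from $\cF[\omega^\varrho] = -\log Z(\phi(\varrho)) + \varrho \log \phi(\varrho)$ and the identity $\phi Z'(\phi)/Z(\phi) = \varrho(\phi)$ (compare~\eqref{e:phi:strict:monotone}), that
\[
\tilde L'(\varrho) = \log \phi(\varrho) - \log \phi_c = \log\bigl(\phi(\varrho)/\phi_c\bigr) < 0 \quad \text{on } [0,\varrho_c),
\]
so $\tilde L$ is strictly decreasing on $[0,\varrho_c]$. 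On the other hand, writing $\log Q_k^{\nicefrac{1}{k}} = -\log \phi_c + \eps_k$ with $\eps_k \to 0$, the splitting~\eqref{e:split:Lyapunov} rewrites for any $c \in \cP^{\varrho_0}$
\[
\cF[c] = \cS[c] + \varrho_0 \log \phi_c - \sum_{k \geq 1} k \, c_k \, \eps_k ,
\]
and both $\cS$ (by \cite[Lemma~4.2]{BCP86}) and the linear functional $\sum_k k\, c_k \eps_k$ (by Proposition~\ref{prop:X:properties}, since its coefficients $k\eps_k$ satisfy $k\eps_k/k = \eps_k \to 0$) are weak$^*$ continuous on $\cB^{\varrho_0}$. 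Passing $t_j \to \infty$ along any subsequence with $c(t_j) \weakto \omega^\varrho$ yields $L = \tilde L(\varrho)$, and strict monotonicity of $\tilde L$ determines $\varrho$ uniquely from $L$. This pins down the limit point and upgrades attraction to $\varOmega(c)$ to full weak$^*$ convergence $c(t) \weakto \omega^\varrho$.
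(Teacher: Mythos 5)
Your proof follows essentially the paper's strategy: LaSalle's invariance principle on the compact metric space $(\cB^{\varrho_0},\upd)$ via Theorem~\ref{thm:generalized_flow:weak} and Proposition~\ref{prop:invariance_principle}, a weak$^*$-continuous reformulation of the Lyapunov functional (your $G[c]=\cS[c]+\varrho_0\log\phi_c-\sum_k k\,c_k\eps_k$ is exactly the relative entropy $\cH[\,\cdot\,|\,\omega(\phi_c)]$ of Proposition~\ref{prop:weak:cont:free:energy} up to an additive constant), and the strict monotonicity of $\varrho\mapsto\cF[\omega^\varrho]+(\varrho_0-\varrho)\log\phi_c$, which coincides (again up to a constant) with the paper's computation of $\partial_\varrho\cH[\omega^\varrho|\omega^{\varrho_c}]$. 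However, there is a genuine gap in the middle step. You announce that \emph{every} $c^*\in\varOmega(c)$ is of the form $\omega^\varrho$, but your argument only establishes this for one particular limit point: from $\int_0^\infty\cD[c(s)]\dx{s}<\infty$ you extract a sequence $t_j\to\infty$ with $\cD[c(t_j)]\to 0$ and conclude $\cD[c^*]=0$ for a subsequential limit of \emph{that} sequence. A general element of $\varOmega(c)$ arises along some other sequence $s_j\to\infty$, and finiteness of the dissipation integral only gives $\liminf_{t\to\infty}\cD[c(t)]=0$, so $\cD[c(s_j)]$ need not be small; lower semicontinuity of $\cD$ then yields nothing. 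This matters for your final step: knowing that all limit points lie in the level set $\set{G=L}$ does not identify them, since that level set contains many non-equilibrium states, so strict monotonicity of $\tilde L$ on the equilibria alone does not exclude non-equilibrium limit points, and full weak$^*$ convergence does not follow from what you proved.

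The missing piece is exactly the ingredient you invoked but never used: the quasi-invariance of $\varOmega(c)$ from Proposition~\ref{prop:invariance_principle}. Since $G$ is weak$^*$ continuous and $G[c(t)]=\cF[c(t)]\to L$, $G$ is constant equal to $L$ on $\varOmega(c)$. For $\Phi\in\varOmega(c)$ take a solution with initial datum $\Phi$ whose positive orbit stays in $\varOmega(c)$; applying the free energy dissipation relation~\eqref{e:FED} to this solution gives $L+\int_0^t\cD\le L$, hence $\cD\equiv 0$ along it, and vanishing dissipation forces the detailed-balance form, so $\varOmega(c)\subset\set{\omega^\varrho : 0\le\varrho\le\min\set{\varrho_0,\varrho_c}}$ — this is precisely how the paper argues. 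With that inclusion in place, your monotonicity argument pins down a unique $\varrho$ and compactness of $(\cB^{\varrho_0},\upd)$ upgrades attraction to $\varOmega(c)$ to $c(t)\weakto\omega^\varrho$. (A minor further point, shared with the paper: the convergence of $\cF[c(t)]$ to a limit $L$ uses the dissipation relation between arbitrary times $t_1<t_2$, not just from $t=0$, which is available once the flow/semigroup structure is used.)
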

\begin{proof}
  For $\varrho_0 = 0$, the only possible state is the vacuum state from Remark~\ref{rem:vacuum}. Let $\varrho_0>0$, then Theorem~\ref{thm:generalized_flow:weak} yields that the solutions generate a generalized flow on~$(\cB^\varrho,d)$. The relative entropy $\cH[\cdot | \omega(\phi_c)]$ is weak$^*$ continuous on $\cB^\varrho$ by Proposition~\ref{prop:weak:cont:free:energy}. The conservation laws from Corollary~\ref{cor:convervation_laws} give uniform bounds $\norm{c(t)} \leq 1+\varrho_0$ for all $t\geq 0$ implying that $\cO^+(c)$ is relatively compact in $(\cB^{\varrho_0},\upd)$. In addition any weak$^*$ limit point will always satisfy the conservation law $\sum_{k\geq 0} c_k = 1$, since the bounded first moment provides the necessary relative compactness in $\ell^1(\N_0)$.
  
  By Proposition~\ref{prop:invariance_principle} follows that $\Omega(c)$ is nonempty and consists of solutions $c(\cdot)$ along which $\cH[c(t)|\omega(\phi_c)]$ has the constant value $h^\infty$. Applying the free energy dissipation~\eqref{e:FED} to such solutions gives the identity
  \[
    h^\infty + \int_0^t \cD[c(s)] \dx{s} \leq h^\infty \ . 
  \]
  Hence, $\cD[c(s)] = 0$ for a.e. $s\in (0,t)$ by the nonnegativity~\eqref{e:def:dissipation} of $\cD$. The form of $\cD$ implies that for any fixed $s\in (0,t)$ it holds $c(s) = \omega^{\varrho}$ for some $0\leq \varrho\leq \min\set*{ \varrho_0,\varrho_c}$. Hence, $\Omega(c)$ consists of the states $\omega^\varrho$ with $0\leq \varrho\leq \min\set*{ \varrho_0 , \varrho_c}$. The unique state is identified by considering
  \[
    \cH[\omega^\varrho | \omega^{\varrho_c}] = \sum_{l\geq 0} \omega_l^\varrho \log\frac{\phi(\varrho)^l \, Z(\phi(\varrho_c))}{\phi(\varrho_c)^l \, Z(\phi(\varrho))}
    = \varrho \log\frac{\phi(\varrho)}{\phi(\varrho_c)} + \log \frac{Z(\phi(\varrho_c))}{Z(\phi(\varrho))} \ .
  \]
  The identity $\pderiv{}{\phi} \log Z(\phi) = \frac{\varrho(\phi)}{\phi}$ gives
  \[
    \pderiv{}{\varrho} \cH[\omega^\varrho | \omega^{\varrho_c}] = \frac{\phi(\varrho)}{\phi(\varrho_c)} + \varrho \frac{\phi'(\varrho)}{\phi(\varrho)} - \varrho \frac{\phi'(\varrho)}{\phi(\varrho)} = \frac{\phi(\varrho)}{\phi(\varrho_c)}  \ .
  \]
   The mapping $\varrho \mapsto \cH[\omega^\varrho | \omega^{\varrho_c}]$ is one-to-one on $[0,\varrho_c)$, since $\varrho \mapsto \phi(\varrho)$ is one-to-one on $[0,\varrho_c)$ by~\eqref{e:phi:strict:monotone}. Hence, the equation $\cH[\omega^\varrho | \omega^{\varrho_c}] = h^\infty$ has a unique solution for some $\varrho$ with $0\leq \varrho \leq \min\set*{\varrho_0,\varrho_c}$. 
\end{proof}
The proof of relative compactness for orbits in the case $0 < \phi_c < \infty$ is based on the same strategy as in~\cite{BC88}, which was also successfully applied to generalized~\cite{Canizo05} and modified Becker-Döring systems~\cite{HNN06}, and macroscopic limits of the Becker-Döring system~\cite{LM02}. The crucial idea is to consider the new variable
\begin{equation}\label{e:def:xn}
  x_l(t) = \sum_{k\geq l} k \, c_k(t) 
\end{equation}
for which the uniform in time bound $\frac{x_l(t)}{\lambda_l} \lesssim C$ is established, with $\lambda_l \to 0$ as $l \to \infty$. This estimate yields the relative compactness of trajectories in $\cX$. The following proposition provides a tightness result conditioned on certain estimates satisfied by the nonlinear birth-death rates~\eqref{e:bdrates}. In a second step, it will be ensured that these estimates actually hold thanks to the weak$^*$ convergence from Theorem~\ref{thm:longtime:weak}.
\begin{prop}[Relative compactness ($0< \phi_c < \infty$)]\label{prop:tight}
  Suppose Assumption~\eqref{e:ass:K2} and~\eqref{e:ass:Kc} hold with $0 < \phi_c <\infty $.
  Let $\bar c \in \cP^\varrho$ for some $\varrho>0$ and $c:[0,\infty) \to \cP^\varrho$ be the unique solution of~\eqref{e:master} with $c(0) = \bar c$. Suppose that for some $\phi < \phi_c$ there exists $l_0$ such that
  \begin{equation}\label{e:ass:ABphi}
    \sup_{t\geq 0} \frac{A_{l}[c(t)]}{B_{l}[c(t)]} < 1 \quad\text{and}\quad  \sup_{t\geq 0} \frac{A_{l}[c(t)]}{B_{l+1}[c(t)]} < \frac{\phi K(1,l)}{K(l+1,0)} \quad\text{uniformly in $l\geq l_0$.}
  \end{equation}
  Let $(\lambda_l)$ be a positive nonincreasing sequence satisfying
  \begin{equation}\label{e:def:lambda}
     \lambda_l - \lambda_{l+1}\geq  \nu_l \, \bra*{ \lambda_{l-1}- \lambda_l } \quad\text{with}\quad \nu_l = \frac{l^2}{(l-1)^2} \frac{\phi K(1,l-1)}{K(l,0)} \ . 
  \end{equation}
  Then it holds that
  \begin{equation}\label{e:def:H}
    H(t) = \max\set*{ \sup_{l \geq l_0 + 1} \frac{x_l(t)}{\lambda_l}, \frac{\varrho}{\lambda_{l_0}}} \qquad\text{ is nonincreasing on $[0,\infty)$. } 
  \end{equation}
\end{prop}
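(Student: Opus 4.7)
My plan is to derive a three-point discrete parabolic inequality for the tail-mass variables $x_l(t) = \sum_{k\geq l} k\,c_k(t)$ from~\eqref{e:def:xn}, and then to conclude by a maximum-principle/comparison argument that the weighted supremum $H(t)$ cannot grow in time.

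\emph{Derivation of the evolution.} Applying Theorem~\ref{thm:weak_form} with $g_k = k$ and starting index $m = l$ (which is legitimate under~\eqref{e:ass:K2} and the conservation laws of Corollary~\ref{cor:convervation_laws}) one obtains $\dot x_l(t) = l\,J_{l-1}[c(t)] + \sum_{k\geq l} J_k[c(t)]$. The first inequality in~\eqref{e:ass:ABphi} gives $A_k[c]\leq B_k[c]$ for $k\geq l_0$, and reorganizing the tail sum as
\begin{equation*}
\sum_{k\geq l} J_k[c] = A_l[c]\,c_l + \sum_{k\geq l+1}\bra*{A_k[c]-B_k[c]}c_k
\end{equation*}
therefore yields $\sum_{k\geq l} J_k[c]\leq A_l[c]\,c_l$ for $l\geq l_0+1$. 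Using $c_k=(x_k-x_{k+1})/k$ (valid for $k\geq 1$) then converts the bound on $\dot x_l$ into
\begin{equation*}
\dot x_l\leq \mu_l(x_{l-1}-x_l)-\sigma_l(x_l-x_{l+1}),\qquad \mu_l=\frac{l\,A_{l-1}[c]}{l-1},\quad \sigma_l=\frac{l\,B_l[c]-A_l[c]}{l}\ .
\end{equation*}

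\emph{Coercivity from the hypothesis on $(\lambda_l)$.} Passing to $u_l = x_l/\lambda_l$ by writing $x_m=\lambda_m u_m$ in the increments, I rewrite the above as
\begin{equation*}
\dot u_l\leq a_l\,u_{l-1}+b_l\,u_{l+1}-(\mu_l+\sigma_l)\,u_l,\qquad a_l=\mu_l\,\frac{\lambda_{l-1}}{\lambda_l},\ \ b_l=\sigma_l\,\frac{\lambda_{l+1}}{\lambda_l}\ .
\end{equation*}
The decisive coercivity condition $a_l+b_l\leq \mu_l+\sigma_l$ is equivalent to $\lambda_l-\lambda_{l+1}\geq (\mu_l/\sigma_l)(\lambda_{l-1}-\lambda_l)$. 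Using $A_l\leq B_l$ one bounds $\mu_l/\sigma_l\leq l^2 A_{l-1}[c]/((l-1)^2 B_l[c])$, and the second inequality in~\eqref{e:ass:ABphi} (with $l$ replaced by $l-1$) gives $A_{l-1}[c]/B_l[c]\leq \phi\,K(1,l-1)/K(l,0)$ uniformly in $t$. Hence $\mu_l/\sigma_l\leq \nu_l$ with $\nu_l$ as in~\eqref{e:def:lambda}, and the hypothesis on $(\lambda_l)$ is tailored precisely to secure $a_l+b_l\leq \mu_l+\sigma_l$ for all $l\geq l_0+1$.

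\emph{Comparison.} Fix $t_0\geq 0$ and set $M=H(t_0)$. Two boundary-type facts hold on $\{l\geq l_0\}$: by definition of $H(t_0)$, $u_l(t_0)\leq M$ for every $l\geq l_0+1$; and mass conservation forces $u_{l_0}(t)=x_{l_0}(t)/\lambda_{l_0}\leq \varrho/\lambda_{l_0}\leq M$ for every $t\geq 0$, providing a Dirichlet-type bound at the boundary index $l_0$. The constant profile $v_l\equiv M$ is a supersolution of the linear three-point inequality thanks to $a_l+b_l\leq \mu_l+\sigma_l$, so a comparison argument applied to $w_l(t)=u_l(t)-M$ (with $w_l(t_0)\leq 0$ for $l\geq l_0+1$ and $w_{l_0}(t)\leq 0$ for all $t\geq t_0$) gives $u_l(t)\leq M$ for every $t\geq t_0$ and $l\geq l_0+1$. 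Taking the supremum in $l$ yields $H(t)\leq H(t_0)$, which is the asserted monotonicity. The main obstacle I expect is the rigorous execution of this maximum principle on the infinite cooperative system with time-dependent coefficients: the supremum defining $H(t)$ need not be attained at any finite $l$. My approach will be to truncate to $l\in\{l_0,\ldots,L\}$, where the standard comparison principle for finite cooperative linear ODE systems applies directly, and then pass to the limit $L\to\infty$ using that $x_l(t)\to 0$ as $l\to\infty$ uniformly on compact time intervals, itself a consequence of the tail bound $x_l\leq \varrho$ together with the strong continuity of $c(\cdot)$ in $\cX$ from Proposition~\ref{prop:cont:solution}.
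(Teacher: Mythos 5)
Your derivation of the three-point inequality and the coercivity computation are correct and coincide with the paper's own estimates: the paper performs exactly the same algebra, with $\mu_l=\tfrac{l}{l-1}A_{l-1}$, $\sigma_l\ge\tfrac{l-1}{l}B_l$ and the bound $A_{l-1}/B_l\le \phi K(1,l-1)/K(l,0)$, but at the level of the truncated dynamics~\eqref{e:master:trunc} and via a first-crossing contradiction rather than a comparison in the variable $u_l=x_l/\lambda_l$. The genuine problem lies in the step you yourself flag as the main obstacle, namely the passage $L\to\infty$ in the spatially truncated comparison. On the window $\{l_0,\dots,L\}$ the inequality for $u_L$ contains the forcing term $b_L\,u_{L+1}$ with $u_{L+1}=x_{L+1}/\lambda_{L+1}$, and to compare against the constant supersolution $M$ you need $u_{L+1}(t)\le M$ (or a bound tending to something $\le M$) on the whole time window — which is essentially the statement being proved. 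Your justification, the uniform vanishing of $x_{L+1}$ on compact time intervals, controls only the numerator; it gives no control of the ratio when $\lambda_l\to 0$, and this is precisely the relevant case: the proposition permits it, and the application in Proposition~\ref{prop:weak_to_strong} uses a sequence $\hat\lambda_l\to 0$. So, as written, the right boundary term is not handled and the limit passage does not go through.

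The gap is repairable, but it needs an extra idea. One option is to exploit that condition~\eqref{e:def:lambda} involves only differences, hence is invariant under the shift $\lambda_l\mapsto\lambda_l+\delta$: compare in the $x$-variable with the supersolution $M(\lambda_l+\delta)$; the required right-boundary inequality is then $x_{L+1}(t)\le M(\lambda_{L+1}+\delta)$, which \emph{does} follow for $L$ large from $\sup_{t\in[t_0,T]}x_{L+1}(t)\le M\delta$ (uniform convergence of the series in Proposition~\ref{prop:cont:solution}), and one concludes by letting first $L\to\infty$ for fixed $l$, then $\delta\downarrow 0$, then $T\to\infty$. The paper instead sidesteps the boundary issue entirely: it runs the first-crossing argument for the truncated dynamics $c^N$, where the tail variable beyond $N$ vanishes identically and the supremum is a maximum over finitely many indices, and then passes to the limit $N\to\infty$ using the strong approximation of Proposition~\ref{prop:strong:approx} (this is where~\eqref{e:ass:K2} and uniqueness enter). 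Either repair would complete your argument; without one of them the proof is incomplete.
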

\begin{proof}
  By the uniqueness of the trajectory for the initial datum $\bar c$, it suffices to prove that if $H(0)<\infty$ and given $T>0$, then for any $\eps >0$ holds $H(t) \leq H(0) + \eps$ for $t\in [0,T]$. It is again more convenient to prove the result with the help of the truncated system~\eqref{e:master:trunc}. Likewise, let $\bar c^{N}$ denote the truncated initial data to $\bar c$ and let $\varrho^N = \sum_{k= 1}^N k \bar c_k$. The strong convergence $\bar c^{N}\to \bar c$ in $\cX$ implies the strong convergence $c^{N} \to c $ in $C([0,T]; \cX)$ for any $T\in (0,\infty)$ by the uniqueness assumption and Theorem~\ref{thm:generalized_flow}. Likewise, the truncated birth and death rates~\eqref{e:master:trunc:birth-death} satisfy $A_{l}^N[c^N] \to A_l[c]$ and $B_{l+1}^N[c^N] \to B_{l+1}[c]$ on $C([0,T])$ as $N\to \infty$ for all $l\geq 0$ by the Assumption~\eqref{e:ass:K1}. In particular, by Assumption~\eqref{e:ass:ABphi} there is $\N_0 \geq 1$ such that for all $N\geq \N_0$, $l_0 \leq l \leq N$ and $t \in [0,T]$ it holds 
  \begin{equation}\label{e:ABphi:p0}
    \frac{A_{l}^N[c^N(t)]}{B_{l}^N[c^N(t)]} < 1 \qquad\text{and}\qquad \frac{A_{l}^N[c^N(t)]}{B_{l+1}^N[c^N(t)]} < \frac{\phi K(1,l)}{K(l+1,0)}  \ .
  \end{equation}
  Let for $N\geq \N_0$
  \[
    y_l(t) = \sum_{k\geq l} k \, c_k^N(t) \ , \quad g(t) = \sup_{l\geq l_0+1} \frac{y_l(t)}{\lambda_l} \quad\text{and}\quad
    H^N(t) = \max\set*{g(t) , \frac{\varrho^N}{\lambda_{l_0}}} \ .
  \]
  The main step consists in proving that 
  \begin{equation}\label{e:HN}
    H^N(t) \leq H^N(0) + \eps \qquad\text{for all} \qquad t\in [0,T] \ . 
  \end{equation}
  Suppose the contrary holds. Since $H^N$ is absolutely-continuous, there exists $s\in [0,T]$ such that $H^N(s) = K_\eps = H^N(0) + \eps$. Since $H^N(0) \geq \varrho^N / \lambda_{l_0}$, it holds $g(s) = K_\eps$, which implies that $y_l(s) / \lambda_l = K_\eps$ for some minimal $l$ with $l_0+1 \leq l \leq N$ such that
  \begin{equation}\label{e:choice:ym}
    \frac{y_{l-1}(s)}{\lambda_{l-1}} < K_\eps, \qquad \frac{y_{l+1}(s)}{\lambda_{l+1}} \leq K_\eps \qquad\text{and}\qquad \dot y_l(s) \geq 0 \ .
  \end{equation}
  From the definition of $y_l$ and~\eqref{e:master:trunc} it holds
  \begin{align*}
    \dot y_l &= \sum_{k=l}^N J_k^N[c^N] + l \, J_{l-1}^N[c^N] \\
    &= \sum_{k= l+1}^N \bra*{A_k^N[c^N] - B_k^N[c^N]} \, c_k^N + A_l^N[c^N]  \, c_l^N + l\, \bra*{ A_{l-1}^N[c^N]\,c_{l-1}^N - B_{l}^N[c^N] \, c_l^N} \ .
  \end{align*}
  Since $A_k^N[c^N] - B_k^N[c^N] \leq 0$ by \eqref{e:ABphi:p0} for $k\geq l_0$, the above identity is bounded by
  \begin{align*}
    \dot y_l &\leq A_l^N[c^N] \, c_l^N + l\, \bra*{ A_{l-1}^N[c^N]\, c_{l-1}^N - B_{l}^N[c^N]\, c_l^N} \\
    &= l \, A_{l-1}^N[c^N] \, \frac{y_{l-1} - y_l}{l-1} - \underbrace{\bra*{ l \, B_l^N[c^N] - A_l^N[c^N]}}_{\geq (l-1) \, B_l^N[c^N]} \frac{y_l - y_{l+1}}{l} \\
    &\leq \frac{l-1}{l} \, B_l^N[c^N] \, \pra*{ \frac{l^2}{(l-1)^2} \, \frac{A_{l-1}^N[c^N]}{B_l^N[c^N]} \, \bra*{ y_{l-1} - y_l } - \bra*{ y_l - y_{l+1}}} \\
    &< K_\varepsilon \frac{l-1}{l} B_l^N[c^N] \, \pra*{\frac{l^2}{(l-1)^2} \frac{A_{l-1}^N[c^N]}{B_l^N[c^N]} (\lambda_{l-1} - \lambda_l) - (\lambda_l - \lambda_{l+1}) } \\
	&\le K_\varepsilon \frac{l-1}{l} B_l^N[c^N] \, \pra*{ \frac{l^2}{(l-1)^2} \frac{\phi K(1,l-1)}{K(l,0)} (\lambda_{l-1} - \lambda_l) - (\lambda_l - \lambda_{l+1}) } \\
    &\le K_\eps \, \frac{l-1}{l} \,  B_l^N[c^N] \, \pra[\big]{ \nu_l \, \bra*{ \lambda_{l-1} - \lambda_l } - \bra*{ \lambda_l - \lambda_{l+1}}} \leq 0 \ .
  \end{align*}
  In the last estimates, the estimate~\eqref{e:ABphi:p0} was applied, the choice~\eqref{e:choice:ym} of $y_l$ was used and also that $(\lambda_k)_{k\geq 0}$ satisfies~\eqref{e:def:lambda} and is moreover monotone. Hence, $\dot y_l < 0$, which is contradiction and proves~\eqref{e:HN}. The result follows from letting $N\to \infty$ using the strong convergence of the truncation $c^N \to c$ in $C([0,T];\cX)$ of Proposition~\ref{prop:strong:approx}.
\end{proof}
The invariance principle from Theorem~\ref{thm:longtime:weak} implies that $c(t) \xrightharpoonup{*} c^\varrho$. This information combined with the following proposition yields the strong convergence in the case $\varrho < \varrho_c$.
\begin{prop}\label{prop:weak_to_strong}
  Suppose that Assumption~\ref{ass:BDA} holds.  Then any solution $(c(t))_{t\geq 0}$ to~\eqref{e:master} with $c(t) \xrightharpoonup{*} c^\varrho$ as $t \to \infty$ for some $\varrho < \varrho_c$ satisfies $c(t) \to c^\varrho$ strongly in $\cX$. 
\end{prop}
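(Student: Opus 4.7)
The plan is to invoke the tightness estimate of Proposition~\ref{prop:tight} on a suitably time-shifted trajectory $c(T_0+\cdot)$ to obtain uniform exponential tightness of the first moment, and then upgrade the given weak$^*$ convergence to strong convergence in $\cX$ via Proposition~\ref{prop:X:properties}(2). Write $\varrho_0 = \sum_k k\,\bar c_k$ for the conserved initial mass, so $c(t) \in \cP^{\varrho_0}$ for all $t$, and Fatou on components gives $\varrho \leq \varrho_0$. Since $\varrho < \varrho_c$, the fugacity $\phi(\varrho) < \phi_c$, and I fix some $\phi \in (\phi(\varrho), \phi_c)$ to play the role of the parameter in~\eqref{e:ass:ABphi}.

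The decisive step is to verify~\eqref{e:ass:ABphi} for the shifted solution. At the equilibrium, Lemma~\ref{lem:DBC:AB} gives $A_l[\omega^\varrho]/B_{l+1}[\omega^\varrho] = \phi(\varrho)\, K(1,l)/K(l+1,0)$, strictly below the threshold $\phi\, K(1,l)/K(l+1,0)$ with a uniform positive gap, and~\eqref{e:ass:K3} together with Lemma~\ref{lem:DBC:AB} yields $A_l[\omega^\varrho]/B_l[\omega^\varrho] \to \phi(\varrho)/\phi_c < 1$, so both inequalities in~\eqref{e:ass:ABphi} hold strictly at $c = \omega^\varrho$ for $l \geq l_0$ with some $l_0$. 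To transfer these bounds to the trajectory I would use that $A_l$ is weak$^*$ continuous on $\cB^{\varrho_0}$ by~\eqref{e:ass:K4} (sublinearity of $(d_k)$) and Proposition~\ref{prop:X:properties}, so $A_l[c(t)] \to A_l[\omega^\varrho]$ for each $l$, together with the Fatou bound $\liminf_{t \to \infty} B_l[c(t)] \geq B_l[\omega^\varrho]$ and the two-sided scaling $B_l[c] \asymp b_l$ from~\eqref{e:ass:K4}. Recasting the target inequality using~\eqref{e:ass:BDA} as
\[
  \sum_{k \geq 1} K(l+1, k-1)\bigl(\beta_k\, e_k(t) - \phi\, e_{k-1}(t)\bigr) < \bigl(\phi - \phi(\varrho)\bigr) B_{l+1}[\omega^\varrho],
\]
with $\beta_k = K(k,0)/K(1,k-1)$ and $e_k(t) = c_k(t) - \omega_k^\varrho$, I would split the sum at some large $M$: the head $k \leq M$ vanishes as $t \to \infty$ by pointwise convergence $e_k(t) \to 0$ and the bound $K(l+1, k-1)/b_{l+1} \leq C_K\, k$ from~\eqref{e:ass:K4}, while the tail $k > M$ is controlled by the exponential decay of $\omega^\varrho$ (consequence of $\phi(\varrho) < \phi_c$) combined with the slow variation $K(l+1, k)/K(l+1, k-1) \to 1$ from~\eqref{e:ass:K3}.

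With~\eqref{e:ass:ABphi} in hand for $t \geq T_0$ and $l \geq l_0$, Proposition~\ref{prop:tight} applied to $c(T_0 + \cdot)$ with $\lambda_l = r^l$ for some $r \in (\phi/\phi_c, 1)$ — which satisfies~\eqref{e:def:lambda} for $l$ large since $\nu_l \to \phi/\phi_c$ — yields $\sup_{t \geq T_0} \sum_{k \geq l} k\, c_k(t) \leq H(T_0)\, r^l$, a uniform exponential tail bound. Combined with $c_k(t) \to \omega_k^\varrho$, dominated convergence forces $\sum_k k\, c_k(t) \to \varrho$, so conservation of the first moment gives $\varrho_0 = \varrho$ and $\norm{c(t)}_\cX \to \norm{\omega^\varrho}_\cX$; Proposition~\ref{prop:X:properties}(2) then delivers strong convergence in $\cX$. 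The hard step will be verifying~\eqref{e:ass:ABphi}: the crude bound $\sum_k K(l+1, k)\,\abs{e_k(t)} \leq C_K\, b_{l+1} \sum_k (k+1)\,\abs{e_k(t)}$ is circular, since the right-hand side is $\norm{c(t)-\omega^\varrho}_\cX$ — precisely what we are trying to show vanishes — so the circularity must be broken using the specific kernel structure encoded in~\eqref{e:ass:K3} and~\eqref{e:ass:K4} together with the geometric decay of $\omega^\varrho$.
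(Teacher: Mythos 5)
Your overall architecture (verify \eqref{e:ass:ABphi} for large times, feed it into Proposition~\ref{prop:tight}, then upgrade weak$^*$ to strong convergence via Proposition~\ref{prop:X:properties}) is the same as the paper's, but both steps you leave open are exactly where the work lies, and your sketches do not close them. In Step~1, after your (correct) recasting via \eqref{e:ass:BDA}, the dangerous term is $\sum_{k>M}K(l+1,k-1)\,\beta_k\,c_k(t)$: the geometric decay of $\omega^\varrho$ and the slow variation \eqref{e:ass:K3} only control the $\omega$-part of $e_k$, not the tail of $c(t)$, of which you only know a bounded first moment. Bounding this term by $C_K\,b_{l+1}\sum_{k>M}k\,c_k(t)$ is the circularity you yourself flag, while the alternative bound $C_K\,d_{l+1}\sum_{k>M}a_{k-1}\,c_k(t)$ carries the prefactor $d_{l+1}$, which \eqref{e:ass:K4} does not allow you to compare with $b_{l+1}$, the natural lower bound for your normalization $B_{l+1}[\omega^\varrho]$; so the $a/d$-type upper bound of the kernel is mismatched with the $b$-type denominator you chose. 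The paper's proof avoids comparing $A$ with $B$ directly: it establishes the two uniform-in-$l$ ratio bounds $A_{l-1}[c(t)]\leq(1+\eps)A_{l-1}[\omega^\varrho]$ and $B_l[\omega^\varrho]\leq(1+\eps)B_l[c(t)]$ (estimate \eqref{e:AB:weak_comp}), where in each ratio the two-sided bounds of \eqref{e:ass:K4} make the $l$-dependent factor ($a_{l-1}$ resp.\ $b_l$) cancel, so the tail over $k>M$ is handled by sublinearity of $(d_k)$ together with the conserved mass alone (resp.\ by the summable tail of the fixed state $\omega^\varrho$); \eqref{e:ass:ABphi} then follows by multiplying these ratios with the equilibrium ratio of Lemma~\ref{lem:DBC:AB} and using \eqref{e:ass:Kc}, \eqref{e:ass:K3}. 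Note also that pointwise-in-$l$ facts (weak$^*$ continuity of $A_l$, a Fatou inequality for $B_l$) cannot by themselves deliver the uniformity in $l\geq l_0$ and $t\geq t_0$ that \eqref{e:ass:ABphi} demands.

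In Step~2 the choice $\lambda_l=r^l$ does not work for general data. Proposition~\ref{prop:tight} only asserts that $H$ is nonincreasing, so your conclusion reads $\sum_{k\geq l}k\,c_k(t)\leq H(T_0)\,r^l$ with $H(T_0)=\max\{\sup_{l\geq l_0+1} x_l(T_0)/r^l,\ \varrho_0/r^{l_0}\}$, and this is finite only if $c(T_0)$ already has a geometrically decaying tail. An arbitrary element of $\cP^{\varrho_0}$ only satisfies $x_l\to 0$, possibly arbitrarily slowly, and no smoothing result in the paper upgrades this at positive times, so $H(T_0)=\infty$ is possible and your ``uniform exponential tail bound'' is vacuous. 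The paper instead adapts the weights to the data: since $\nu_l\to\phi/\phi_c<1$, the admissible set $S_\nu$ contains sequences vanishing at infinity, and \cite[Lemma 4]{BC88} yields $\hat\lambda\in S_\nu$ with $\hat\lambda_l\geq x_l(t_0)$ and $\hat\lambda_l\to 0$, whence $H(t_0)<\infty$ and Proposition~\ref{prop:tight} gives a uniform (vanishing, though not exponential) tail bound, which is all that is needed. Your final step (uniform tail bound plus componentwise convergence gives convergence of the first moment, hence of $\norm{c(t)}$, hence strong convergence by Proposition~\ref{prop:X:properties}) is fine, but the two gaps above are genuine and sit at the heart of the argument.
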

\begin{proof}
  The proof concludes in two steps. In the first step, the weak convergence $c(t) \xrightharpoonup{*} c^\varrho$ for some $\varrho < \varrho_c$ implies that there exists $t_0$ large enough such that~\eqref{e:ass:ABphi} hold for all $t\geq t_0$. This ensures that Proposition~\ref{prop:tight} can be applied in a second step.
       
  Now, suppose $c(t) \xrightharpoonup{*} \omega^\varrho$ as $\to \infty$ with $\varrho<\varrho_c$.
  
  \textbf{Step~1.} Let $\eps>0$. It readily follows from the identity~\eqref{e:DBC:AB:omega}, \eqref{e:ass:Kc}, and~\eqref{e:ass:K3} that there exists $l_1=l_1(\eps)$ such that
  \begin{align}\label{e:AB:omega_comp1a}
    \frac{A_{l-1}[\omega^\varrho]}{B_l[\omega^\varrho]} & = \frac{\phi(\varrho) K(1,l-1)}{K(l,1)} = \frac{\phi(\varrho) K(1,l-1)}{K(l,0)} \frac{K(l,0)}{K(l,1)} \notag \\
    & \le (1+\eps)\frac{\phi(\varrho) K(1,l-1)}{K(l,0)} \quad\text{ for all }\quad l\ge l_1(\eps) 
  \end{align}
  and
  \begin{equation}\label{e:AB:omega_comp1b}
    \frac{K(1,l-1)}{K(l,0)} \le \frac{1+\eps}{\phi_c}\quad\text{ for all }\quad l\ge l_1(\eps) \ .
  \end{equation}
  Another consequence of~\eqref{e:ass:K3} is that there is $l_2=l_2(\eps)$ such that, for all $t\ge 0$, 
  \begin{align}
    \frac{A_l[c(t)]}{A_{l-1}[c(t)]} & = \frac{\sum_{k\ge 1} \frac{K(k,l)}{K(k,l-1)} K(k,l-1) c_k(t)}{\sum_{k\ge 1} K(k,l-1) c_k(t)} \le \sup_{k\ge 1} \frac{K(k,l)}{K(k,l-1)} \notag \\
    & \le 1+\eps \qquad\text{ for all }\qquad l\ge l_2(\eps)\,, \ t\ge 0.  \label{e:AB:omega_comp2}
  \end{align}
  Now, assume that the following bound is already established:
  \begin{equation}
    \max\left\{ \frac{A_{l-1}[c(t)]}{A_{l-1}[\omega^\varrho]} , \frac{B_l[\omega^\varrho]}{B_l[c(t)]} \right\} \le 1+ \eps\,, \qquad l\ge 1\,, \ t\ge t_0(\eps)\,. \label{e:AB:weak_comp}
  \end{equation}
  From~\eqref{e:AB:weak_comp}, the claim~\eqref{e:ass:ABphi} is deduced. Indeed, by a combination of the estimates~\eqref{e:AB:omega_comp1a} and \eqref{e:AB:weak_comp} for any $t\ge t_0(\eps)$ and $l\ge l_0(\eps)=\max\{l_1(\eps), l_2(\eps)\}$ follows
  \begin{align}
    \frac{A_{l-1}[c(t)]}{B_l[c(t)]} & = \frac{A_{l-1}[c(t)]}{A_{l-1}[\omega^\varrho]} \frac{B_l[\omega^\varrho]}{B_l[c(t)]} \frac{A_{l-1}[\omega^\varrho]}{B_l[\omega^\varrho]} \notag \\
    & \le (1+\eps)^3 \frac{\phi(\varrho) K(1,l-1)}{K(l,0)} = \frac{\tilde{\phi} K(1,l-1)}{(1+\eps)^2 K(l,0)} \,, \label{e:AB:weak_comp_b}
  \end{align}
  with $\tilde{\phi}=(1+\eps)^5 \phi(\varrho)$. By choosing$\eps>0$ small enough so that $\tilde{\phi}<\phi_c$, the second condition in~\eqref{e:ass:ABphi} holds, from which the first follows as
  \begin{align*}
    \frac{A_l[c(t)]}{B_l[c(t)]}  = \frac{A_l[c(t)]}{A_{l-1}[c(t)]} \frac{A_{l-1}[c(t)]}{B_l[c(t)]} \le \frac{\tilde{\phi}}{(1+\eps)K(l,0)} \le \frac{\tilde{\phi}}{\phi_c}< 1
  \end{align*}
  by~\eqref{e:AB:omega_comp1b}, \eqref{e:AB:omega_comp2} and \eqref{e:AB:weak_comp_b}.

  Hence, $(c_k(t))_{k\geq 0,t \geq 0}$ satisfies~\eqref{e:ass:ABphi} for some $\tilde \phi < \phi_c$, all $t\geq t_0$ and $l\geq l_0$, which finishes step 1, once~\eqref{e:AB:weak_comp} is proven.
  
  \textbf{Proof of~\eqref{e:AB:weak_comp}:} To do so, by the weak$^*$ convergence and strict positivity of $\omega^\varrho$, there exists for any $\eps>0$ and any $M\geq 1$ a $t_1=t_1(\eps,M)$ such that $c_l(t) \leq \bra*{1 + \frac{\eps}{2}}\omega^\varrho_l$ for all $l=0,\dots , M-1$ and all $t\geq t_1$. Then, the Assumption~\eqref{e:ass:K4} implies
  \begin{align*}
    \frac{A_{l-1}[c(t)]}{A_{l-1}[\omega^\varrho]} &= \frac{\sum_{k=1}^{M-1} K(k,l-1)\, c_k(t) + \sum_{k\geq M} K(k,l-1) \, c_k(t)}{\sum_{k\geq 1} K(k,l-1) \, \omega^\varrho_k} \\
    &\leq 1+ \frac{\eps}{2} + \frac{C_K \, a_{l-1} \sum_{k\geq M}  d_k \, c_k(t)}{C_K^{-1} \, a_{l-1} \sum_{k\geq 0} \omega^\varrho_k} \leq 1 + \frac{\eps}{2} + C_K^2 \frac{d_M}{M} \varrho \ . 
  \end{align*}
  Since $(d_k)_{k\geq 0}$ is sublinear, there exists for any $\eps>0$ a constant $M$ large enough such that $C_K^2 \varrho \frac{d_M}{M} \leq \eps/2$ concluding the first estimate of~\eqref{e:AB:weak_comp}. The second one is very similar. Again, there exists for any $\eps>0$ and $N\geq 1$ a $t_2=t_2(\eps,N)$ such that $\omega_l^\varrho\leq \bra*{1 + \frac{\eps}{2}} c_l(t)$ for all $l=0,\dots, N-1$ and all $t\geq t_2$. Then, the second part of Assumption~\eqref{e:ass:K4} implies
  \begin{align*}
    \frac{B_l[\omega^\varrho]}{B_l[c(t)]} &= \frac{\sum_{k=0}^{N-1} K(l,k) \, \omega_k^\varrho + \sum_{k\geq N} K(l,k) \, \omega_k^\varrho}{\sum_{k\geq 0} K(l,k) \, c_k(t)} \\
    &\leq 1 + \frac{\eps}{2} + \frac{ C_k \, b_l \sum_{k\geq N}(k+1)\,\omega_k^\varrho}{C_K^{-1} \, b_l \sum_{k\geq 0} c_k(t)} \leq 1 + \frac{\eps}{2} + C_K^2 \sum_{k\geq N} (k+1)\, \omega_k^\varrho \leq 1+\eps 
  \end{align*}
  for $N$ sufficiently large, such that $\sum_{k\geq N} (k+1) \omega_k^\varrho\leq \frac{\eps}{2}$. This proves estimate~\eqref{e:AB:weak_comp} completely once $t_0$ is set to $\max\set{t_1(\eps,M),t_2(\eps,N)}$.
  
  \textbf{Step 2:} The second step of the proof consists in applying the tightness estimate of Proposition~\ref{prop:tight}. Let $\gamma_l = 1$ for $0\leq l < l_0$ and
  \[
    \gamma_l = \nu_l \, \gamma_{l-1} \qquad\text{for } l\geq l_0 \ ,
  \]
  where $(\nu_l)_{l\geq 0}$ is defined in~\eqref{e:def:lambda}. The sequence $(\gamma_l)_{l\geq 0}$ satisfies the iteration
  \[
    \frac{\gamma_l}{\gamma_{l-1}} \leq \frac{\omega_l(\phi)}{\omega_{l-1}(\phi)} \frac{l^2}{(l-1)^2} \ , \qquad\text{and hence}\qquad l\, \gamma_l \leq l^{3} \omega_l(\phi) \ . 
  \]
  Since $\omega(\tilde \phi) \in \cX$ for all $\tilde \phi<\phi_c$, $\omega(\phi)$ has arbitrary high moments for any $\phi< \phi_c$ and hence $\gamma \in \cX$. Therefore, the sequence $\eta_l = \sum_{k\geq l} k \, \gamma_k$ satisfies $\eta_l \to 0$ as $l \to \infty$. So, $\eta$ is an element of the set 
  \[
    S_\nu = \set[\big]{ \lambda = (\lambda_l) : \lambda_l \geq \lambda_{l+1} \geq 0 \text{ for all $l\geq 0$ and } \lambda_l - \lambda_{l+1} \geq \nu_l\bra*{\lambda_{l-1} -\lambda_l}} \ . 
  \]
  The cumulative distribution of the initial data $\sigma_l = \sum_{k\geq l} l \, \overbar c_l$ satisfies $\sigma_l \to 0$ as $l\to \infty$. Then, according to \cite[Lemma 4]{BC88} there exists $\hat\lambda_l$ such that $\hat\lambda_l \geq \sigma_l$ for all $l$ and $\hat\lambda_l \to 0$ as $l\to \infty$. Hence, thanks to the first step, Proposition~\ref{prop:tight} can be applied to $(c(t))_{t\geq t_0}$ to conclude for any $l\geq l_0$ and all $t\geq t_0$ that
  \[
    \sum_{k\geq l} k c_k(t) \leq \hat\lambda_l \max\set*{ 1, \frac{\varrho_0}{\lambda_{l_0}}} \ .
  \]
  This shows that $\bra*{c(t)}_{t\geq 0}$ is relatively compact in $\cX$ and $c(t) \to c^\varrho$ strongly in $\cX$ as $t\to \infty$.
\end{proof}

\subsection{Proof of Theorem~\ref{thm:longtime}}\label{s:longtime}

The assertion for the case $\varrho_0 > \varrho_c$ is exactly Theorem~\ref{thm:longtime:weak}. The statement on the convergence of the free energy is a consequence of Proposition~\ref{prop:weak:cont:free:energy} and the relation~\eqref{e:RelEnt:F} implying
\begin{align*}
  \cF[c(t)] &= \cH[c(t)| \omega^{\varrho_c}] - \log Z(\phi_c) + \varrho_0 \log \phi_c \\
  &\to -\log Z(\phi_c) + \varrho_c \log \phi_c + \bra*{ \varrho_0 - \varrho_c} \log \phi_c = \cF[\omega^{\varrho_c}] + \bra*{ \varrho_0 - \varrho_c} \log \phi_c \ . 
\end{align*}
The case $\varrho_0 < \varrho_s$ follows by combining Theorem~\ref{thm:longtime:weak} with Proposition~\ref{prop:weak_to_strong}, where the statement of the free energy is now an immediate consequence of Proposition~\ref{prop:weak:cont:free:energy}.
Finally, the case $\varrho_0 = \varrho_s$ is again a consequence of Theorem~\ref{thm:longtime:weak} combined with Proposition~\ref{prop:weak_to_strong} implying that $\varrho = \varrho_0 = \varrho_c$, which by density conservation gives the strong convergence. Again the statement on the free energy is an immediate consequence of Proposition~\ref{prop:weak:cont:free:energy}. \hfill \qedsymbol

\bigskip

\appendix

\section{Lemma of de la Vallée-Poussin\texorpdfstring{ in $\cP^\varrho$}{}}\label{s:ValleePoussin}

\begin{lem}\label{lem:ValleePoussin}
  For $\varrho>0$ and $\cC = \set{\bar c^i}_{i\in I} \subset \cP^\varrho$ the following are equivalent:
  \begin{enumerate}
   \item The family $\cC$ is uniform integrable, that is
   \begin{equation}\label{e:def:uniform_integrable}
     \lim_{l\to \infty} \sup_{\bar c\in \cC} \sum_{k\geq l} (1+k) \, \bar c_k \to 0 \ .
   \end{equation}
   \item There exists a positive increasing superlinear sequence $(g_k)_{k\geq 0}$ 
  such that 
  \begin{equation}\label{e:ValleePoussin:bound}
    \sup_{\bar c \in \cC}\sum_{k\geq 0} g_k \, \bar c_k < \infty \ . 
  \end{equation}
   Moreover, $(g_k)_{k\geq 0}$ can be chosen to satisfy the bound
  \begin{equation}\label{e:ValleePoussin:condition}
   0< (k+1)\,(g_{k+1} - g_k) \leq 2 \, g_k \qquad\text{for } k\geq 0 
  \end{equation}
    \end{enumerate}
\end{lem}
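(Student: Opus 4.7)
The easy direction $(2) \Rightarrow (1)$ reduces to a tail estimate: setting $M := \sup_{\bar c \in \cC} \sum_{k \geq 0} g_k\, \bar c_k < \infty$ and using the superlinearity $g_k/k \to \infty$, for any $\eps > 0$ one picks $L = L(\eps)$ with $g_k \geq (1+k)/\eps$ for all $k \geq L$, whence $\sum_{k \geq L}(1+k)\, \bar c_k \leq \eps M$ uniformly over $\bar c \in \cC$.

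For the converse $(1) \Rightarrow (2)$ the plan is as follows. Set $e_L := \sup_{\bar c \in \cC} \sum_{k \geq L}(1+k)\, \bar c_k$, a nonincreasing sequence tending to zero by~\eqref{e:def:uniform_integrable}. Extract a strictly increasing subsequence $L_0 = 0 < L_1 < L_2 < \cdots$ with $L_n \geq 2^n$ and $e_{L_n} \leq 2^{-n}$ for $n \geq 1$; both conditions can be enforced simultaneously. The key reformulation is to write $g_k = (1+k)\psi_k$: then~\eqref{e:ValleePoussin:condition} is equivalent to $\psi_k$ being positive, nondecreasing, and satisfying $\psi_{k+1}/\psi_k \leq (k+3)/(k+2)$, while the superlinearity of $g$ amounts to $\psi_k \to \infty$.

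Construct $\psi$ layer by layer. On each $[L_n, L_{n+1})$ define
\[
  \psi_k := (n+1) \bra*{\frac{k+2}{L_n+2}}^{\alpha_n}, \qquad \alpha_n := \frac{\log((n+2)/(n+1))}{\log((L_{n+1}+2)/(L_n+2))} \in (0,1],
\]
where $\alpha_n \leq 1$ is secured by the spacing $L_{n+1}/L_n \geq 2$. This $\psi$ is continuous across layers with $\psi_{L_n} = n+1 \to \infty$, and Bernoulli's inequality yields $\psi_{k+1}/\psi_k = ((k+3)/(k+2))^{\alpha_n} \leq 1 + \alpha_n/(k+2) \leq 1 + 1/(k+2)$, which is the required growth bound; strictness of $g_{k+1} > g_k$ follows from $\alpha_n > 0$.

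The uniform bound~\eqref{e:ValleePoussin:bound} is then verified by a layer decomposition: since $\psi_k \leq n+2$ on $[L_n, L_{n+1})$,
\[
  \sum_{k \geq 0} g_k \, \bar c_k \leq \sum_{n \geq 0}(n+2) \sum_{k=L_n}^{L_{n+1}-1}(1+k)\, \bar c_k \leq 2(1+\varrho) + \sum_{n \geq 1}(n+2)\, 2^{-n} < \infty,
\]
which is finite and independent of $\bar c \in \cC$. The main delicate point is the rigidity of~\eqref{e:ValleePoussin:condition}: it caps each multiplicative step of $\psi$ at $1+1/(k+2)$, so a single unit jump of $\psi$ at any threshold is ruled out. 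The remedy is to spread each unit of increase across an entire geometric layer $[L_n, L_{n+1})$, which is feasible precisely because the imposed spacing $L_n \geq 2^n$ makes the maximal attainable layer factor $(L_{n+1}+2)/(L_n+2) \geq 2$ exceed the modest target factor $(n+2)/(n+1)$.
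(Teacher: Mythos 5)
Your argument is correct in substance, and for the direction (1)$\Rightarrow$(2) it takes a genuinely different route from the paper. The paper follows the Cañizo/Dellacherie--Meyer template: it first introduces the piecewise constant weight $\varphi_k=n+1$ on $[\ell_n,\ell_{n+1})$, with the $\ell_n$ chosen so that $\sup_{i\in I}\sum_{l\ge \ell_n}(l+1)\,\bar c^i_l\le n^{-2}$, obtains the uniform bound by the resummation $\sum_{k}\varphi_k(k+1)\bar c^i_k=\sum_n\sum_{l\ge\ell_n}(l+1)\bar c^i_l$, and only afterwards regularizes $\varphi$ into a piecewise linear $\Phi$ with increments $\Phi_{k+1}-\Phi_k\le 1/(k+1)$, so that $g_k=(k+1)\Phi_k+1$ fulfills \eqref{e:ValleePoussin:condition}. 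You instead work multiplicatively from the start: writing $g_k=(1+k)\psi_k$, noting that positivity and monotonicity of $\psi$ together with $\psi_{k+1}/\psi_k\le (k+3)/(k+2)$ suffice for \eqref{e:ValleePoussin:condition} (this is sufficiency rather than the claimed equivalence, but sufficiency is all you use), and interpolating $\psi$ by a power law across geometric layers, with Bernoulli's inequality replacing the paper's slope bookkeeping and $\sum_{n\ge1}(n+2)2^{-n}$ replacing the paper's $\sum_n n^{-2}$. Both constructions give a bound uniform over $\cC$; yours is self-contained and avoids the two-stage define-then-regularize structure, while the paper's stays closer to the standard de la Vallée-Poussin argument. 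The direction (2)$\Rightarrow$(1) is the same superlinearity tail estimate in both.

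One step needs repair. The conditions you actually impose on the extraction, namely $L_n\ge 2^n$, strict monotonicity and $e_{L_n}\le 2^{-n}$, do not imply the spacing $L_{n+1}/L_n\ge 2$ that you invoke (take $L_1=100$, $L_2=101$), and without a lower bound on the layer ratio the exponent $\alpha_n$ can exceed $1$, in which case $\psi_{k+1}/\psi_k=\bra*{(k+3)/(k+2)}^{\alpha_n}>(k+3)/(k+2)$ and \eqref{e:ValleePoussin:condition} fails. The fix is immediate: since $e_L$ decreases to zero, choose the $L_n$ inductively, for instance $L_0=0$ and $L_{n+1}=\max\set[\big]{2L_n+2,\ \min\set{L : e_L\le 2^{-(n+1)}}}$; then $(L_{n+1}+2)/(L_n+2)\ge 2\ge (n+2)/(n+1)$ for every $n\ge 0$, hence $\alpha_n\in(0,1]$, and the rest of your argument (continuity of $\psi$ across layers, $\psi_{L_n}=n+1\to\infty$, and the layerwise estimate $2(1+\varrho)+\sum_{n\ge1}(n+2)2^{-n}$) goes through unchanged.
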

\begin{proof}
  The implication (2)$\Rightarrow$(1) is a straightforward consequence of the superlinear growth of $(g_k)_{k\geq 0}$ implying for any $\bar c\in\cC$
  \[
     \sum_{k\geq l} (1+k) \, \bar c_k \leq \frac{1+l}{g_l} \sum_{k\geq l} g_k \bar c_k \leq \frac{1+l}{g_l} \sum_{k\geq 0}  g_k \bar c_k \ .
  \]
  Taking the $\sup$ over $\cC$ and letting $l\to \infty$ proves the implication.

  For the proof of (1)$\Rightarrow$(2) the construction from~\cite{CanizoDeLaValleePoussin2006} based on \cite[Theorem 22]{dellacherie1978probabilities} is modified to satisfy the condition~\eqref{e:ValleePoussin:condition}. The tail distribution of $\bar c^i$ for $i\in I$ is defined by 
  \begin{equation*}
    C_k^i = \sum_{l\geq k} (l+1)\, \bar c_l^i \qquad\text{ for } k\geq 0\ .
  \end{equation*}
  Two auxiliary increasing sequences $(a_n)_{n\geq 1}$ and $(\ell_n)_{n\geq 0}$ are defined by 
  \begin{equation*}
    a_n = \inf\set*{ k\geq 0 \,\middle|\, \sup_{i\in I} C_k^i \leq \frac{1}{n^2}} \quad\text{ for } n \geq 1 \ ,
  \end{equation*}
  and inductively $\ell_{n+1} = \max\set*{ \ell_n+1 , a_{n+1}+1}$ starting with $\ell_0 = 0$.
  Then by construction, it holds $\sup_{i \in I} C_{\ell_n}^i \leq \frac{1}{n^2}$ for $n\geq 1$ and $C_0 = 1+\varrho$.
  One more auxiliary sequence $(\varphi_k)_{k\geq 0}$ is given by 
  \begin{equation*}
    \varphi_k = n+1 \qquad\text{for } k\in [\ell_n, \ell_{n+1}) \ .
  \end{equation*}
  By construction $\varphi_k\to \infty$ as $k\to \infty$, since $\varphi_k \geq n+1$ for $k\geq \ell_n$. 
  Then, the candidate for $g_k$ is the sequence $\varphi_k \, (k+1)$. Indeed, it holds for any $i\in I$
  \begin{align}
    \sum_{k\geq 0} \varphi_k \, (k+1) \, \bar c_k^i &= \sum_{n\geq 0} (n+1) \sum_{k=\ell_n}^{\ell_{n+1}-1} (k+1) \, \bar c_k^i = \sum_{n\geq 0} \sum_{k\geq \ell_n} (k+1)\, \bar c_k^i \notag \\
    &= \sum_{n\geq 0} C_{\ell_n}^i \leq 1+\varrho + \sum_{n\geq 1} \frac{1}{n^2} < \infty \ . \label{e:ValleePoussin:p0}
  \end{align}
  To verify the condition~\eqref{e:ValleePoussin:condition}, it is necessary to regularize the sequence $(\varphi_k)_{k\geq 1}$ by defining the following interpolation: $d_0=1$ and $\Phi_0=0$ and inductively for $n\geq 0$
  \begin{align*}
    d_{n+1} &= \min\set*{ d_n , \frac{n+1 - \Phi_{\ell_n}}{\ell_{n+1} - \ell_n}, \frac{1}{\ell_{n+ 1}}} \ ;\\
    \Phi_k &= \Phi_{\ell_n} + d_n \bra*{k - \ell_n} \qquad\text{for } k\in [\ell_n , \ell_{n+1})  \ .
  \end{align*}
  The construction ensures that $\Phi_k$ is an increasing sequence with $\Phi_k \leq \varphi_k$. Hence, $g_k = \Phi_k \, (k+1) + 1$ still satisfies~\eqref{e:ValleePoussin:bound} with an additional constant $1+\varrho$ on the right hand side. Finally to show~\eqref{e:ValleePoussin:condition}, note that for $k\in [\ell_n, \ell_{n+1})$ it holds
  \begin{equation*}
    \Phi_{k+1} - \Phi_k \leq d_{n+1} \leq \frac{1}{\ell_{n+1}} \leq \frac{1}{k+1} \ .
  \end{equation*}
  Hence, the estimate~\eqref{e:ValleePoussin:bound} follows from
  \begin{equation*}
    (k+1)\,\bra*{g_{k+1} -g_k} = (k+1) \, \bra[\big]{ (k+1) \Phi_{k+1} - k \Phi_k} \leq (k+1) \, \bra*{ 1 + \Phi_k} \leq 2 \, g_k \ , 
  \end{equation*}
  where the lower bound $\Phi_k \geq 1$ was used in the last step.
\end{proof}

\subsection*{Acknowledgement}

The author enjoyed fruitful discussions with Stefan Grosskinsky on the derivation of the system as mean-field limit as well as with Stefan Luckhaus, Barbara Niethammer, and Juan Velázquez on the Becker-Döring system and related topics. The author thanks Emre Esenturk for bringing many references to his attention. Moreover, the author is incredibly grateful for the very constructive and detailed referee reports pointing out missing steps and misprints in an earlier version. 
The author acknowledges support by the Department of Mathematics I at RWTH Aachen University, where part of this work originates.

\bibliographystyle{alphaabbr}
\bibliography{bib}
\end{document}